\renewcommand\footnotetextcopyrightpermission[1]{} %
\providecommand\@ACM@color@native{true}
	\theoremstyle{acmdefinition}
	\newtheorem{remark}[theorem]{Remark}
	\newtheorem{assumption}[theorem]{Assumption}
\tikzset{
  commutative diagrams/.cd,
  arrow style = tikz,
  diagrams    = {>=stealth},
  row sep     = large, 
  column sep  = huge
}
\tikzset{
    cong/.style={draw=none,edge node={node [sloped, allow upside down, auto=false]{$\cong$}}}
, iso/.style={draw=none,every to/.append style={edge node={node [sloped, allow upside down, auto=false]{$\cong$}}}}
}
\setlist[itemize]{label={\small$\bullet$}}
\renewcommand{\by}[1]{\text{/$\mspace{-2mu}$/~#1}} 
\newcommand{\val}{\mathsf{v}}
\newcommand{\com}{\mathsf{c}}
\newcommand{\nats}{\mathbb{N}}
\newcommand{\mS}{\mu\Sigma}
\newcommand{\init}{\oname{init}}
\newcommand{\dl}{\chi}
\newcommand{\lcomp}{\mathbin{\raisebox{1pt}{\scalebox{.6}{$\LEFTcircle$}}}}
\newcommand{\fcomp}{\mathbin{\raisebox{1pt}{\scalebox{.6}{$\CIRCLE$}}}}%
\newcommand{\myparagraph}[1]{\medskip\noindent{\bfseries\sffamily #1.}}
\renewcommand{\comp}{\cdot}
\newcommand{\Fst}{\Pi_1} %
\newcommand{\Snd}{\Pi_2} %
\newcommand{\klstar}{\sharp}  			%
\newcommand{\istar}{\dagger}  	
\newcommand{\mmul}{\mu}  	
\newcommand{\mSv}{\mS_\val}
\newcommand{\mSc}{\mS_\com}
\newcommand{\Sigmas}{\Sigma^\star}
\newcommand{\ar}{\oname{ar}}
\newcommand{\lar}{\oname{ar_l}}
\newcommand{\sar}{\oname{ar_s}}
\newcommand{\klplus}{\boxplus}
\newsavebox{\@brx}
\newcommand{\llangle}[1][]{\savebox{\@brx}{\(\m@th{#1\langle}\)}%
  \mathopen{\copy\@brx\kern-0.5\wd\@brx\usebox{\@brx}}}
\newcommand{\rrangle}[1][]{\savebox{\@brx}{\(\m@th{#1\rangle}\)}%
  \mathclose{\copy\@brx\kern-0.5\wd\@brx\usebox{\@brx}}}
\newcommand{\app}{\,}
\newcommand{\fset}{{\mathbb{F}}}
\newcommand{\Nat}{{\mathbb{N}}}
\newcommand{\exend}{\hfill{\rotatebox[origin=c]{45}{$\Box$}}}
\let\dir@frac\frac
\newcommand{\inv@frac}[2]{\dir@frac{#2}{#1}}
\renewcommand{\frac}{\@ifstar{\inv@frac}{\dir@frac}}
\newcommand{\mon}{\bullet}
\newcommand{\monto}{\mathrel{-\mkern-1mu}\joinrel\mathrel{\bullet}} %
\newcommand{\Pt}{V}
\newcommand{\xCL}{\textbf{xCL}\xspace}  
\newcommand{\xTCL}{\textbf{xTCL}\xspace}  	
\newcommand{\Ty}{\mathsf{Ty}\xspace}
\newcommand{\arty}[2]{#1 \rightarrowtriangle #2}
\newcommand{\tcomp}{\textsf{app}}
\newcommand{\true}{\mathsf{true}}
\newcommand{\false}{\mathsf{false}}	
\newcommand{\bool}{\mathsf{bool}}	
\newcommand{\fpc}{\mathsf{fix}}	
\newcommand{\casec}{\mathsf{if}}
\newcommand{\ndet}{\oplus} %
\newcommand{\paral}{\parallel} %
\newcommand{\amb}{\mathsf{amb}}
\newcommand{\superimpose}[2]{%
  {\ooalign{$#1\@firstoftwo#2$\cr\hfil$#1\@secondoftwo#2$\hfil\cr}}
}
\newcommand{\paralv}{\mathrel{%
    \mathchoice{\PARV}{\PARV}{\scriptsize\PARV}{\tiny\PARV}
}}
\def\PARV{{%
    \setbox0\hbox{$\parallel$}%
    \rlap{\hbox to \wd0{\hss\rule[-.5ex]{.75em}{.12ex}\hss}}\box0
}}
\def\todo#1{} %
  \def\ensuretext#1{\ifmmode\text{#1}\else{#1}\fi}
  \def\todo{\@ifstar\@@todo\@todo}
    \long\def\@todo#1{\ensuretext{\ttfamily\color{ACMRed}TODO: #1}}
    \long\def\@@todo#1{{\hfill\\\noindent\ttfamily\color{ACMRed}TODO: #1\hfill\par}}
\renewcommand{\xto}[1]{\mathrel{\raisebox{-1pt}{$\;\xrightarrow{\raisebox{-1.5pt}[0pt][0pt]{ \ensuremath{ \scriptstyle{#1}}}}$}}}
\begin{document}\allowdisplaybreaks
\let\cedilla\c
\renewcommand{\c}{\colon}

\title{Big Steps in Higher-Order Mathematical Operational Semantics}

\author{Sergey Goncharov}
\orcid{0000-0001-6924-8766}             %
\affiliation{
  \department{School of Computer Science}              %
  \institution{University of Birmingham} %
  \city{Birmingham}
  \country{UK}                    %
}
\email{s.goncharov@bham.ac.uk}          %

\author{Pouya Partow}
\orcid{0000-0001-6924-8766}             %
\affiliation{
  \department{School of Computer Science}              %
  \institution{University of Birmingham} %
  \city{Birmingham}
  \country{UK}                    %
}
\email{p.partow@bham.ac.uk}          %

\author{Stelios Tsampas}
\orcid{0000-0001-8981-2328}             %
\affiliation{
  \institution{University of Southern Denmark (SDU)}            %
  \city{Odense}
  \country{Denmark}                    %
}
\email{stelios@imada.sdu.dk}          %

\begin{abstract}
\emph{Small-step} and \emph{big-step operational semantics} are two fundamental 
styles of structural operational semantics (SOS), extensively used in practice. The former
one is more fine-grained and is usually regarded as primitive, as it only
defines a one-step reduction relation between a given program and its direct
descendant under an ambient \emph{evaluation strategy}. The latter one implements, 
in a self-contained manner, such a strategy 
directly by relating a program to the net result of the evaluation process. The agreement between these two styles of semantics is one of the key
pillars in operational reasoning on programs; however,
such agreement is typically proven from scratch every time on a
case-by-case basis. A general, abstract mathematical argument behind this agreement is up till now missing. We cope with this issue within the framework of \emph{higher-order
mathematical operational semantics} by providing an abstract categorical notion
of big-step SOS, complementing the existing notion of abstract higher-order
GSOS. Moreover, we introduce a general construction for deriving the former from
the latter, and prove an abstract equivalence result between the two.
\end{abstract}

\begin{CCSXML}
	<ccs2012>
	   <concept>
		   <concept_id>10003752.10010124.10010131.10010134</concept_id>
		   <concept_desc>Theory of computation~Operational semantics</concept_desc>
		   <concept_significance>500</concept_significance>
		   </concept>
	   <concept>
		   <concept_id>10003752.10010124.10010131.10010137</concept_id>
		   <concept_desc>Theory of computation~Categorical semantics</concept_desc>
		   <concept_significance>500</concept_significance>
		   </concept>
	   <concept>
		   <concept_id>10003752.10010124.10010138.10011119</concept_id>
		   <concept_desc>Theory of computation~Abstraction</concept_desc>
		   <concept_significance>500</concept_significance>
		   </concept>
	 </ccs2012>
	\end{CCSXML}
	
	\ccsdesc[500]{Theory of computation~Operational semantics}
	\ccsdesc[500]{Theory of computation~Categorical semantics}
	\ccsdesc[500]{Theory of computation~Abstraction}

\keywords{Operational semantics, Functional semantics, Abstract higher-order GSOS, Coalgebra, Extended combinatory logic}

\maketitle

\section{Introduction}%

Operational semantics of programming languages comes in two major styles: 
the \emph{small-step} and the \emph{big-step}. In both cases
we deal with a rule-based specification of program behaviour, however, the 
respective rules operate with two principally different judgement formats. In paradigmatic 
cases, such as the (call-by-name) $\lambda$-calculus, the small-step judgements
have the form $t\to t'$, and the big-step judgements have the form $t\Dar v$. Here,~$t$ 
is a (closed) program,~$t'$ is its direct successor under the reduction relation of interest,
and $v$ is a final \emph{value}, to which~$t$ evaluates. The desired 
connection between these two judgements is expressed by the fundamental equivalence:
\begin{align}\label{eq:sem-eq}\tag{$\star$}
t\Dar v \iff t\to^\star v\land v\dar
\end{align}
where $\to^\star$ is the transitive-reflexive closure of $\to$ and $v\dar$ means 
that $v\to v'$ for no~$v'$. In the case of the call-by-name $\lambda$-calculus, the 
setup is particularly simple: there are precisely two small-step rules and two
big-step rules -- see \autoref{fig:cbn-lambda}. 
Even in this simple case, proving the equivalence~\eqref{eq:sem-eq} is non-trivial and requires some creativity.
  
\begin{figure}[t]
\begin{align*}%
\infer{(\lambda x.\, t)s\to t[s/x]}{} && 
\infer{ts\to t's}{t\to t'} &\qquad& 
\infer{\lambda x.\,t\Dar \lambda x.\,t}{} && 
\infer{ts\Dar v}{t\Dar\lambda x.\,t'\quad t'[s/x]\Dar v} 
\end{align*}%
\caption{Operational semantics of call-by-name $\lambda$-calculus (small-step and big-step).}
\label{fig:cbn-lambda}
\end{figure}

The equivalence~\eqref{eq:sem-eq} plays an important role in various settings where both small-step and big-step semantics are defined. These settings can vary widely, encompassing different evaluation strategies, language features, computational effects (such as partiality, nondeterminism, and state), and even extensions to quantitative semantics~\cite{LagoZorzi12,Laird16}. While in many standard cases the proof of~\eqref{eq:sem-eq} follows familiar and often routine patterns it can still become tedious and error-prone when applied to expressive, feature-rich languages. In the absence of a unified mathematical framework that abstracts these patterns, such proofs need to be re-established manually, which may obscure reuse and increase the risk of oversight, unless the proofs are fully formalized and machine-checked.

The main goal of our present work is to provide suitable abstractions for the notions of small-step and big-step operational semantics, enabling us to formulate and prove~\eqref{eq:sem-eq} at a high level of generality, particularly by parametrizing over suitable notions of \emph{syntax} and \emph{behaviour}.
To that end, we capitalize on recent advances in \emph{higher-order mathematical 
operational semantics}~\cite{GoncharovMiliusEtAl23}, which is a higher-order extension
of Turi and Plotkin's \cite{TuriPlotkin97} (first-order) mathematical operational semantics.
Thus, a general notion of (higher-order) small-step semantics in the form of an \emph{abstract HO-GSOS law} is already available from previous work~\cite{GoncharovMiliusEtAl23}.
However, motivated by~\eqref{eq:sem-eq}, in this paper, we develop a refinement
of this notion, called \emph{separated abstract HO-GSOS law}. 
Such a refinement is necessary because (unsurprisingly) an unrestricted small-step 
semantics need not generally correspond to a meaningful big-step semantics. Essentially this expressivity surcharge of small-step rule formats
has been previously acknowledged in the context of (failure of) congruence properties of
weak bisimilarity~\cite{Bloom95,Glabbeek11,TsampasWilliamsEtAl21,GoncharovMiliusEtAl22,UrbatTsampasEtAl23}. 
In fact, proving~\eqref{eq:sem-eq} first requires interpreting it meaningfully. The original notion of abstract HO-GSOS is too general for this purpose: it does not distinguish between values and non-values, nor does it generally support the definition of multi-step transitions~$\to^\star$. 
The separation requirement is introduced precisely to cater for this.
We then define an abstract 
notion of big-step semantics and establish an abstract formulation of equivalence~\eqref{eq:sem-eq}.
Yet, in order to prove~\eqref{eq:sem-eq}, restricting to separated abstract HO-GSOS is not enough,
which led us to a sufficient condition that we call \emph{strong separability}.
We elaborate on these matters by example in~\autoref{sec:back}. 

In our developments, we employ the versatile language of
\emph{category theory}, while our approach as a whole aligns closely with
\emph{functional semantics}~\cite{Danielsson12,Uustalu13},
as opposed to relational semantics. In functional semantics, we can view rules
such as those in \autoref{fig:cbn-lambda} as functional transformations of premises
to conclusions, and therefore, define the semantics of programs as certain fixpoints.
The equivalence to the familiar relational semantics is achieved by interpreting relations as nondeterministic functions
-- specifically, as effectful functions w.r.t. the powerset monad.

In summary, we contribute as follows.
\begin{itemize}
  \item We introduce the notion of separated abstract HO-GSOS law (\autoref{def:separated}) for modelling 
  small-step semantics, refining the existing abstract HO-GSOS laws~\cite{GoncharovMiliusEtAl23};
  \item We introduce and argue for the strong separation conditions (\autoref{def:strong-sep}),
  meant to guarantee that a given small-step semantics can have a big-step counterpart;
  \item We introduce an abstract notion of big-step semantics~(\autoref{def:big-step}); 
  \item We provide an abstract translation from small-step to big-step and establish \eqref{eq:sem-eq} under the strong separation assumption;
  \item We elaborate various instances of our abstract framework and the equivalence~\eqref{eq:sem-eq}~(\autoref{sec:exa}, \autoref{sec:lam}).
\end{itemize}
We implemented our notions and constructions in Haskell, as well as those examples 
from \autoref{sec:exa} that are hosted in the category $\Set$ of sets and functions.
Most of the proofs are placed in the appendix for space reasons. 

\myparagraph{Related Work}
The abstract (categorical) perspective on the small-step and big-step semantics we develop here
is enabled by recent advances in \emph{higher-order mathematical operational semantics}~\cite{GoncharovMiliusEtAl23},
establishing a connection between sets of operational semantics rules and certain (di\dash)natural transformations. 
The first-order form of this connection goes back to the seminal work of Turi and Plotkin~\cite{TuriPlotkin97}. 
Without this leverage, the question of the general connection between small-step and  
big-step semantics was addressed in the literature in a syntax-driven manner. Ciob{\^a}c{\u{a}}~\cite{Ciobaca13},
motivated similarly to us, proposed an automatic translation of small-step specifications to big-step specifications,
and essentially proved~\eqref{eq:sem-eq} using purely syntactic methods, under a number of assumptions different from ours. 
Similarly,~\citet{Bach-PoulsenMosses14} described a translation of small-step specifications to 
\emph{pretty-big-step specifications}, which are fundamentally a certain form of big-step specifications. 
Our categorical abstraction of operational semantics and their transformations are related to the ideas of functional 
semantics~\cite{Danielsson12,Uustalu13,OwensMyreenEtAl16}. A functional semantics essentially replaces (big-step) rules 
with functional transformations equipped with a clock to ensure totality. This clock can be integrated into a monad, such as the \emph{delay monad}~\cite{Capretta05}. 
Our treatment requires an $\omega$-continuous monad~\cite{GoncharovSchroderEtAl18}, as a parameter, instead, e.g.\ a partiality monad, 
which can be viewed as an extensional counterpart of the delay monad~\cite{AltenkirchDanielssonEtAl17}. 
In this paper, we are not focusing on constructive aspects, and -- mainly 
for the sake of simplicity -- stick to the powerset monad as the main example of $\omega$-continuous monad.

Special restricted rule formats for small-step semantics were proposed by~\citet{Bloom95} and~\citet{Glabbeek11},
to ensure congruence properties of behavioural equivalences. Similar restrictions later resurfaced
in previous abstract treatments of small-step semantics~\cite{TsampasWilliamsEtAl21,GoncharovMiliusEtAl22,UrbatTsampasEtAl23}. Our condition of \emph{strong separation}
abstracts similar restrictions -- it is analogous to claiming that the specification contains enough \emph{patience rules}~\cite{Bloom95,Glabbeek11}.

\begin{figure*}[t]
\begin{gather*}
\frac{}{S\xto{t}S'(t)}
\qquad\quad
\frac{}{S'(t)\xto{s}S''(t,s)}
\qquad\quad
\frac{}{S''(t,s)\xto{r}(tr)(sr)}
\\[2ex]
\frac{}{K\xto{t}K'(t)}
\qquad\quad
\frac{}{K'(t)\xto{s}t}
\qquad\quad
\frac{}{I\xto{t}t}
\qquad\quad
\frac{t\to t'}{t s\to t's}
\qquad\quad
\frac{t\xto{s} t'}{t s\to t'}
\end{gather*}
\caption{Small-step operational semantics of \xCL.}
\label{fig:rules}
\end{figure*}
\section{Abstract Higher-Order Operational Semantics: Overture}\label{sec:back}

As a simple motivating example we consider call-by-name \emph{extended combinatory 
logic} ($\xCL$) previously introduced for analogous purposes~\cite{GoncharovMiliusEtAl23}. 
This language is a variant of the well-known \textbf{SKI} calculus~\cite{Curry30}
and being computationally equivalent to the $\lambda$-calculus avoids the technical 
overhead of name management. The terms (i.e.\ programs) of \xCL are generated 
by the following grammar
\begin{align}\label{eq:xcl-grammar}
  t,s \Coloneqq S \mid K \mid I \mid
  S'(t) \mid K'(t) \mid S''(t,s) \mid t s.
\end{align}
Here, the binary application operator is as usual represented by juxtaposition, 
however, in the sequel, we will also use $\tcomp(t,s)$ as a verbose synonym to $ts$.
The letters $S$, $K$ and $I$ are the standard combinators, i.e.\ constants that represent
corresponding closed $\lambda$-terms. Their variants $S'$, $S''$ and~$K'$
capture partial applications of $S$ and $K$.

The small-step semantics rules for \xCL are displayed in~\autoref{fig:rules}. 
For example, we can derive the standard reduction for 
the $S$-combinator: $S\, t\, s\, r\to^\star (t\,r)(s\,r)$ (modulo addition of intermediate 
unlabeled transitions). The auxiliary labeled transitions $t\xto{s} t'$ represent the 
fact that~$t$ reduces to~$t'$ by consuming an argument $s$. Such use of labeled transitions
has previously been adopted by~\citet{Abramsky90} and Gordon~\cite{Gordon99}.

\subsection{Combinatory Logic in Haskell}\label{sec:hask}

\begin{figure}[t]
\begin{lstlisting}
data Free s x  = Res x | Cont (s (Free s x))     -- Free monad over given functor
type Initial s = Free s Void                     -- Initial algebra of given functor       

newtype Mrg s x = Mrg (s x x) 
sigOp = Cont . Mrg 

-- Type class for HO-GSOS, parametrized by signature and behaviour
class (Bifunctor s, MixFunctor b) => HOGSOS s b where 
  -- Abstract representation of small-step rules
  rho :: s (x, b x y) x -> b x (Free (Mrg s) (Either x y))

  -- Operational model: derivable abstract semantics function
  gamma :: Initial (Mrg s) -> b (Initial (Mrg s)) (Initial (Mrg s))
  gamma (Cont (Mrg t)) = mx_second (>>= nabla) $ rho $ first (id &&& gamma) t
    where nabla = either id id

-- Instantiations for xCL:

data XCL' x y = S | K | I | S' x | K' x | S'' x x | Comp x y    -- Signature
type XCL      = Mrg XCL'                                    

data Beh x y = Eval (x -> y) | Red y                            -- Behaviour

instance HOGSOS XCL' Beh where
  rho :: XCL' (x, Beh x y) x -> Beh x (Free XCL (Either x y))
  rho S = Eval $ sigOp . S' . Res . Left
  rho K = Eval $ sigOp . K' . Res . Left
  rho I = Eval $ Res . Left

  rho (S' (s, _)) = Eval $ \t -> sigOp $ S'' (Res $ Left s) (Res $ Left t)
  rho (K' (s, _)) = Eval $ \t -> Res $ Left s

  rho (S'' (s, _) (u, _)) = 
    Eval $ \t -> sigOp $ Comp (sigOp $ Comp (Res $ Left s) (Res $ Left t)) 
                              (sigOp $ Comp (Res $ Left u) (Res $ Left t))

  rho (Comp (_, Red s) u)  = Red $ sigOp $ Comp (Res $ Right s) (Res $ Left u)
  rho (Comp (_, Eval f) u) = Red $ Res (Right $ f u)
\end{lstlisting}
\caption{\lstinline{HOGSOS} type class and \lstinline{XCL'} as its instance.}
\label{fig:code1}
\end{figure}

To build intuition for the upcoming technical developments, we present a semi-formal exposition of \xCL, including its small-step and big-step operational semantics, as well as their relationship, using Haskell. This exposition serves to motivate and clarify the original notion of higher-order abstract GSOS laws~\cite{GoncharovMiliusEtAl23} and their separated variant. A rigorous categorical treatment is provided in \autoref{sec:cat}, to which readers primarily interested in the formal theory may skip directly.

Consider the 
(incomplete) Haskell code in \autoref{fig:code1}. Assuming that \lstinline{s} models a signature, \lstinline{Free s}
and \lstinline{Initial s} model terms over this signature with and without variables 
respectively.
The central definition is that of \lstinline{HOGSOS}, which 
is a type class, parameterized by a signature bifunctor~\lstinline{s}
and a mixed variant behaviour functor \lstinline{b} (more precisely, \lstinline{b} 
is contra-variant in the first argument and co-variant in the second). We explain 
later why the signature functor has two arguments (i.e.\ is a bifunctor) -- the 
standard notion of signature is recovered as \lstinline{Mrg s}.

The \lstinline{HOGSOS} class has one method \lstinline{rho},
which encodes the rules of operational semantics. How exactly \lstinline{rho}
does that is illustrated by instantiating it to the case of \xCL (the code stating 
that \lstinline{XCL'} and \lstinline{Beh} are a bifunctor and a mixed variance
functor correspondingly is omitted). The argument of 
\lstinline{rho} refers to a relevant signature symbol (via \lstinline{s}),
to its arguments that match the left bottom part of the corresponding rule (via \lstinline{x}),
and to the corresponding argument's behaviours that match the right top parts of the 
corresponding rule (via \lstinline{b x y}). The \emph{operational model} \lstinline{gamma}
for every choice of \lstinline{s}, \lstinline{b} and \lstinline{rho} produces 
the semantics of a given closed term, which in the case of \xCL corresponds to the 
transitions $p\to q$ or $p\xto{t} q$ derivable with the rules in~\autoref{fig:rules}. 
The recursive definition of \lstinline{gamma} uses the fact that \lstinline{Free s}
is a monad and calls the operation~\lstinline{&&&} for pairing functions and  
functorial actions \lstinline{first}, \lstinline{mx_second} of \lstinline{s}
and $b$ on the first and the second argument correspondingly. Recall that \lstinline{$}
reads as function composition and the inlined definition of \lstinline{nabla} 
captures the universal map from the coproduct of \lstinline{x} with itself to
\lstinline{x} yielding \lstinline{>>= nabla :: Free (Mrg s) (Either (Free (Mrg s)) (Free (Mrg s))) -> Free (Mrg s)}.

\begin{figure*}[t]
\begin{lstlisting}
-- Signature functor as a sum of value and computation parts
data SepSig' sv sc x y = SigV (sv y) | SigC (sc x y) 
type SepSig sv sc      = Mrg (SepSig' sv sc)

type InitialV sv sc = sv (Initial (SepSig sv sc))
type InitialC sv sc = sc (Initial (SepSig sv sc)) (Initial (SepSig sv sc))

-- Form of the behaviour functor in the separable setting
data SepBeh d x y = BehV (d x y) | BehC y

-- Type class for separated HO-GSOS
class (MixFunctor d, Functor sv, Bifunctor sc) => SepHOGSOS sv sc d where
  rhoV :: sv x -> d x (Free (SepSig sv sc) x) 
  rhoC :: sc (x, SepBeh d x y) x -> Free (SepSig sv sc) (Either x y)

  rhoCV :: sc (x, d x y) x -> Free (SepSig sv sc) (Either x y)
  rhoCV = rhoC . first (second BehV)

  gammaC :: Proxy d -> InitialC sv sc -> Initial (SepSig sv sc)
  gammaC (p :: Proxy d) t = (rhoC @_ @_ @d $ first (id &&& gamma) t) >>= nabla
    where nabla = either id id

  beta :: (Functor sv, Bifunctor sc, MixFunctor d, SepHOGSOS sv sc d) 
    => Proxy d -> InitialC sv sc -> InitialV sv sc
  beta (p :: Proxy d) t = case gammaC p t of Cont (Mrg (SigV t)) -> t ; 
                                             Cont (Mrg (SigC t)) -> beta p t
  
instance (SepHOGSOS sv sc d) => HOGSOS (SepSig' sv sc) (SepBeh d) where
  rho :: SepSig' sv sc (x, SepBeh d x y) x -> SepBeh d x (Free (SepSig sv sc) (Either x y))
  rho (SigV v) = BehV $ (right $ fmap Left) $ rhoV v
  rho (SigC c) = BehC $ rhoC c
\end{lstlisting}
\caption{\lstinline{SepHOGSOS} type class as a refinement of \lstinline{HOGSOS}.}
\label{fig:code2}
\end{figure*}

The behaviour functor \lstinline{Beh} is a coproduct of two functors: the first one
caters for labeled transitions (evaluations), the second one caters for unlabeled 
transitions (reductions). The following \lstinline{instance} declaration captures 
the rules from~\autoref{fig:rules}. The separation on evaluations and reductions is not present on the abstract level of
\lstinline{HOGSOS} and \lstinline{rho}, and hence they are not suitable for defining 
\emph{multi-step semantics} $\to^\star$ involved in~\eqref{eq:sem-eq}. We thus 
introduce a refinement of abstract HO-GSOS in \autoref{fig:code2} and call it 
\emph{separated abstract HO-GSOS}. 
That is, we postulate a partitioning of both the signature functor \lstinline{SepSig} 
and the behaviour functor \lstinline{SepBeh} into a \emph{``value part``} and a 
\emph{``computation part``}, which is indicated by appending \textit{V} and \text{C} 
correspondingly. The computation part of the signature functor for \xCL is 
the application operator, and the computation part of the behaviour functor for \xCL 
is the part of unlabeled transitions.

It follows from the above code that 
\lstinline{SepHOGSOS} instantiates \lstinline{HOGSOS}, which is shown by assembling a suitable abstract 
HO-GSOS rule \lstinline{rho} from \lstinline{rhoV} and \lstinline{rhoC}. For technical 
reasons (to ensure unambiguous type checking) we use explicit type applications via~\lstinline{@} -- those can be safely ignored in reading.
We thus inherit the operational model \lstinline{gamma}, which, in turn, yields its own 
computation part \lstinline{gammaC}, thanks to the separability assumption. Using 
\lstinline{gammaC}, we recursively define the abstract multi-step semantics~\lstinline{beta}. 
Here, the sake of unambiguous type checking, we use Haskell's mechanism of proxies,
i.e., in this case, the dummy argument \lstinline{p}, which is only needed to facilitate  type checking.

\begin{figure*}[t]
\begin{lstlisting}
-- Type class for big-step SOS
class (Functor sv, Bifunctor sc) => BSSOS d sv sc where
  xi :: sc (sv x) x -> Free (SepSig sv sc) x

  zetahat :: Initial (SepSig sv sc) -> InitialV sv sc
  zetahat (Cont (Mrg (SigV v))) = v
  zetahat (Cont (Mrg (SigC c))) = zetahat @d $ join $ xi @d $ first (zetahat @d) c

  zeta :: InitialC sv sc -> InitialV sv sc
  zeta = zetahat @d . sigOp . SigC

instance (SepHOGSOS sv sc d) => BSSOS d sv sc where
  xi :: sc (sv x) x -> Free (SepSig sv sc) x
  xi t = rhoCV (bimap ((sigOp . SigV &&& mx_second @d join . rhoV) . fmap return) return t) 
	      >>= nabla
  where nabla = either id id
\end{lstlisting}
\caption{\lstinline{BSSOS} type class, and \lstinline{SepHOGSOS} as its instance.}
	\label{fig:code3}
\end{figure*}

Finally, in order to interpret the left-hand side of the equivalence~\eqref{eq:sem-eq}
and the equivalence itself, we need to define big-step semantics abstractly
and to link it to the small-step semantics. We do this as shown in \autoref{fig:code3}.
The notion of big-step semantics is formalized with the class \lstinline{BSSOS} and 
its method \lstinline{xi}.
Again, for technical reasons we let \lstinline{BSSOS} vacuously depend on~\lstinline{d}
-- this is needed for the following \lstinline{instance} declaration to type check.
The derivable map \lstinline{zetahat} and its variant \lstinline{zeta} abstractly define the 
evaluation relation $\Dar$. From an HO-GSOS specification we automatically obtain 
a big-step SOS specification in a slightly wordy, but essentially simple manner. 

In the case of \xCL we obtain the specification displayed in~\autoref{fig:big-ski}.
This specification can facilitate understanding the type of \lstinline{xi}: unless the
rule is an axiom of the form $v\Dar v$ where $v$ is a term whose topmost 
operator is from \lstinline{cv}, the conclusion of the rule has the form $pq\Dar v$ and
the premise of the rule contains the judgement of the form $t\Downarrow v$ with $t$ 
determined by $q$ and by such~$w$ that $p\Dar w$ also occurs in the premise. The 
latter type of rules is thus determined by a choice of an operation from \lstinline{sc},
by a choice of an operation from \lstinline{sv} for every of its argument that corresponds to 
the first position of \lstinline{sc} as a functor, and by the term $t$ that can parametrically
depend on the arguments of these operations. This is, in a nutshell, the information that 
\lstinline{xi} carries. Note that the rules in~\autoref{fig:big-ski} marked by 
an asterisk can be simplified in the obvious way by removing premises of the form 
$w\Dar v$ and by replacing $v$ with~$w$ in the conclusions. 
 
The rules in~\autoref{fig:big-ski} also help one to see why the signature 
functor \lstinline{sc} has two arguments. This concretely means partitioning
the arguments of any operation in \lstinline{sc} over two types: \emph{strict} 
and \emph{lazy}. In \xCL \lstinline{sc} contains precisely one operator -- application, 
whose first argument is strict, and whose second argument is lazy. As the rules 
in~\autoref{fig:big-ski} illustrate, we allow evaluation of strict arguments only. Unlike 
small-step semantics, we cannot simply add judgements of the form $q\Dar w$ even if 
we do not use $w$. For example, given any diverging term $\Omega$, we must have 
$K I\Omega\Dar I$, which would not be the case if we conditioned this on the existence of a derivation~$\Omega\Dar w$. 
 
We can now express~\eqref{eq:sem-eq} as the equality
\begin{equation}\label{eq:equiv-abst}
\text{\lstinline{beta p t == zeta t}}
\end{equation}
for all \lstinline{p :: Proxy d} and \lstinline{t :: InitialC sv sc}.

For a final note, observe that~\eqref{eq:equiv-abst}, albeit desirable, need not 
always be true.
\begin{figure*}[t]
	\begin{flalign*}
		\makebox[0pt][l]{\textit{Values:}}&&  v,w&\Coloneqq I\mid K\mid S\mid K'(t)\mid S'(t)\mid S''(s,t)&\hspace{3cm}\\ 
		\makebox[0pt][l]{\textit{Terms:}}&&   s,t,r,q&\Coloneqq v\mid s t&\\[-3ex]
	\end{flalign*}
	\begin{gather*}
		\infer{v \Dar v}{} \qquad\qquad
		\infer{st\Dar v}{s \Dar I & t \Dar v} \qquad\qquad
		\infer{st\Dar v}{s \Dar K & K'(t)\Dar v}~{\rule{0pt}{12pt}{}}^* \qquad\qquad
		\infer{st\Dar v}{s \Dar S& S'(t)\Dar v}~{\rule{0pt}{12pt}{}}^* \\[2ex] 
		\infer{st\Dar v}{s \Dar K'(r) & r \Dar v} \qquad\qquad
		\infer{st\Dar v}{s \Dar S'(r)&S''(r,t)\Dar v}~{\rule{0pt}{12pt}{}}^*\qquad\qquad 
		\infer{st\Dar v}{s \Dar S''(r,q) & (rt)(qt)\Dar v}
	\end{gather*}		
	\caption{Big-step operational semantics of \xCL.}
	\label{fig:big-ski}
\end{figure*}
\begin{example}\label{exa:non-sep}
Consider a language over the signature $\{f/1,g/1,\Omega/0\}$ where $/n$ indicates 
that the arity of the corresponding operation is $n$. Consider the following 
small-step specification:
\begin{gather*}		
		\frac{}{g(x) \stackrel{y}{\to} f(y)} \qquad\qquad
		\frac{}{\Omega \to \Omega}  \qquad\qquad
		\frac{x \to y}{f(x) \to g(y)} \qquad\qquad
		\frac{x \stackrel{x}{\to} y}{f(x) \to x} 
\end{gather*}		
which identifies $g$ as a value former and $f$ and $\Omega$ as computation formers.
This specification yields a separated abstract HO-GSOS law, in which the only 
argument of $f$ is necessarily strict, because the behaviour of $f(t)$ generally depends 
on the behaviour of $t$. The only way to define big-step rules would be as follows:
\begin{gather*}
\infer{g(x) \Dar g(x)}{} \qquad\qquad
\infer{f(x) \Dar v}{x \Dar g(y)&\qquad g(y)\Dar v}
\end{gather*}
The equivalence~\eqref{eq:sem-eq} now fails, because $f(f(g(\Omega))) \to g(g(\Omega))$, 
but $f(f(g(\Omega))) \Dar g(\Omega)$. 
\exend\end{example} 

\subsection{Categorical Modeling} \label{sec:cat}
The reasoning of the previous section is not sufficiently precise in various respects. 
Haskell provides a very concrete type-theoretic ambient with general recursion
and other features we might want or not want to include, but in any case we need 
to be conscious about them. This is particularly important for constructing proofs,
an aspect, we completely omitted so far, but which we consider as the main contribution.
The above treatment of \xCL can be naturally formalized in the category of sets,
using a suitable \emph{partiality monad}~\cite{ChapmanUustaluEtAl15,AltenkirchDanielssonEtAl17,Goncharov21} for modeling iteration and recursion. 
More generally, one would need other categories: multisorted sets for typed languages,
categories of presheaves or nominal sets for the $\lambda$-calculus, the corresponding 
combinations and extensions thereof. It is also not necessary to restrict to partiality
as the only effect -- one can treat nondeterministic or even probabilistic semantics 
in a similar manner. We thus generally work with \emph{strong $\omega$-continuous monads},
which are arguably the largest semantically relevant class of monads that support iteration 
via least fixed points~\cite{GoncharovSchroderEtAl18}.

We then recall and/or fix relevant categorical notations and conventions in order. 
For the basics of category theory we refer to~\cite{Mac-Lane78,Awodey10}. 
In a category~$\BC$,~$|\BC|$ will denote the 
class of objects and~$\BC(X,Y)$ will denote the set of morphisms from~$X$ to~$Y$. The judgement 
$f\c X\to Y$ will be regarded as an equivalent to $f\in\BC(X,Y)$ if $\BC$ is clear
from the context.
We will denote by $\id_X$, or simply $\id$ the identity morphism on~$X$. 
In what follows, we generally work in an ambient distributive category~$\BC$ (as defined below). 
By $|\BC|$ we refer to the objects of $\BC$. We tend to
omit indexes at natural transformations for better readability. 
\emph{Dinatural transformations} generalize natural transformations to 
the case of mixed variance functors. More precisely, given two functors $F,G\c\BC^\op\times\BC\to\BD$,
a family of morphisms $\alpha = (\alpha_{X,Y}\c F(X,Y)\to G(X,Y))_{X,Y\in |\BC|}$ is a dinatural transformation
if the diagram
\begin{equation*}
\begin{tikzcd}[column sep=3em, row sep=1em]
&
F(X,X)
	\rar["{\alpha_{X,X}}"]
&
G(X,X)
	\ar[dr,"{G(\id,f)}"]
\\
F(Y,X) 
	\ar[ur,"{F(f,\id)}"]
    \ar[dr,"{F(\id,f)}"'] 
	& & & 
G(X,Y)
\\
&
F(Y,Y)
	\rar["{\alpha_{Y,Y}}"]
&
G(Y,Y)
	\ar[ur,"{G(f,\id)}"']
\end{tikzcd}
\end{equation*} 
commutes for any $f\c X\to Y$. A prototypical example of a dinatural transformation is the evaluation transformation $Y^X\times X\to Y$.

Some further notions we will need are as follows.

\myparagraph{Distributive categories} %
A \emph{distributive category} is a category with finite products and coproducts, 
and such that every morphism $[\id\times\inl,\id\times\inr]\c X\times Y + X\times Z\to X\times (Y+Z)$ is an isomorphism.
Let $\nabla=[\id,\id]\c X+X\to X$ and let $\dl$ be the functor $\BC\to\BC\times\BC$,
sending~$X$ to $(X,X)$. Moreover, let $\Fst,\Snd\c\BC\times\BC\to\BC$ be the 
obvious projections.

\myparagraph{Functors and algebras} We involve three types of (endo-)functors: the usual 
unary covariant functors $F\c\BC\to\BC$, bifunctors $F\c\BC\times\BC\to\BC$ and
mixed variance functors $F\c\BC^\op\times\BC\to\BC$. Given a functor $F\c\BC\to\BC$
in a distributive category~$\BC$, the \emph{(pointwise) free monad} $F^\star\c\BC\to\BC$ 
over $F$ is characterized by a universal property: for any object $X$ there are 
morphisms $\iota_X\c F(F^\star X)\to F^\star X$, and $\eta_X\c X\to F^\star X$, such 
that for any $f\c FY\to Y$ and $g\c X\to Y$ the diagram 
\begin{equation*}
\begin{tikzcd}[column sep=3em, row sep=normal]
F(F^\star X)
	\rar["\iota_X"] 
	\dar["{F(\init[f,g])}"']
&
F^\star X
	\dar["{\init[f,g]}"]
&[4em]
X
\lar["\eta_X"']
\ar[dl,"g"]
\\
FY
	\rar["f"] 
&
Y
&
\end{tikzcd}
\end{equation*}
commutes for precisely one morphism $\init[f,g]$. It follows that $\iota_X$ and $\eta_X$
are natural in~$X$. By generalities (Lambek's theorem), $[\eta,\iota]$ is an isomorphism. 
For every $X$, $F^\star X$ is also called a \emph{free (F-)algebra}
on $X$. We can often think of~$F^\star X$ as an object of terms with variables from $X$
and with operations from~$F$, and of $\mu F=F^\star\iobj$ as an object of closed terms thereof. 
This is specifically true for \emph{polynomial functors},
i.e.\ functors of the form $FX=\coprod_{f\in Ops} X^{\ar(f)}$ where $Ops$ is a 
set of operations~$f$ of corresponding arities $\ar(f)$. One example is the functor 
\begin{align*}
  F X
  = \underbrace{1+1+1}_{I,K,S} \,+\, \underbrace{X + X}_{K',S'}
  \,+\, 
  \underbrace{X\times X}_{S''}%
\end{align*}
induced by the grammar of \xCL~\eqref{eq:xcl-grammar}. 
By definition, an element of the dependent sum $\coprod_{f\in Ops} X^{\ar(f)}$ 
has the form $(f\in Ops,(x_i\in X)_{i\in\ar(f)})$, which we will also write 
as $f(x_i)_{i\in\ar(f)}$, or even $f(x_1,\ldots,x_{\ar(f)})$ when $\ar(f)\in\nats$,
if no confusion arises. In what follows we assume that all the involved free objects 
$F^\star X$ exist, without further mention.

\myparagraph{Strong monads and distributive laws} A monad~$\BBT$ on~$\BC$ is determined by a \emph{Kleisli triple}~$(T,\eta, (-)^\klstar)$,
consisting of a map $T\c{|\BC|\to|\BC|}$, a family of morphisms 
$(\eta_X\c X \to TX)_{X\in|\BC|}$ and \emph{Kleisli lifting} sending each~$f\c X \to T Y$ to~$f^\klstar\c TX \to TY$ and
obeying \emph{monad laws}: 
\begin{align*}
	\eta^{\klstar}=\id,\qquad f^{\klstar}\comp\eta=f,\qquad (f^{\klstar}\comp g)^{\klstar}=f^{\klstar}\comp g^{\klstar}.
\end{align*}
It follows that~$T$ extends to a functor, $\eta$ extends to a natural transformation -- \emph{unit},
${\mu = \mmul\c TTX\to TX}$ extends 
to a natural transformation -- \emph{multiplication}, and that $(T,\eta,\mu)$ 
is a monad in the standard sense~\cite{Mac-Lane78}. Any free monad $F^\star$ is
an example of a monad. 

We will emphasize that a functor $T$ is also a monad by writing 
it boldfaced (such as~$\BBT$).
Given $f\c X\to TZ$ and $g\c Y\to TW$, we abbreviate $f\klplus g = [T\inl\comp f,T\inr\comp g]\c X+Y\to T(Z+W)$.

A monad $\BBT$ is \emph{strong} if it comes with 
a natural transformation $\tau_{X,Y}\c X\times TY\to T(X\times Y)$ called \emph{strength} 
and satisfying a number of coherence conditions~\cite{Moggi91}. 
A well-known fact due to Kock~\citep{Kock72} is that in self-enriched categories (such 
as $\Set$) strength is equivalent to enrichment. In particular, in $\Set$, 
every functor $T$ and every monad $\BBT$ are strong with the canonical strength 
$\tau_{X,Y} = \lambda (x,z).\,T(\lambda y.\,(x,y))(z)$.

By a \emph{(Kleisli) distributive law} between
a monad $\BBT$ and a functor $F$ we mean a natural transformation $FT\to TF$ suitably interacting 
with unit and multiplication of the monad~\cite{HasuoJacobsEtAl07}.

\myparagraph{Order enrichment and fixpoints} 
Recall that \emph{Kleene's fixpoint theorem} states that every continuous endomap 
$f$ on a pointed $\omega$-cpo has the least pre-fixpoint~$\mu f$ (which is also the least fixpoint),
which is a least upper bound of the chain
\begin{align*}
\bot\appr f(\bot)\appr f(f(\bot))\appr\ldots
\end{align*}
An \emph{$\omega$-continuous} monad~\cite{GoncharovSchroderEtAl18} is a monad~$\BBT$ together with
an enrichment of the Kleisli category~$\BC_{\BBT}$ of~$\BBT$ over pointed $\omega$-cpos and
(nonstrict) $\omega$-continuous maps, satisfying the following principles:
\begin{itemize}
\item strength is $\omega$-continuous: $\tau(\id\times\bigjoin_i f_i)=\bigjoin_i\tau(\id\times f_i)$;
\item copairing in $\BC_{\BBT}$ is $\omega$-continuous: $\bigl[\bigjoin_i f_i,\bigjoin_i g_i\bigr]= \bigjoin_i [f_i,g_i]$;
\item bottom elements are preserved by strength and by postcomposition in $\BC_{\BBT}$: $\tau(\id\times\bot)=\bot$, $f^\klstar\comp\bot=\bot$.
\end{itemize}
Every $\omega$-continuous monad $\BBT$ is a (complete) Elgot monad, i.e.\ it supports
an (Elgot) iteration operator that sends every $f\c X\to T(Y+X)$ to $f^\istar\c X\to TY$,
subject to several standard laws of iteration. Specifically, $f^\istar=\mu g.\, [\eta,g]^\klstar\comp f$. 
Standard classical examples of (strong) $\omega$-continuous monads are the \emph{maybe-monad} $TX=X+1$
and the \emph{powerset monad} $TX=\PSet X$. 
Note that the identity monad $T=\Id$ is generally not $\omega$-continuous, for
the Kleisli hom-sets need not posses least elements.
We call a distributive law $\dl\c F\BBT\to\BBT F$ $\omega$-continuous if all the correspondences $f\in\BC(X,TY)\mapsto\dl_Y\comp Ff\in\BC(FX, TFY)$
are $\omega$-continuous.

\subsection{Abstract HO-GSOS}
We proceed to recall and modify slightly the notion of abstract 
HO-GSOS law from previous work~\cite{GoncharovMiliusEtAl23}. This is indeed the notion
we already implemented in~\autoref{sec:hask}. Let us fix a 
bifunctor $\Sigma'\c\BC\times\BC\to\BC$ (signature) and a mixed variance 
functor $B\c\BC^\op\times\BC\to\BC$ (behaviour) on a distributive category~$\BC$. 
Let $\Sigma=\Sigma'\dl$ and let
\begin{align}\label{eq:ho-gsos-law}
\rho_{X,Y} \c \Sigma' (X \times B(X,Y),X)\to B(X, \Sigma^\star (X+Y)),
\end{align}
be a family of morphisms natural in $Y$ and dinatural in $X$. 
This is a slight generalization of the original notion, which is obtained
making~$\Sigma'$ independent of the second argument. In that case $\Sigma'(X,Y)=\Sigma X$.
We need the additional parameter for $\Sigma'$ to identify those arguments of signature 
operations, whose behaviour is not inspected. The key notion 
that~\eqref{eq:ho-gsos-law} generates is that of \emph{operational model}, which is a morphism 
$\gamma$, defined by parametrized structural recursion as follows:
\begin{equation}\label{eq:op-mod}
\kern-2ex\begin{tikzcd}[column sep=3ex, row sep=4ex]
\Sigma'(\mS,\mS)
  \dar["{\Sigma'(\brks{\id,\gamma},\id)}"']
  \ar[rr,"\iota"] & &[5ex]
\mS
  \dar["\gamma"]\\
\Sigma'(\mS\times {B(\mS,\mS)},\mS)
  \rar["\rho"] &
B(\mS,\Sigma^\star (\mS+\mS))
  \rar["{B(\id,\nabla^\klstar)}"] &
B(\mS,\mS)
\end{tikzcd}
\end{equation}
That is: there is precisely one $\gamma\c\mS\to B(\mS,\mS)$, such that~\eqref{eq:op-mod} commutes.

In fact, given a law~\eqref{eq:ho-gsos-law}, we can introduce an abstract higher-order 
GSOS for $\Sigma$ and $B$ in the original sense as follows:
\begin{equation}\label{eq:ho-var-conv}
\begin{tikzcd}[column sep=20ex,row sep=4ex]
\Sigma' (X \times B(X,Y),X \times B(X,Y)) 
	\rar["{\Sigma' (\id\times B(\id,\inr),\inl\comp\fst)}"]
&
\Sigma' (X \times B(X,X+Y),X + Y)
	\ar[dl,"\rho"',bend right=4,pos=.7]\\
B(X, \Sigma^\star (X+(X+Y)))
	\rar["{B(\id,\Sigma^\star[\inl,\id])}"]
&
B(X, \Sigma^\star (X+Y))
\end{tikzcd}
\end{equation}
and it is shown below that the operational model for it coincides with the one specified 
by~\eqref{eq:op-mod}. In that sense our present generalization is indeed mild -- it does 
not add more strength to the original notion, but it provides more flexibility for 
the analysis of the transformations~\eqref{eq:ho-gsos-law}.

\begin{proposition}\label{pro:theta-om}
Let $\rho'_{X,Y}\c \Sigma(X \times B(X,Y)) \to B(X, \Sigma^\star (X+Y))$ be defined 
by~\eqref{eq:ho-var-conv}. Then the operational model for $\rho'$ is the unique such 
morphism that the diagram~\eqref{eq:op-mod} commutes.
\end{proposition}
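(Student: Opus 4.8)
The plan is to argue entirely through the universal property that characterises operational models. Write $\gamma\c\mS\to B(\mS,\mS)$ for the unique morphism making~\eqref{eq:op-mod} commute. Since $\rho'$ is, by construction, an abstract HO-GSOS law in the original sense for the \emph{unary} functor $\Sigma = \Sigma'\dl$, it has an operational model of its own, namely the unique morphism $\gamma'\c\mS\to B(\mS,\mS)$ with $\gamma'\comp\iota = B(\id,\nabla^\klstar)\comp\rho'_{\mS,\mS}\comp\Sigma\brks{\id,\gamma'}$; unfolding $\Sigma = \Sigma'\dl$ this reads
\[
\gamma'\comp\iota \;=\; B(\id,\nabla^\klstar)\comp\rho'_{\mS,\mS}\comp\Sigma'(\brks{\id,\gamma'},\brks{\id,\gamma'}).
\]
Since~\eqref{eq:op-mod} admits a unique solution, it suffices to prove that $\gamma'$ also satisfies it, that is, that
\[
B(\id,\nabla^\klstar)\comp\rho'_{\mS,\mS}\comp\Sigma'(\brks{\id,\gamma'},\brks{\id,\gamma'}) \;=\; B(\id,\nabla^\klstar)\comp\rho_{\mS,\mS}\comp\Sigma'(\brks{\id,\gamma'},\id);
\]
this forces $\gamma' = \gamma$, so that the operational model for $\rho'$ is exactly the unique morphism making~\eqref{eq:op-mod} commute.

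I would establish the displayed identity by a diagram chase. Expand $\rho'_{\mS,\mS}$ according to its definition~\eqref{eq:ho-var-conv}, then fuse the two consecutive actions of $\Sigma'$ into one by functoriality and simplify the resulting components using the equations of the pairing $\brks{\id,\gamma'}$, chiefly $\fst\comp\brks{\id,\gamma'} = \id$: the contribution going into the non-inspected argument positions reduces, just as in~\eqref{eq:op-mod}, to a bare $\id$, while the one going into the inspected positions becomes $\brks{\id, B(\id,\inr)\comp\gamma'}$ --- that is, the behaviour carried along is merely reindexed by $\inr$. Naturality of $\rho$ in $Y$ (and, to the extent~\eqref{eq:ho-var-conv} uses it, dinaturality in $X$) along this injection then turns the remaining occurrence of $\rho$ into $\rho_{\mS,\mS}$ post-composed with a reindexing of the form $B(\id,\Sigma^\star(\id+\inr))$. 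What is left is to check that the reindexings that have accumulated on the $B(\id,\argument)$ side --- the $\Sigma^\star[\inl,\id]$ coming from~\eqref{eq:ho-var-conv}, the $\Sigma^\star(\id+\inr)$ just produced, and the outermost $\nabla^\klstar$ --- compose to $\nabla^\klstar$; by functoriality of $\Sigma^\star$ and $B$ this reduces to the elementary coproduct identity $[\inl,\id]\comp(\id+\inr) = \id$, after which the expression is literally the right-hand side of~\eqref{eq:op-mod}.

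The step I expect to demand the most care is the bookkeeping of these reindexings. The converted law $\rho'$ intrinsically works over $\Sigma^\star(\mS + (\mS + \mS))$ --- morally three copies of $\mS$, the surplus one arising because the non-inspected arguments are first routed through $\brks{\id,\gamma'}$ and only afterwards forgotten --- and this object must be folded back onto $\mS$ in exactly the way $\nabla^\klstar$ folds $\Sigma^\star(\mS + \mS)$; making the two foldings agree on the nose, rather than up to a canonical isomorphism, is where the combinatorics of the injections has to be tracked precisely. A secondary subtlety is that~\eqref{eq:ho-var-conv} interleaves naturality in $Y$ with dinaturality in $X$, so at each rewriting step one must be deliberate about which (di)naturality square of $\rho$ is being invoked and in which variable; once the reindexings are pinned down, however, the remaining calculation is a routine chain of functoriality and (di)naturality rewrites terminating in~\eqref{eq:op-mod}.
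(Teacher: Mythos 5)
Your proposal is correct and follows essentially the same route as the paper: the paper likewise reduces everything to the identity $\rho'_{\mS,\mS}=\rho_{\mS,\mS}\comp\Sigma'(\id,\fst)$, obtained by expanding~\eqref{eq:ho-var-conv}, applying naturality of $\rho$ in the covariant variable, and cancelling the accumulated reindexings via exactly the coproduct identity you isolate, after which composing with $\Sigma'(\brks{\id,\gamma},\brks{\id,\gamma})$ and using $\fst\comp\brks{\id,\gamma}=\id$ turns the defining diagram of the operational model for $\rho'$ into~\eqref{eq:op-mod}. One small adjustment: you begin by positing ``the unique morphism making~\eqref{eq:op-mod} commute,'' but unique solvability of~\eqref{eq:op-mod} is part of what the proposition asserts, so you should not presuppose it; since your diagram chase uses nothing about $\gamma'$ beyond $\fst\comp\brks{\id,g}=\id$, it holds for an arbitrary $g$, showing the two diagrams have precisely the same solutions, which yields existence, uniqueness, and the identification in one stroke.
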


\section{Separable Abstract Higher-Order GSOS}\label{sec:gsos-separated}

Let us reintroduce the notion of 
separable abstract HO-GSOS from~\autoref{sec:hask} in categorical terms.
\begin{definition}[Separable Abstract Higher-Order GSOS]\label{def:separated}
We say that the law~\eqref{eq:ho-gsos-law} is \emph{separable} if $B(X,Y)=D(X,Y) + TY$, $\Sigma'=\Sigma_\val\Snd+\Sigma_\com$
and 
\begin{align}\label{eq:separated-rho}
\rho = D(\id,\Sigmas\inl)\comp \rho^\val+\rho^\com %
\end{align}
for some $D\c\BC^\op\times\BC\to\BC$, $\Sigma_\val\c\BC\to\BC$, $\Sigma_\com\c\BC\times\BC\to\BC$,
a strong monad $\BBT$, families of morphisms
\begin{align}
\label{eq:separated1}  
\rho_{X}^\val\c&\Sigma_\val X \to D(X,\Sigma^\star X),
\\*
\label{eq:separated2}
\rho_{X,Y}^\com\c&\Sigma_\com(X\times B(X,Y),X)\to T\Sigma^\star (X+Y),
\intertext{dinatural in $X$ and natural in $Y$, and a distributive law}
\label{eq:separated3}
\dl_{X,Y}\c&\Sigma_\com(TX,Y)\to T\Sigma_\com(X,Y).
\end{align}
between the monad $\BBT$ and the functors $\Sigma_\com(\argument,Y)$ naturally in $Y$.
The triple $(\rho^\val,\rho^\com,\dl)$ is then a \emph{separated abstract higher-order GSOS law}.

We call $\Sigma_\val$ \emph{value formers}
and $\Sigma_\com$ \emph{computation formers}. Moreover, we call  
$\mSv=\Sigma_\val(\mS)$ the \emph{object of values} and $\mSc=\Sigma_\com(\mS,\mS)$
the \emph{object of computations}. Let $\iota^\val = \iota\comp\inl\c\Sigma_\val\Sigma^\star\to\Sigma^\star$
and $\iota^\com = \iota\comp\inr\c\Sigma_\com\Delta\Sigma^\star\to\Sigma^\star$.
\end{definition}
For the time being we are not making any assumptions about the monad $\BBT$. 
In the simplest (total, deterministic) case, $\BBT$ in \autoref{def:separated} is the identity 
monad and $\dl=\id$. %

Since $\mS\iso\mSv+\mSc$, the object of 
all closed $\Sigma$-terms $\mS$ crisply decomposes into values and computations.
That $\Sigma_\com$ is a binary functor is meant to capture a partitioning of the 
arguments of the computation formers into those that depend on the associated 
behaviour, and those that do not. We call the former type of arguments \emph{strict}
and the latter \emph{lazy}.

For a separated abstract higher-order GSOS law $(\rho^\val,\rho^\com,\dl)$,
we define a refinement $(\gamma^\val,\gamma^\com)$ of the operational model~\eqref{eq:op-mod}.
The morphism $\gamma^\val$ is the composition
\begin{align}\label{eq:op-mod-v}
\mSv
	\xto{\rho^\val}
D(\mS,\Sigma^\star\mS)
	\xto{D(\id,\mmul)} 
D(\mS,\mS)
\end{align}
and the morphism $\gamma^\com$ is characterized by the diagram
\begin{equation}\label{eq:op-mod-c}
\begin{tikzcd}[column sep=5ex, row sep=4ex]
\Sigma_\com(\mSv+\mSc,\mS)
	\dar["{\Sigma_\com(\brks{\iota,\gamma^\val+\gamma^\com},\id)}"']
	\ar[rr,"{\Sigma_\com(\iota,\id)}"] 
& 
&[2ex]
\mSc
	\dar["\gamma^\com"]	
\\
\Sigma_\com(\mS\times {B(\mS,\mS)},\mS)
  \rar["{\rho^\com }"] 
&
T\Sigmas(\mS+\mS)
	\rar["T\nabla^\klstar"]
&
T\mS
\end{tikzcd}
\end{equation}
in the following sense.
\begin{proposition}\label{pro:gamma_c}
There is unique $\gamma^\com\c\mSc\to T(\mS)$, for which \eqref{eq:op-mod-c}
commutes. Moreover:
\begin{align}\label{eq:gammas}
\gamma^\val+\gamma^\com = \gamma\comp\iota.
\end{align}
\end{proposition}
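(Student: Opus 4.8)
The plan is to reduce everything to the already-established existence and uniqueness of the full operational model $\gamma\c\mS\to B(\mS,\mS)$ characterised by~\eqref{eq:op-mod}. Recall the two coproduct decompositions in play: $\mS\iso\mSv+\mSc$ via $\iota=[\iota^\val,\iota^\com]$ (an isomorphism since $\mS=\mu\Sigma$ and $\Sigma=\Sigma'\dl$ with $\Sigma'=\Sigma_\val\Snd+\Sigma_\com$), and $B(\mS,\mS)=TD(\mS,\mS)+T\mS$. The key observation is that, by the separability form~\eqref{eq:separated-rho} of $\rho$, the square~\eqref{eq:op-mod} ``splits'' along these two decompositions, so that $\gamma\comp\iota^\val$ and $\gamma\comp\iota^\com$ can be read off independently.

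For \emph{existence}, I would set $\gamma^\com\ass(T\nabla^\klstar)\comp\rho^\com\comp\Sigma_\com(\brks{\id,\gamma},\id)$ and then restrict~\eqref{eq:op-mod} to the two summands of $\Sigma'(\mS,\mS)=\mSv+\mSc$. On the $\Sigma_\val\Snd$-summand, $\Sigma'(\brks{\id,\gamma},\id)$ acts as $\id_{\mSv}$ (since $\Snd$ discards $\brks{\id,\gamma}$), $\rho$ acts as $\inl\comp\eta\comp D(\id,\Sigma^\star\inl)\comp\rho^\val$, and $B(\id,\nabla^\klstar)$ acts on the $TD$-component as $TD(\id,\nabla^\klstar)$; combining this with naturality of $\eta$ and the identity $\nabla^\klstar\comp\Sigma^\star\inl=\mmul$ (immediate from $\nabla\comp\inl=\id$ and the free-algebra laws) yields exactly $\gamma\comp\iota^\val=\inl\comp\gamma^\val$ with $\gamma^\val$ the composite~\eqref{eq:op-mod-v}. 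On the $\Sigma_\com$-summand, $\rho$ acts as $\inr\comp\rho^\com$ and $B(\id,\nabla^\klstar)$ as $T\nabla^\klstar$ on the right summand, giving $\gamma\comp\iota^\com=\inr\comp\gamma^\com$ for the $\gamma^\com$ just defined. Putting the two together gives $\gamma\comp\iota=[\inl\comp\gamma^\val,\inr\comp\gamma^\com]=\gamma^\val+\gamma^\com$, which is~\eqref{eq:gammas}; and commutativity of~\eqref{eq:op-mod-c} is then a one-line calculation: precomposing the definition of $\gamma^\com$ with the isomorphism $\Sigma_\com(\iota,\id)$ and using functoriality of $\Sigma_\com$ in its first argument rewrites $\Sigma_\com(\brks{\id,\gamma},\id)\comp\Sigma_\com(\iota,\id)$ as $\Sigma_\com(\brks{\id,\gamma}\comp\iota,\id)=\Sigma_\com(\brks{\iota,\gamma\comp\iota},\id)=\Sigma_\com(\brks{\iota,\gamma^\val+\gamma^\com},\id)$.

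For \emph{uniqueness}, suppose $\gamma^\com$ is any morphism making~\eqref{eq:op-mod-c} commute and put $\tilde\gamma\ass[\inl\comp\gamma^\val,\inr\comp\gamma^\com]\comp\iota^{-1}\c\mS\to B(\mS,\mS)$. I would verify that $\tilde\gamma$ satisfies~\eqref{eq:op-mod} by checking it on the $\mSv$- and $\mSc$-summands just as above: the $\mSv$-component reproduces~\eqref{eq:op-mod-v} verbatim, and the $\mSc$-component, after inserting $\Sigma_\com(\iota,\id)^{-1}\comp\Sigma_\com(\iota,\id)$ and rewriting $\brks{\id,\tilde\gamma}\comp\iota=\brks{\iota,\gamma^\val+\gamma^\com}$, reduces precisely to the hypothesis~\eqref{eq:op-mod-c}. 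By uniqueness of the operational model, $\tilde\gamma=\gamma$; feeding this back into the rearrangement of~\eqref{eq:op-mod-c} (valid because $\Sigma_\com(\iota,\id)$ is invertible) forces $\gamma^\com=(T\nabla^\klstar)\comp\rho^\com\comp\Sigma_\com(\brks{\id,\gamma},\id)$, i.e. the value pinned down in the existence step.

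The one genuinely delicate point is the circularity of~\eqref{eq:op-mod-c}: $\gamma^\com$ occurs both on the outside and, through $\brks{\iota,\gamma^\val+\gamma^\com}$, on the inside, so uniqueness cannot be obtained simply by cancelling the isomorphism $\Sigma_\com(\iota,\id)$. Transporting a hypothetical $\gamma^\com$ to a competitor $\tilde\gamma$ for the operational model and invoking its uniqueness is exactly what breaks this loop; everything else is routine bookkeeping with the product/coproduct universal properties and with naturality of $\eta$, of $\iota$, and of the components of~$\rho$.
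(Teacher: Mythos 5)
Your proposal is correct and follows essentially the same route as the paper: define $\gamma^\com$ as $T\nabla^\klstar\comp\rho^\com\comp\Sigma_\com(\brks{\id,\gamma},\id)$, establish~\eqref{eq:gammas} by splitting~\eqref{eq:op-mod} along the coproduct decomposition of $\Sigma'$ and $B$ using the separated form~\eqref{eq:separated-rho} of $\rho$, and obtain uniqueness by showing that any $g$ satisfying~\eqref{eq:op-mod-c} yields a competitor $(\gamma^\val+g)\comp\iota^\mone$ for the operational model. Your remark that the circularity of~\eqref{eq:op-mod-c} is what forces the detour through uniqueness of $\gamma$ is exactly the point the paper's argument turns on.
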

Let $\hat\gamma^\com$ be the morphism $[\eta\comp\iota^\val,\gamma^\com]\comp\iota^\mone\c\mS\to T(\mS)$.
Intuitively, $\hat\gamma^\com$ acts on computations as $\gamma^\com$ and as $\eta$ on values.

\begin{example}[Extended Combinatory Logic]\label{exa:xcl}
Recall the grammar~\eqref{eq:xcl-grammar} of the extended combinatory logic \xCL. 
The corresponding signature functor consists of two parts: 
\begin{displaymath}
  \Sigma_\val X = \coprod\nolimits_{f\in\{S,K,I,K',S',S''\}} X^{\ar(f)} \qquad\qquad \Sigma_\com (X,Y) = \{\tcomp\}\times (X\times Y).
\end{displaymath}
Here $\ar(f)$ denotes the arity of $f$. Binary application operator $\tcomp$
is the only computation former. The expression for $\Sigma_\com$ indicates that 
the one-step behaviour of it only depends on the behaviour of the first argument, but not on
the second. 

Let $D(X,Y) = Y^X$ and $T=\Id$. The small-step operational semantics rules in~\autoref{fig:rules} define $\rho^\val$
and $\rho^\com$, and hence the law~\eqref{eq:ho-gsos-law}. Concretely (eliding the obvious isomorphisms and parentheses):
\begin{align*}
\rho^\val(I)(r) 		=&\; r\qquad	&\rho^\val(K'(t))(r)  =&\; t	& \rho^\com(\tcomp((t,t'), s))	=&\; \tcomp(t', s)\\ 
\rho^\val(K)(r) 		=&\; K'(r)		&\rho^\val(S'(t))(r)	=&\; S''(t, r)	& \rho^\com(\tcomp((t,f), s)) =&\; f(s) \\
\rho^\val(S)(r) 		=&\; S'(r)		&\rho^\val(S''(t,s))(r)	=&\; \makebox[4em][l]{$\tcomp(\tcomp(t, r),\tcomp(s,r))$}\\[-8ex]
\end{align*}
\exend
\end{example}
From~\eqref{eq:separated2} we derive
\begin{align}
\label{eq:separated21}
\rho_{X,Y}^{\com\val}\c&\Sigma_\com(X\times D(X,Y),X)\xto{\rho^\com\comp\Sigma_\com(\id\times\inl,\id)}  T\Sigma^\star (X+Y). %
\end{align}
In what follows, we globally make the following mild technical assumption.
\begin{assumption}\label{ass:compl}
We assume that for all $X,Y,Z$, and for $\Sigma_\com$ figuring in~\autoref{def:separated} 
the morphisms $\Sigma_\com(\inl,\id)\c\Sigma_\com(X,Z)\to\Sigma_\com(X+Y,Z)$ are \emph{complemented},
and the complementation is natural in $X,Y$ and $Z$.
In other words, for some functor~$\Theta\c\BC\times\BC\times\BC\to\BC$ and some natural transformation 
$\theta_{X,Y,Z}\c\Theta(X,Y,Z)\to\Sigma_\com(X+Y,Z)$, the cospan 
\begin{align*}
\Sigma_\com(X,Z) \xto{\Sigma_\com(\inl,\id)}\Sigma_\com(X+Y,Z)\xfrom{\theta}\Theta(X,Y,Z)
\end{align*}
is a coproduct naturally in $X$, $Y$ and $Z$.
\end{assumption}
\begin{remark}\label{rem:poly-theta}
Note that if $\Sigma_\com$ is polynomial $\Sigma_\com(X,Y)=\coprod_{f\in Ops} X^{\sar(f)}\times Y^{\lar(f)}$ (where $\sar(f)$
is the number of strict arguments of $f$ and $\lar(f)$ is the number of lazy ones) 
\autoref{ass:compl} is satisfied whenever it is satisfied for every summand $X^{\sar(f)}\times Y^{\lar(f)}$.
Let us fix~$f$, and let $n=\sar(f)$, $m=\lar(f)$. \autoref{ass:compl} is then satisfied, since (by using the binomial formula)
\begin{align*}
(X+Y)^n\times Z^m\cong X^n\times Z^m + \coprod_{k=1}^n C_n^k\times Y^k\times X^{n-k}\times Z^m,
\end{align*}
and we can take $\Theta(X,Y,Z) = \coprod_{k=1}^n C_n^k\times Y^k\times X^{n-k}\times Z^m$.
\end{remark}

We now introduce a well-behavedness condition on separated abstract HO-GSOS 
guaranteeing that a notion of big-step operational semantics can be sensibly 
derived.

Recall that we defined $f\klplus g = [T\inl\comp f, T\inr\comp g]$.
\begin{definition}[Strong Separation]\label{def:strong-sep}
A separated abstract HO-GSOS law $(\rho^\val,\rho^\com,\dl)$ is \emph{strongly separated}
if the following diagram commutes:
\begin{equation}\label{eq:str_sep}
\begin{tikzcd}[column sep=-4em, row sep=normal]
&\Theta(X\times D(X,Y),X\times TY,X)
	\ar[dl,"\theta"']
	\ar[dr,"\theta"]&
\\
\Sigma_\com(X\times D(X,Y) + X\times TY,X)
	\ar[dd,"\iso"']
& &
\Sigma_\com(X\times D(X,Y) + X\times TY,X)
	\dar["{\Sigma_\com(\eta\comp\fst\klplus\snd,\inl)}"]
\\
& &
\Sigma_\com(T(X+Y),X+Y)
	\dar["\dl"]
\\
\Sigma_\com(X\times (D(X,Y)+TY),X)
	\ar[dr,"\rho^\com"']
& &
T\Sigma_\com(X+Y,X+Y)
	\ar[dl,"{T(\iota^\com\comp\Sigma_\com(\eta,\eta))}"]
\\
& T\Sigma^\star (X+Y) &
\end{tikzcd}
\end{equation}
\end{definition}
If $\Sigma_\com$ is a coproduct $\coprod_{i\in I}\Sigma_\com^i$, it suffices
to verify~\eqref{eq:str_sep} with $\Sigma_\com\ass\Sigma_\com^i$ for every $i$.
Intuitively, the left path of the diagram (from top to bottom) corresponds to
the general form of a rule, as represented by~$\rho^\com$, while the right path
specifies the required format. The commutativity of the diagram thus imposes a
constraint on~$\rho^\com$. Precomposition with~$\theta$ ensures that this constraint 
becomes effective only for rules with at least one premise from the computation 
part of the behaviour.

One could directly verify that strong separation holds for \xCL. It is instructive 
though to spell out~\eqref{eq:str_sep} for a larger class of examples, of which \xCL
is a member. 
\begin{remark}\label{rem:total-sep}
Let us spell out the strong separation condition in the total deterministic case, i.e.\
when $T=\Id$ and $\dl = \id$. The diagram \eqref{eq:str_sep} then simplifies as 
follows:
\begin{equation*}
\begin{tikzcd}[column sep=-4em, row sep=normal]
&\Theta(X\times D(X,Y),X\times Y,X)
	\ar[dl,"\theta"']
	\ar[dr,"\theta"]&
\\
\Sigma_\com(X\times D(X,Y) + X\times Y,X)
	\ar[d,"\iso"']
& &
\Sigma_\com(X\times D(X,Y) + X\times Y,X)
	\dar["{\Sigma_\com(\fst+\snd,\inl)}"]
\\[1ex]
\Sigma_\com(X\times (D(X,Y) + Y),X)
	\ar[dr,"\rho^\com"']
& &
\Sigma_\com(X+Y,X+Y)
	\ar[dl,"{\iota^\com\comp\Sigma_\com(\eta,\eta)}"]
\\
& \Sigma^\star (X+Y) &
\end{tikzcd}
\end{equation*}
Furthermore, let us interpret this diagram in the category of sets. Suppose 
that~$\Sigma_\com(X,Y)$ contains $F(X,Y)={X^n\times Y^m}$ as a summand for some natural 
numbers~$n$ and $m$. That is, $\Sigma_\com$ contains a computation former, say $f$,
whose~$n$ first arguments are strict and whose remaining~$m$ arguments are lazy.
As in the case of \xCL (\autoref{exa:xcl}), let $D(X,Y) = Y^X$. The restriction 
of $\rho^\com$ to~$F$ corresponds to the rules that describe the 
behaviour of terms of the form $f(x_1,\ldots,x_n,y_1,\ldots,y_m)$. The strong separation
condition requires the following: if the rule has at least one premise of the form 
$x_k\to x_k'$ then the conclusion of the rule must be of the form
\begin{align*}
f(x_1,\ldots,x_n,y_1,\ldots,y_m)\to f(x_1',\ldots,x_n',y_1,\ldots,y_m)
\end{align*}
where either $x_i\to x_i'$ occurs in the premise, or else, the premise contains a labeled 
transition for $x_i$, in which case $x_i'=x_i$.
\exend
\end{remark}
\begin{remark}
Our strong separation condition~\eqref{eq:str_sep} is a reminiscent of cool GSOS 
formats by~\mbox{\citet{Bloom95}} and~\citet{Glabbeek11} in the context 
of process algebra. These formats require that certain operations come with enough 
\emph{patience rules}, which are rules for the form
\begin{align*}
\frac{x_k\to x_k'}{f(x_1,\ldots,x_n)\to f(x_1,\ldots,x_{k-1},x_k',x_{k+1},\ldots,x_n)}
\end{align*}
This is needed to ensure that the semantics is sufficiently transparent to unlabeled 
transitions (or $\tau$-transitions in op.cit.). Originally, this enabled congruence properties of weak notions of process 
equivalence (such as weak bisimilarity). In our context, similar condition is needed to ensure 
that a small-step semantics can have a sensible a big-step semantics counterpart.
\end{remark}
\begin{example}
It is easy to see that \xCL satisfies strong separation: in view of \autoref{rem:total-sep},
the only relevant operation is application and the only relevant rule is
\begin{align*}
\infer{ts\to t's}{t\to t'}
\end{align*} 
Clearly, it has the requisite form.\exend
\end{example}
\begin{example}
Revisiting \autoref{exa:non-sep}, note that the strong separation condition is 
violated by the third rule.
As we argued, for the present small-step semantics we cannot define big-step semantics satisfying~\eqref{eq:sem-eq}. Thus,
in this case, strong separation effectively rules out an undesired example. \exend
\end{example}
As we see latter, the equivalence between the small-step and the big-step semantics
requires an $\omega$-continuous monad $\BBT$, while the identity monad is not $\omega$-continuous (Kleisli hom-sets do not have least elements).
We thus may need to adjoin a (separated) abstract higher-order GSOS law over a 
monad $\BBT$ to a given (separated) abstract higher-order GSOS law over the identity monad.
\begin{proposition}\label{pro:transfer}
Given families of morphisms
\begin{align*}
\rho_{X}^\val\c&\Sigma_\val X \to D(X,\Sigma^\star X),
\\*
\rho_{X,Y}^\com\c&\Sigma_\com(X\times (D(X,Y)+Y),X)\to \Sigma^\star (X+Y),
\end{align*}
natural in $Y$ and dinatural in $X$, a strong monad $\BBT$ and a distributive law $\dl_{X,Y}\c\Sigma_\com(TX,Y)\to T\Sigma_\com(X,Y)$, 
let $\hat\rho^\com$ be the family of morphisms, whose components are the compositions 
\begin{equation*}
\begin{tikzcd}[column sep=4ex,row sep=4ex]
\Sigma_\com(X\times B(X,Y),X)
	\ar[rr,"{\Sigma_\com(\id\times(\eta\klplus\id),\id)}"]
& &[-6ex]
\Sigma_\com(X\times T(D(X,Y)+Y),X)
	\ar[dll,"{\Sigma_\com(\tau,\id)}"',bend right=4,pos=.8]\\
\Sigma_\com(T(X\times (D(X,Y)+Y)),X)
	\rar[r,"\dl"]
& T\Sigma_\com(X\times (D(X,Y)+Y),X)
	\rar["T\rho^\com"] &
T\Sigma^\star (X+Y)
\end{tikzcd}
\end{equation*}
Suppose that the morphisms
\begin{align*}
\Theta(TX,TY,Z)\xto{\theta} \Sigma_\com(TX+TY,Z)\xto{\Sigma_\com(\id\klplus\id,\id)} \Sigma_\com(T(X+Y),Z)\xto{\dl} T\Sigma_\com(X+Y,Z)
\end{align*}
factor through $T\theta\c T\Theta(X,Y,Z)\to T\Sigma_\com(X+Y,Z)$.
Then if $(\rho^\val,\rho^\com,\id)$ is a strongly separated abstract higher-order GSOS law, then 
so is $(\rho^\val,\hat\rho^\com,\dl)$ .
\end{proposition}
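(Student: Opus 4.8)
The plan is to verify, first, that $(\rho^\val,\hat\rho^\com,\dl)$ is a separated abstract higher-order GSOS law in the sense of \autoref{def:separated}, and then that it satisfies the strong separation condition~\eqref{eq:str_sep}. The first point is essentially automatic: $\rho^\val$ and $\dl$ are supplied by the hypotheses with exactly the required types, $\hat\rho^\com$ is by construction a family of morphisms of the shape~\eqref{eq:separated2} for the behaviour functor $B(X,Y)=TD(X,Y)+TY$, and its naturality in~$Y$ and dinaturality in~$X$ follow from the corresponding properties of $\rho^\com$, naturality of $\dl$ and of $\tau$, and functoriality of $\Sigma_\com$. Hence the real content is the commutation of~\eqref{eq:str_sep} for $(\rho^\val,\hat\rho^\com,\dl)$, and the idea is to reduce it to the same condition for $(\rho^\val,\rho^\com,\id)$, which holds by assumption and which, by \autoref{rem:total-sep}, boils down at the objects $X\times D(X,Y)$, $X\times Y$, $X$ to the single equation $\rho^\com\comp\iso\comp\theta=\iota^\com\comp\Sigma_\com(\eta,\eta)\comp\Sigma_\com(\fst+\snd,\inl)\comp\theta$. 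The assumed factorization through $T\theta$ is precisely what allows one to ``transport this small diagram under~$T$''.

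For the left-hand path of~\eqref{eq:str_sep} applied to $(\rho^\val,\hat\rho^\com,\dl)$, I would unfold $\hat\rho^\com$ as $T\rho^\com\comp\dl\comp\Sigma_\com(\tau,\id)\comp\Sigma_\com(\id\times(\id\klplus\id),\id)$. Naturality of $\tau$ in its second argument turns the first-argument part of $\Sigma_\com(\tau,\id)\comp\Sigma_\com(\id\times(\id\klplus\id),\id)\comp\iso$ into $T\delta\comp(\tau\klplus\tau)$, where $\delta$ is the distributivity isomorphism $(X\times D(X,Y))+(X\times Y)\iso X\times(D(X,Y)+Y)$ and $\tau\klplus\tau$ abbreviates $[T\inl\comp\tau,T\inr\comp\tau]$. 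Naturality of $\dl$ in its first argument then pulls $T\Sigma_\com(\delta,\id)$ out to the left, and naturality of $\theta$ rewrites $\Sigma_\com(\tau+\tau,\id)\comp\theta$ as $\theta\comp\Theta(\tau,\tau,\id)$ (with $\theta$ now taken at $T(X\times D(X,Y))$, $T(X\times Y)$, $X$), exposing in the middle exactly the composite $\dl\comp\Sigma_\com(\id\klplus\id,\id)\comp\theta$ that the hypothesis makes factor through $T\theta$. Fixing a morphism $h$ with $T\theta\comp h=\dl\comp\Sigma_\com(\id\klplus\id,\id)\comp\theta$ and recognizing $\Sigma_\com(\delta,\id)\comp\theta$ as the composite $\iso\comp\theta$ appearing in the simplified~\eqref{eq:str_sep}, the left-hand path takes the form $T(\rho^\com\comp\iso\comp\theta)\comp h\comp\Theta(\tau,\tau,\id)$.

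For the right-hand path I would first record the identity $(T\fst\comp\tau)\klplus\snd=T(\fst+\snd)\comp(\tau\klplus\tau)$ of morphisms $(X\times TD(X,Y))+(X\times TY)\to T(X+Y)$: on the first summand $X\times TD(X,Y)$ it follows from $(\fst+\snd)\comp\inl=\inl\comp\fst$, and on the second summand $X\times TY$ it is the strength coherence $T\snd\comp\tau=\snd$. Substituting this and applying naturality of $\dl$, naturality of $\theta$, and the \emph{same} morphism $h$ exactly as before, the right-hand path of~\eqref{eq:str_sep} takes the form $T\bigl(\iota^\com\comp\Sigma_\com(\eta,\eta)\comp\Sigma_\com(\fst+\snd,\inl)\comp\theta\bigr)\comp h\comp\Theta(\tau,\tau,\id)$. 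The two paths now differ only inside the outermost $T(-)$, and there they agree by the assumed commutation of the simplified~\eqref{eq:str_sep} for $(\rho^\val,\rho^\com,\id)$ recalled in the first paragraph; this establishes~\eqref{eq:str_sep} for $(\rho^\val,\hat\rho^\com,\dl)$ and completes the proof.

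The main difficulty I anticipate is organizational rather than conceptual. One must invoke the factorization hypothesis at the correct instance, namely after precomposition with $\Theta(\tau,\tau,\id)\c\Theta(X\times TD(X,Y),X\times TY,X)\to\Theta(T(X\times D(X,Y)),T(X\times Y),X)$, and, crucially, use one and the same factoring morphism $h$ along both paths, so that the concluding appeal to the small diagram is legitimate even though $h$ need not be unique. A secondary nuisance is untangling the interplay of $\tau$, the copairing $\klplus$ into~$T$, and $\dl$ in the identity $(T\fst\comp\tau)\klplus\snd=T(\fst+\snd)\comp(\tau\klplus\tau)$ and in the first-argument rewriting above; once these and the relevant naturality squares are laid out, the reduction to $(\rho^\val,\rho^\com,\id)$ is mechanical.
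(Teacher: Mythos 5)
Your proposal is correct and follows essentially the same route as the paper's proof, which presents the identical argument as a single large commuting diagram: precompose with $\Theta(\tau,\tau,\id)$, use naturality of $\theta$ and $\dl$ together with the decomposition $\Sigma_\com(T\fst\comp\tau\,\klplus\snd,\inl)=\Sigma_\com(T(\fst+\snd),\inl)\comp\Sigma_\com(\id\klplus\id,\id)\comp\Sigma_\com(\tau+\tau,\id)$, factor the middle composite through $T\theta$, and conclude by applying $T$ to the strong-separation square of $(\rho^\val,\rho^\com,\id)$. Your explicit insistence on using one and the same factoring morphism $h$ along both paths is exactly what the paper's single vertical arrow into $T\Theta(X\times D(X,Y),X\times Y,X)$ encodes.
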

A typical case for using \autoref{pro:transfer} is extending a deterministic law 
to a nondeterministic one, using the powerset monad. 
\begin{remark}\label{ref:transfer}
As in \autoref{rem:poly-theta}, consider $\Sigma_\com(X,Y) = X^n \times Y^m$, 
now on the category of sets. Let us choose $\Theta(X,Y,Z)\subseteq\Sigma_\com(X+Y,Z)$ to be $\{(\bar v,\bar z)\in (X+Y)^n\times Z^m\mid \exists i.\,\exists y\in Y.\,v_i = \inr y \}$
(so, $\theta$ is a set inclusion). Let $\BBT$ be the powerset functor $\PSet$, and assume 
the standard distributive law~\cite{HasuoJacobsEtAl07} $\dl_{X,Y}((A_i)_{i\in\{1,\ldots,n\}},\bar{y})=\{(\bar{x},\bar{y})\mid x_i\in A_i\}$.
Then, to show that $\dl \comp \Sigma_\com(\id \klplus \id , \id) \comp \theta$
factors through~$\PSet\theta$ is to show that for any $\bar V\in (\PSet X+\PSet Y)^n$ and $\bar z\in Z^m$
such that at least one $V_i$ is of the form $\inr A$, if $(\bar v,\bar z')\in (\dl \comp \Sigma_\com(\id \klplus \id , \id))(\bar V,\bar z)$
then also $v_i$ is of the form $\inr y$. Let us show it by contradiction: suppose that 
$(\bar v,\bar z')\in (\dl \comp \Sigma_\com(\id \klplus \id , \id))(\bar V,\bar z)$ but $v_i=\inl x$ for some $x\in X$.
Then for some %
$(\bar W,\bar z')\in (\Sigma_\com(\id \klplus \id , \id))(\bar V,\bar z)$, $W_i$ contains $\inl x$,
contradicting to the assumption that~$V_i=\inr A$.
\end{remark}

\section{Abstract Big-Step SOS}%

The natural transformations~\eqref{eq:separated1} and~\eqref{eq:separated2} model small-step 
semantics, and thus involve small-step transitions $\to$ and possibly others, which are binary 
relations between programs, i.e.\ closed $\Sigma$-terms. Contrastingly, the big-step
operational semantics involves judgements of the form $t\Dar v$ where~$t$ is a program
and $v$ is a value. The expected big-step operational semantics for the extended 
combinatory logic (\autoref{exa:xcl}) is provided in~\autoref{fig:big-ski}. 

The combinatory logic example suggests three natural questions:
\begin{enumerate}
  \item How to define a general notion of big-step semantics?
  \item How to generally produce a big-step specification from a small-step specification? 
  \item How to prove the equivalence of the big-step and the small-step semantics?
\end{enumerate}
We deal with the first and the second question in this section, and with the third one in the next~one.

\begin{definition}[Abstract Big-Step SOS]\label{def:big-step}
Given two functors $\Sigma_\val\c\BC\to\BC$, $\Sigma_\com\c\BC\times\BC\to\BC$, 
a strong monad $\BBT$ on $\BC$, an \emph{abstract big-step SOS} over these data is a natural transformation 
\begin{align}\label{eq:big-step}
	\xi\c\Sigma_\com(\Sigma_\val X,X)\to T(\Sigma^\star X)
\end{align}
where $\Sigma=\Sigma_\val+\Sigma_\com\Delta$.
\end{definition}
Assuming that $T=\Id$, \autoref{def:big-step} for one thing encodes the following 
concrete big-step operational semantics rules:
\begin{align*}
\infer{g(x_1,\ldots,x_n)\Dar g(x_1,\ldots,x_n)}{}
\end{align*}
for all $n$-ary symbols $g$ from $\Sigma_\val$. These rules are fully determined by the notion of value, i.e.\ by 
the component $\Sigma_\val$ of the partitioning of $\Sigma$ into $\Sigma_\val$ and $\Sigma_\com$. The transformation~\eqref{eq:big-step}
additionally encodes the rules of the form
\begin{align}\label{eq:bs-rule}
\frac{x_1\Dar g_1(x_1^1,\ldots,x_{n_1}^1)~~\ldots~~x_k\Dar g_k(x_1^k,\ldots,x_{n_k}^k) \qquad t\Dar v}{f(x_1,\ldots,x_n)\Dar v}
\end{align}
where $f$ is a computation former whose strict arguments are precisely the first $k$,~$t$ is a term whose free variables are in $\{x_1^1,\ldots,x_{n_1}^1,\ldots, x_1^k,\ldots,x_{n_k}^k,x_{k+1},\ldots,x_n\}$,
and~$v$ is a fresh variable referring to the result of evaluation of $t$. Specifications 
build from these two types of rules determine a natural transformation~\eqref{eq:big-step}
with $T=\Id$, and hence also such transformations with arbitrary $\BBT$ by composition 
with the monad unit $\eta$.

Given an abstract higher-order separated GSOS law $(\rho^\val,\rho^\com,\dl)$, we 
define~\eqref{eq:big-step} via:
\begin{equation}\label{eq:xi-constr}
\begin{tikzcd}[column sep=20ex,row sep=5ex]
\Sigma_\com(\Sigma_\val X,X)
	\rar{\Sigma_\com(\Sigma_\val\eta,\eta)}
&
\Sigma_\com(\Sigma_\val(\Sigmas X),\Sigmas X)
	\ar[ld,"{\Sigma_\com(\brks{\iota^\val, D(\id, \mmul) \comp  \rho^\val},\id)}"',pos=.9,bend right=7,start anchor={[yshift=-3pt]west}]
\\ \Sigma_\com(\Sigmas X\times D(\Sigma^\star X,\Sigma^\star X),\Sigmas X)
	\rar["{T \nabla^{\sharp}\comp\rho^{\com\val}}"]
&
T(\Sigma^\star X)
\end{tikzcd}
\end{equation}
This is indeed the way, the rules in \autoref{fig:big-ski} are obtained from the 
rules in \autoref{fig:rules}.\sgnote{Maybe add an example.} %

Using~\eqref{eq:xi-constr} in derivations requires instantiating $X$ with $\mS$ and 
flattening~$\Sigmas(\mS)$ to $\mS$. This results in a significant simplification.
\begin{lemma}\label{lem:xi-mS}
The following diagram commutes:
\begin{equation*}
\begin{tikzcd}[column sep=1.5em, row sep=normal]
\Sigma_\com(\mSv,\mS)
	\ar[rr,"\xi"] 
	\dar["{\Sigma_\com(\brks{\iota^\val,\gamma^\val},\id)}"']
&[.5em]&
T(\Sigmas\mS)
	\dar["T\mmul"]               
\\
\Sigma_\com(\mS\times D(\mS,\mS),\mS)
	\ar[r,"\rho^{\com\val}"] 
&
T\Sigmas(\mS+\mS)
	\ar[r,"T\nabla^\klstar"]
&
T(\mS)               
\end{tikzcd}
\end{equation*}
\end{lemma}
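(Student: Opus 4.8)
The plan is to unfold $\xi$ via its definition~\eqref{eq:xi-constr} at $X\ass\mS$ and then rewrite the resulting composite, step by step, into the left--bottom leg of the diagram, using the (di)naturality of $\rho^{\com\val}$ and $\rho^\val$ together with the monad laws of the free monad $\Sigma^\star$. (Equivalently, by~\eqref{eq:op-mod-c} the left--bottom leg equals $\gamma^\com\comp\Sigma_\com(\iota^\val,\id)$, using also $\rho^{\com\val}=\rho^\com\comp\Sigma_\com(\id\times\inl,\id)$; so the claim is that applying $\xi$ at $\mS$ and flattening along $\mmul$ recovers the computation part of the operational model on computations whose strict arguments are injected values. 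This reformulation does not, however, short-circuit the calculation below.)

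Unfolding~\eqref{eq:xi-constr} at $X\ass\mS$ and using bifunctoriality of $\Sigma_\com$, the top--right leg is $T\mmul\comp T\nabla^\klstar\comp\rho^{\com\val}_{\Sigma^\star\mS,\Sigma^\star\mS}\comp\Sigma_\com(\brks{a,b},\eta)$, where $a=\iota^\val\comp\Sigma_\val\eta$ and $b=\eta\comp D(\id,\mmul_\mS)\comp\rho^\val_{\Sigma^\star\mS}\comp\Sigma_\val\eta$, so that $b$ carries the value-behaviour computed ``one level up'', at $\Sigma^\star\mS$ (hence the multiplication occurring inside $b$ is that of $\Sigma^\star$ at the object $\mS$, written $\mmul_\mS$). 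Since $\nabla^\klstar$ is the Kleisli lifting of $\nabla\c\Sigma^\star\mS+\Sigma^\star\mS\to\Sigma^\star\mS$, associativity of $\mmul$ yields $\mmul\comp\nabla^\klstar=\nabla^\klstar\comp\Sigma^\star(\mmul+\mmul)$ (at the appropriate levels), so the leg equals $T\nabla^\klstar\comp T\Sigma^\star(\mmul+\id)\comp T\Sigma^\star(\id+\mmul)\comp\rho^{\com\val}_{\Sigma^\star\mS,\Sigma^\star\mS}\comp\Sigma_\com(\brks{a,b},\eta)$, with $\mmul\c\Sigma^\star\mS\to\mS$ from now on. I then push the two inner factors past $\rho^{\com\val}$ in turn.

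Pushing $T\Sigma^\star(\id+\mmul)$ past $\rho^{\com\val}$ by \emph{naturality in the second argument} along $\mmul\c\Sigma^\star\mS\to\mS$ replaces $\rho^{\com\val}_{\Sigma^\star\mS,\Sigma^\star\mS}$ by $\rho^{\com\val}_{\Sigma^\star\mS,\mS}$ and post-composes $b$ with $D(\id,\mmul)$. The key computation here is
\[
D(\id,\mmul)\comp D(\id,\mmul_\mS)\comp\rho^\val_{\Sigma^\star\mS}\comp\Sigma_\val\eta \;=\; D(\id,\mmul\comp\Sigma^\star\mmul)\comp\rho^\val_{\Sigma^\star\mS}\comp\Sigma_\val\eta \;=\; D(\mmul,\mmul)\comp\rho^\val_\mS,
\]
the first equality by associativity of $\mmul$ (replacing $\mmul_\mS$ by $\Sigma^\star\mmul$), the second by dinaturality of $\rho^\val$ along $\mmul$ together with the unit law $\mmul\comp\eta=\id$. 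Consequently the behaviour equals $TD(\mmul,\id)\comp\gamma^\val$ with $\gamma^\val$ as in~\eqref{eq:op-mod-v}, so it factors through $TD(\mmul,\id)$ --- precisely the precomposition needed to push the remaining factor $T\Sigma^\star(\mmul+\id)$ past $\rho^{\com\val}_{\Sigma^\star\mS,\mS}$ by \emph{dinaturality in the first argument} along $\mmul$, producing $\rho^{\com\val}_{\mS,\mS}$. On the other slots the residual coherence maps collapse: the lazy-argument map becomes $\mmul\comp\eta=\id$, and the value-term map becomes $\mmul\comp\iota^\val\comp\Sigma_\val\eta=\iota^\val$, using that $\mmul$, being the structure map of the free algebra, is a $\Sigma$-algebra homomorphism and hence commutes with $\iota^\val$, together with $\mmul\comp\eta=\id$ once more. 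The result is exactly $T\nabla^\klstar\comp\rho^{\com\val}_{\mS,\mS}\comp\Sigma_\com(\brks{\iota^\val,\gamma^\val},\id)$, the left--bottom leg.

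The only genuine obstacle is the order of the two pushes. The behaviour produced by~\eqref{eq:xi-constr} uses $\mmul_\mS$, the multiplication at the object $\mS$, whereas the precomposition required for dinaturality in the first argument would instead want $\Sigma^\star\mmul$; these two morphisms $\Sigma^\star\Sigma^\star\mS\to\Sigma^\star\mS$ genuinely differ, and agree only after post-composition with $\mmul$, by associativity. Hence dinaturality in the first argument becomes applicable only \emph{after} naturality in the second argument has flattened the covariant slot of $D$ along $\mmul$; with this ordering fixed, the remainder is a routine (if somewhat lengthy) diagram chase relying on bifunctoriality of $\Sigma_\com$ and of $D$ and on the monad laws for $\Sigma^\star$.
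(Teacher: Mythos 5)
Your proof is correct and follows essentially the same route as the paper's: the paper establishes the identity by one large commutative diagram whose cells are justified by exactly the ingredients you use — dinaturality of $\rho^\val$ and (di)naturality of $\rho^{\com\val}$ along $\mmul$, the monad laws of $\Sigma^\star$, the identity $\mmul\comp\nabla^\klstar=\nabla^\klstar\comp\Sigma^\star(\mmul+\mmul)$, and $\mmul\comp\iota^\val=\iota^\val\comp\Sigma_\val\mmul$. Your closing observation about having to flatten the covariant slot of $D$ (i.e.\ replace $\mmul_{\mS}$ by $\Sigma^\star\mmul$ under a further $\mmul$) before dinaturality in the first argument applies corresponds precisely to the inner cells of the paper's diagram, so the two arguments differ only in presentation (equational chain versus diagram chase).
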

The proof of this lemma uses the (di-)naturality assumption on~\eqref{eq:separated2}.
\begin{example}
Assuming that $\BBT$ is the identity monad, for the extended combinatory logic (\autoref{exa:xcl})
we obtain the following assignments:
\begin{align*}
\xi (\tcomp(I, r)) 			=&\; r\qquad	&\xi (\tcomp(K, r))  =&\; K'(r)	&\qquad \xi (\tcomp(S, r))	=&\; S'(r)\\ 
\xi (\tcomp(K'(t), r)) 		=&\; t		&\xi (\tcomp(S'(t), r))	=&\; S''(t, r)\\
\xi (\tcomp(S''(s,t), r)) 	=&\; \makebox[0pt][l]{$\tcomp(\tcomp(s, r),\tcomp(t, r))$}\\[-8ex]
\end{align*}
\exend\end{example}
The power of abstract (higher-order) GSOS semantics is not only in that the derivation rules
can be captured by a single (di-)natural transformation~\eqref{eq:ho-gsos-law},
but also in that this transformation allows one to execute this semantics via the 
ensuing notion of operational model $\gamma$, determined by~\eqref{eq:op-mod} 
just as abstractly. Analogously, we need to define how big-step operational semantics is 
executed, i.e.\ how the process of deriving the judgements $t\Dar v$ is modelled, 
by providing a big-step counterpart of the HO-GSOS operational model. 
We seek to define a morphism
\begin{align}\label{eq:bss}
\hat\zeta\c\mS\to T(\mSv),
\end{align}
which is meant to send a term $t$ to a value $v$, possibly triggering a side-effect
(in the simplest case, partiality). In general, $\BBT$ must be $\omega$-continuous.

Let us explain briefly the role of the monad. The rule~\eqref{eq:bs-rule} involves premises that try to evaluate $x_1,\ldots,x_k$,
which are all structurally smaller than $f(x_1,\ldots,x_n)$ from the conclusion. 
Hence, any derivation w.r.t.\ to these premises is necessarily well-founded. However, 
the premise $t\Dar v$ involves $t$, which need not be structurally smaller than
$f(x_1,\ldots,x_n)$. For a given term of the form $f(t_1,\ldots,t_n)$, we thus 
cannot generally decide if $f(t_1,\ldots,t_n)\Dar v$ is derivable for some $v$.
This phenomenon is not an artefact of the format~\eqref{eq:bs-rule}, but an inherent 
feature of the big-step semantics.
Consider for example the terms $\Omega_{k} = (I^k(SII))(I^k(SII))$ of \xCL (where $U^k$ refers
to the term $U(\ldots U(U U)\ldots)$ with $U$ repeated $k$ times). We have
\begin{align*}
\Omega_k\to^\star (SII)(I^k(SII))
\to (S'(I)I)(I^k(SII))\to (S''(I,I))(I^k(SII))\to %
\Omega_{k+1}. 
\end{align*}
Even though every $\Omega_k$ has an outgoing unlabeled transition, no value
can be reached from any of the $\Omega_k$. In the big-step style this means that 
$\Omega_k\Dar v$ is not derivable for any $v$. Let us illustrate that with the 
following derivation fragment:
\begin{align*}
\infer{\Omega_{k}\Dar v}{\infer{I^k(SII)\Dar S''(I,I)}{\infer{\vdots}{\infer{SII\Dar S''(I,I)}{\infer{SI\Dar S'(I)}{\infer{S\Dar S}{}}}}}~~&~~\infer{\Omega_{k+1}\Dar v}{?}}
\end{align*}
The rightmost premise of the rule would require a finite derivation whose size could 
not be smaller than that for the original goal $\Omega_{k}\Dar v$; hence, neither derivation 
can exist, i.e.\ $\Omega_{k}\Dar v$ is not derivable. %

Assuming that $\BBT$ is $\omega$-continuous and an $\omega$-continues distributive 
law~\eqref{eq:separated3}, we introduce~\eqref{eq:bss} as the least fixpoint 
\begin{align}\label{eq:bs-xi}
	\hat\zeta = \mu f.\, [\eta,f^\klstar\comp T\mu\comp\xi^\klstar\comp\dl\comp\Sigma_\com(f,\id)]\comp\iota^\mone
\end{align} 
where the composition $f^\klstar\comp T\mu\comp\xi^\klstar\comp\dl\comp\Sigma_\com(f,\id)$ spells out as follows:
\begin{equation*}
\begin{tikzcd}[column sep=5em, row sep=normal]
\Sigma_\com(\mS,\mS)
	\rar["{\Sigma_\com(f,\id)}"] &
\Sigma_\com(T\mSv,\mS)
	\rar["\dl"] &
T\Sigma_\com(\mSv,\mS)
	\ar[dll,"\xi^\klstar"',bend right=4,pos=.8]
\\
T(\Sigmas\mS)
	\rar["T\mu",pos=.6] 
&
T(\mS) 
	\rar["f^\klstar"]
&
T(\mSv)
\end{tikzcd}
\end{equation*}
Let $\zeta = \hat\zeta\comp\iota^\com\c\mSc\to T(\mSv)$. By definition, $\hat\zeta=[\hat\zeta\comp\iota^\val,\hat\zeta\comp\iota^\com]\comp\iota^\mone=[\eta,\zeta]\comp\iota^\mone$.

\section{Equivalence of Small-Step and Big-Step, Abstractly}%

In this section we assume that the monad $\BBT$ is $\omega$-continuous. 
We then abstractly define multi-step transitions $t\to^\star v$ from computations to values as the least fixpoint
\begin{align}\label{eq:multi}
\beta = (T\iota^\mone\comp\gamma^\com)^\istar =\mu f.\, [\eta,f]^\klstar\comp T\iota^\mone\comp\gamma^\com \c\mSc\to T(\mSv).
\end{align}
In other words, $\beta$ is the least solution of the equation $\beta = [\eta,\beta]^\klstar\comp T\iota^\mone\comp\gamma^\com$. 
Let moreover $\hat\beta = [\eta,\beta]\comp\iota^\mone\c\mS\to T(\mSv)$.

Suppose that $D(X,Y) = Y^X$ and that $\Sigma$ is some algebraic signature on~$\Set$. 
Let $f$ be some $(n+m)$-ary computation former with precisely~$n$ strict first arguments.
Suppose that $t_1\to^\star v_1,\ldots,t_n\to^\star v_n$ and that ${f(v_1,\ldots,v_n,t_{n+1},\ldots,t_m)\to t\to^\star v}$ 
for some values $v_1,\ldots,v_n,v$. We then expect that $f(t_1,\ldots,t_{n+m})\to^\star v$.
The following lemma captures this fact abstractly.
\begin{lemma}\label{lem:beta-delta}
Let $(\rho^\val,\rho^\com,\dl)$ be a strongly separated abstract higher-order GSOS law,
such that the law $\dl$ is $\omega$-continuous. 
Then 
\begin{align}\label{eq:beta-ineq}
\hat\beta^\klstar\comp T\nabla^\klstar\comp(\rho^{\com\val})^\klstar\comp\dl\comp \Sigma_\com(T\brks{\iota^\val,\gamma^\val}\comp\hat\beta,\id)\appr\beta.
\end{align}
\end{lemma}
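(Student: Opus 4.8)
The plan is to exploit that $\beta$ in~\eqref{eq:multi} is a least fixpoint, hence the supremum of its Kleene chain. Put $\beta_0=\bot$ and $\beta_{n+1}=[\eta,\beta_n]^\klstar\comp T\iota^\mone\comp\gamma^\com=\hat\beta_n^\klstar\comp\gamma^\com$ with $\hat\beta_n=[\eta,\beta_n]\comp\iota^\mone$; then $\beta=\bigjoin_n\beta_n$, by $\omega$-continuity of copairing also $\hat\beta=\bigjoin_n\hat\beta_n$, while $\beta=\hat\beta^\klstar\comp\gamma^\com$ and $\hat\beta_{n+1}=[\eta,\hat\beta_n^\klstar\comp\gamma^\com]\comp\iota^\mone$. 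The left-hand side of~\eqref{eq:beta-ineq} contains $\hat\beta$ twice: once inside $\Sigma_\com(T\brks{\iota^\val,\gamma^\val}\comp\argument,\id)$ and once as the leading Kleisli lifting. Since $\dl$ is $\omega$-continuous, $g\mapsto\dl\comp\Sigma_\com(g,\id)$ is $\omega$-continuous, and together with $\omega$-continuity of strength, of Kleisli composition, and of Kleisli lifting in $\BBT$ this lets us distribute both occurrences over the joins. Writing $L_n$ for the morphism obtained from the left-hand side by replacing both copies of $\hat\beta$ with $\hat\beta_n$, we get that the left-hand side equals $\bigjoin_n L_n$, so it suffices to prove $L_n\appr\beta$ for every $n$, which we do by induction on $n$.

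Throughout, we split $\mSc=\Sigma_\com(\mS,\mS)=\Sigma_\com(\mSv+\mSc,\mS)$ along the coproduct $[\Sigma_\com(\inl,\id),\theta]$ of \autoref{ass:compl} (concretely \autoref{rem:poly-theta} in the polynomial case) into the \emph{value-headed} part $\Sigma_\com(\mSv,\mS)$, where all strict arguments are already values, and the complementary part $\Theta(\mSv,\mSc,\mS)$, where some strict argument is a computation. On the value-headed part $\hat\beta_n$ restricts to $\eta$ on the strict positions, so $T\brks{\iota^\val,\gamma^\val}\comp\hat\beta_n$ becomes $\eta\comp\brks{\iota^\val,\gamma^\val}$, the unit law of $\dl$ collapses $\dl\comp\Sigma_\com(\eta\comp\brks{\iota^\val,\gamma^\val},\id)$ to $\eta\comp\Sigma_\com(\brks{\iota^\val,\gamma^\val},\id)$, and $(\rho^{\com\val})^\klstar$ absorbs the $\eta$; comparing the defining diagram~\eqref{eq:op-mod-c} of $\gamma^\com$ with~\eqref{eq:separated21} yields $\gamma^\com\comp\Sigma_\com(\iota^\val,\id)=T\nabla^\klstar\comp\rho^{\com\val}\comp\Sigma_\com(\brks{\iota^\val,\gamma^\val},\id)$, whence $L_n$ restricted to the value-headed part equals $\hat\beta_n^\klstar\comp\gamma^\com\comp\Sigma_\com(\iota^\val,\id)\appr\hat\beta^\klstar\comp\gamma^\com\comp\Sigma_\com(\iota^\val,\id)\appr\beta$ by monotonicity of Kleisli lifting and $\beta=\hat\beta^\klstar\comp\gamma^\com$. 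In the base case $n=0$ it remains to treat the $\Theta$-part, where $\hat\beta_0=[\eta,\bot]\comp\iota^\mone$ sends the offending strict computation argument to $\bot$; using $\bot$-preservation of strength and of $\dl\comp\Sigma_\com(\argument,\id)$ together with \autoref{ass:compl} (analogously to the factorisation hypothesis used in \autoref{pro:transfer}), $L_0$ restricted to the $\Theta$-part collapses to $\bot\appr\beta$, and hence $L_0\appr\beta$.

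For the inductive step we unfold $\hat\beta_{n+1}$: it is $\eta$ on $\mSv$ and ``one $\gamma^\com$-step followed by $\hat\beta_n$'' on $\mSc$. The value-headed part is handled exactly as above, now with $\hat\beta_{n+1}$. On the $\Theta$-part we invoke strong separation~\eqref{eq:str_sep}, which identifies $\gamma^\com$, on a computation-former applied to arguments whose behaviours are a mixture of labelled (value) and unlabelled (computation) ones, with the morphism that replaces each strict computation argument by its one-step reduct and keeps the strict value arguments fixed. This is precisely what is needed to commute ``evaluate a strict argument with $\hat\beta_{n+1}$, i.e.\ take one $\gamma^\com$-step and then $\hat\beta_n$, and only then fire the rule for $f$'' past ``take one $\gamma^\com$-step of the whole term and then recurse'': modulo a diagram chase, $L_{n+1}$ on the $\Theta$-part is bounded above by $\hat\beta^\klstar\comp T\iota^\mone\comp\gamma^\com$ post-composed with data of the form $L_n$, so the induction hypothesis and $\beta=\hat\beta^\klstar\comp\gamma^\com$ deliver $L_{n+1}\appr\beta$. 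The crux — and the step I expect to be the main obstacle — is exactly this commutation on the $\Theta$-part: turning the informal statement that ``evaluating the strict arguments to completion and then performing one big-step transition is subsumed by the multi-step semantics'' into a rigorous chase that simultaneously respects strong separation~\eqref{eq:str_sep}, the coherence of $\dl$ with the strength $\tau$ and with the coproduct injections, the (di)naturality of $\rho^\com$ and $\rho^{\com\val}$, and the interleaving of the monadic effects generated while the several strict arguments are evaluated in turn. Keeping every inequality pointing the same way through this chase is what forces the result to be only $\appr$ rather than an equality — and indeed equality genuinely fails when a strict argument admits an infinite reduction, as the terms $\Omega_k$ illustrate.
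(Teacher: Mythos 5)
Your overall strategy --- approximate $\hat\beta$ by its Kleene chain $\hat\beta_n$, induct on $n$, and split $\mSc$ along \autoref{ass:compl} into the value-headed part $\Sigma_\com(\mSv,\mS)$ and the $\Theta$-part --- is the same as the paper's in outline (the paper approximates only the \emph{inner} occurrence of $\hat\beta$, which is slightly cleaner, but replacing both occurrences is also sound by continuity), and your treatment of the value-headed part is correct. But there are two gaps. The smaller one is in the base case: you claim that on the $\Theta$-part $L_0$ collapses to $\bot$ via ``$\bot$-preservation of $\dl\comp\Sigma_\com(\argument,\id)$''. No such strictness of $\dl$ is assumed: the enrichment is by \emph{nonstrict} $\omega$-continuous maps, and the only strictness axioms concern strength and Kleisli postcomposition; $\omega$-continuity of $\dl$ says nothing about $\bot$. (The claim happens to hold for $\PSet$, but not in general.) This is repairable --- bound $\bot$ above by the actual computation behaviour $\brks{\iota^\com,\inr\comp\gamma^\com}$ and use monotonicity, after which the whole expression is dominated by $\hat\beta^\klstar\comp T\nabla^\klstar\comp\rho^\com\comp\Sigma_\com(\brks{\id,\gamma},\id)=\hat\beta^\klstar\comp\gamma^\com=\beta$, uniformly over both summands.

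The serious gap is the inductive step. Everything you say about the $\Theta$-part reduces to the single commutation $\beta^\klstar\comp\dl\comp\Sigma_\com(\hat\gamma^\com,\id)=\beta$ (your ``commute evaluating a strict argument past one $\gamma^\com$-step of the whole term''), and you leave exactly this step as ``modulo a diagram chase'', flagging it yourself as the main obstacle. That chase is the mathematical content of the lemma: on the $\Theta$-part it requires pasting the strong-separation diagram~\eqref{eq:str_sep} with the auxiliary identity $T\fst\comp\tau\comp\brks{\iota^\val,\gamma^\val}=\eta\comp\iota^\val$ (a small lemma about strength that you never isolate, but without which the top leg of~\eqref{eq:str_sep} cannot be matched against $\Sigma_\com(\hat\gamma^\com,\id)\comp\Sigma_\com(\iota,\id)\comp\theta$), together with the unit and multiplication laws of $\dl$, naturality of $\theta$, and the defining diagram of $\gamma^\com$, before the composite can be recognized as $[\eta,\beta]^\klstar\comp T\iota^\mone\comp\gamma^\com\comp\Sigma_\com(\iota,\id)\comp\theta=\beta\comp\Sigma_\com(\iota,\id)\comp\theta$. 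Once that identity is in hand, the inductive step is a short computation: unfold $\hat\beta_{n+1}=\hat\beta_n^\klstar\comp\hat\gamma^\com$, push $\hat\gamma^\com$ out through $\dl$ using its multiplication law, apply the induction hypothesis, then apply the commutation. As written, your proof asserts the strategy but omits the one step where the strong-separation hypothesis actually has to do its work, so it is incomplete at its core.
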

\begin{proof}[Proof Sketch]
The key step is proving that the following diagram commutes:
\begin{equation}
\begin{tikzcd}[column sep=normal, row sep=normal]
\mSc
	\rar["\beta"]
	\dar["{\Sigma_\com(\hat\gamma^\com,\id)}"'] 
&
T(\mSv)
\\               
\Sigma_\com(T\mS,\mS)
	\rar["\dl"] 
&
T(\mSc)
	\uar["\beta^\klstar"']
\end{tikzcd}
\end{equation}
To that end we use the fact that, by \autoref{ass:compl}, $\mSc$ is a coproduct of $\Sigma_\com(\mSv,\mS)$ and $\Theta(\mSv,\mSc,\mS)$.
The diagram then falls into two equations:
\begin{align*}
\beta^\klstar\comp\dl\comp\Sigma_\com(\hat\gamma^\com,\id)\comp\Sigma_\com(\iota^\val,\id) =&\; \beta\comp\Sigma_\com(\iota^\val,\id),\\
\beta^\klstar\comp\dl\comp\Sigma_\com(\hat\gamma^\com,\id)\comp\Sigma_\com(\iota,\id)\comp\theta =&\; \beta\comp\Sigma_\com(\iota,\id)\comp\theta.
\end{align*}
The first is obtained by unfolding definitions and using the fact that $\dl$ is a distributive law.
The second one relies on~\eqref{eq:str_sep}. Using the 
diagram, the goal is obtained by induction, using the fact that
(by Kleene's fixpoint theorem) $\hat\beta = \bigjoin_n\hat\beta_n$ where 
$\hat\beta_0 = [\eta,\bot]\comp\iota^\mone$, $\hat\beta_{n+1} = \hat\beta_{n}^\klstar\comp\hat\gamma^\com$.
\end{proof}

Another property that we need to abstract is that $t\to s$ together with $s\Dar v$
entails $t\Dar v$.
\begin{lemma}\label{lem:big-and-small}
Let $(\rho^\val,\rho^\com,\dl)$ be an strongly separated abstract HO-GSOS with 
$\omega$-continuous~$\dl$ and let~$\xi$
be defined by~\eqref{eq:xi-constr}. 
Then $\hat\zeta^\klstar\comp\gamma^\com\appr\zeta$.
\end{lemma}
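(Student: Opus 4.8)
The plan is to split the argument into two stages. In the first, I would show that $\hat\zeta^\klstar\comp\gamma^\com$ and $\zeta$ \emph{agree} on the subobject $\Sigma_\com(\mSv,\mS)$ of $\mSc$, i.e.\ on computations all of whose strict arguments are already values. In the second, I would upgrade this to the required inequality on all of $\mSc$ by a Kleene-style fixpoint induction on $\hat\zeta$, in which \autoref{ass:compl} isolates the remaining computations --- those with at least one non-value strict argument --- and the strong separation condition \eqref{eq:str_sep} controls their one-step behaviour.

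For the first stage, precompose the defining diagram \eqref{eq:op-mod-c} of $\gamma^\com$ with $\Sigma_\com(\inl,\id)$; using \eqref{eq:gammas}, the identity $\rho^{\com\val}=\rho^\com\comp\Sigma_\com(\id\times\inl,\id)$ of \eqref{eq:separated21}, and \autoref{lem:xi-mS}, this simplifies to $\gamma^\com\comp\Sigma_\com(\iota^\val,\id)=T\mu\comp\xi$. On the other hand, reading \eqref{eq:bs-xi} as a fixpoint equation gives $\zeta=\hat\zeta\comp\iota^\com=\hat\zeta^\klstar\comp T\mu\comp\xi^\klstar\comp\dl\comp\Sigma_\com(\hat\zeta,\id)$; precomposing with $\Sigma_\com(\iota^\val,\id)$ and using $\hat\zeta\comp\iota^\val=\eta$, the unit law $\dl\comp\Sigma_\com(\eta,\id)=\eta$ of the distributive law, and the monad law $\xi^\klstar\comp\eta=\xi$, it collapses to $\zeta\comp\Sigma_\com(\iota^\val,\id)=\hat\zeta^\klstar\comp T\mu\comp\xi$. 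Hence $\hat\zeta^\klstar\comp\gamma^\com$ and $\zeta$ coincide after $\Sigma_\com(\iota^\val,\id)$.

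For the second stage, write $\hat\zeta=\bigjoin_n\hat\zeta_n$ for the ascending Kleene chain of the operator in \eqref{eq:bs-xi}, started at $[\eta,\bot]\comp\iota^\mone$ so that $\hat\zeta_n\comp\iota^\val=\eta$ for all $n$; by $\omega$-continuity of Kleisli composition it then suffices to show $\hat\zeta_n^\klstar\comp\gamma^\com\appr\zeta$ for every $n$, by induction on $n$. The base case is immediate, since $\hat\zeta_0$ sends every computation to $\bot$. For the step, decompose $\mSc$ into $\Sigma_\com(\mSv,\mS)$ and $\Theta(\mSv,\mSc,\mS)$ via \autoref{ass:compl}. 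On the first summand the claim is the first stage plus monotonicity ($\hat\zeta_n\appr\hat\zeta$). On the second, instantiate \eqref{eq:str_sep} at $X=Y=\mS$ and precompose with $\Theta(\brks{\iota^\val,\gamma^\val},\brks{\iota^\com,\gamma^\com},\id)$; using \eqref{eq:op-mod-c}, naturality of $\theta$ and $\dl$, and the form \eqref{eq:op-mod-v} of $\gamma^\val$, this rewrites $\gamma^\com$ on this summand as a \emph{patience} transformation that keeps each value strict argument untouched ($T\fst\comp\tau$ discards its labelled transition) and replaces each computation strict argument by its one-step $\gamma^\com$-reduct, reassembling an $f$-headed --- hence again a computation --- term. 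Unfolding $\hat\zeta_{n+1}$ on this reduct via \eqref{eq:bs-xi} and comparing with $\zeta=\hat\zeta^\klstar\comp T\mu\comp\xi^\klstar\comp\dl\comp\Sigma_\com(\hat\zeta,\id)$ on the original term, the value positions match verbatim, while on the computation positions the left carries $\hat\zeta_n^\klstar\comp\gamma^\com\appr\hat\zeta\comp\iota^\com=\zeta$ by the induction hypothesis; since $\dl\comp\Sigma_\com(\argument,\id)$ is $\omega$-continuous (here the hypothesis that $\dl$ is $\omega$-continuous enters) and Kleisli post-composition is $\omega$-continuous, monotonicity propagates the bound, and the two nested occurrences of $\dl$ on the left reconcile with the single one on the right by compatibility of $\dl$ with the monad multiplication. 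This yields $\hat\zeta_{n+1}^\klstar\comp\gamma^\com\appr\zeta$ on the second summand, closing the induction.

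The main obstacle is this second summand of the inductive step: extracting the exact patience form of $\gamma^\com$ from \eqref{eq:str_sep} through the required naturality bookkeeping, and then matching the ``reduce every computation argument, then evaluate'' composite --- with its two nested distributive-law applications and its strength-mediated treatment of the value arguments --- against the ``evaluate every argument'' composite that defines $\zeta$. That is where the diagram chasing concentrates and where the strong separation hypothesis does its real work; the two equalities of the first stage, the monad/distributive-law bookkeeping, and the Kleene-chain reduction are all routine.
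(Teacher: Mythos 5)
Your argument is essentially sound and reaches the result by a genuinely different induction from the paper's. The paper exploits that $\gamma^\com$ is the \emph{least} fixpoint of the map induced by \eqref{eq:op-mod-c}, writes $\gamma^\com=\bigjoin_n\gamma^\com_n$, and proves $\hat\zeta^\klstar\comp\gamma^\com_n\appr\zeta$ by induction on the small-step approximants; you instead keep $\gamma^\com$ whole and unfold the big-step side, proving $\hat\zeta_n^\klstar\comp\gamma^\com\appr\zeta$ by induction on the Kleene approximants of $\hat\zeta$. Both routes hinge on the same two ingredients: the coproduct decomposition of $\mSc$ into $\Sigma_\com(\mSv,\mS)$ and $\Theta(\mSv,\mSc,\mS)$ from \autoref{ass:compl}, and, on the $\Theta$ summand, the strong-separation identity rewriting $\gamma^\com\comp\Sigma_\com(\iota,\id)\comp\theta$ as $T\iota^\com\comp\dl\comp\Sigma_\com([\eta\comp\iota^\val,\gamma^\com],\id)\comp\theta$ (the same identity the paper extracts, via \autoref{lem:sep-fix}, in the proof of \autoref{lem:beta-delta}). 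Your first stage, showing that both $\hat\zeta^\klstar\comp\gamma^\com$ and $\zeta$ collapse to $\hat\zeta^\klstar\comp T\mmul\comp\xi$ after $\Sigma_\com(\iota^\val,\id)$, is correct and plays the role of the paper's appeal to \autoref{lem:bs-fix}. The algebra of your inductive step on the $\Theta$ summand also goes through: the reconciliation of the two nested occurrences of $\dl$ with the single one in $\zeta$ is exactly the multiplication law $(\dl\comp\Sigma_\com(g,\id))^\klstar\comp\dl=\dl\comp\Sigma_\com(g^\klstar,\id)$, after which the induction hypothesis and $\hat\zeta_n\appr\hat\zeta$ propagate through by monotonicity of $\dl$ and of Kleisli composition. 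Your route arguably makes the role of \eqref{eq:str_sep} more transparent (the reduct of a $\Theta$-term is again a computation headed by the same operator); the paper's route has the advantage that its chain $\gamma^\com_n$ starts at genuine $\bot$, so its base case really is trivial.

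The one genuine (though repairable) gap is your base case. You claim it is immediate ``since $\hat\zeta_0$ sends every computation to $\bot$'', but the composite in question is $\hat\zeta_0^\klstar\comp\gamma^\com$, and $\gamma^\com$ may well produce \emph{values}, on which $\hat\zeta_0=[\eta,\bot]\comp\iota^\mone$ acts as $\eta$, not as $\bot$. Concretely, if a computation $t$ reduces in one step to a value $v$, the base case already asserts $\eta\comp v\appr\zeta(t)$, i.e.\ that $t$ big-step evaluates to $v$ --- a nontrivial instance of the lemma, not a triviality about $\bot$. The fix is to run the same two-summand case split you use in the inductive step: on $\Sigma_\com(\mSv,\mS)$ the claim is your first-stage identity together with $\hat\zeta_0\appr\hat\zeta$, and on $\Theta(\mSv,\mSc,\mS)$ strong separation guarantees that the one-step reduct is again a computation (it lands under $T\iota^\com$), so there $\hat\zeta_0^\klstar\comp\gamma^\com$ really is $\bot$. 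With that repair the induction closes and the proof is complete.
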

\begin{proof}[Proof Sketch]
The diagram~\eqref{eq:op-mod-c} identifies
$\gamma^\com$ as the unique fixpoint of the map $f\mto T\nabla^\klstar\comp\rho^\com\comp\Sigma_\com(\brks{\id,(\gamma^\val+ f)\comp\iota^\mone},\id)$,
hence, also as the least fixpoint. Thus $\gamma^\com = \bigjoin_n \gamma^\com_n$ where $\gamma^\com_0 = \bot$ and $\gamma^\com_{n+1} = T\nabla^\klstar\comp\rho^\com\comp\Sigma_\com(\brks{\id,(\gamma^\val+\gamma_n^\com)\comp\iota^\mone},\id)$, and it suffices to show $\hat\zeta^\klstar\comp\gamma^\com_n\appr\zeta$
for all $n$. The induction base is obvious. The induction step, by \autoref{ass:compl}, we reduce to two subgoals:
\begin{align*}
\hat\zeta^\klstar\comp\gamma^\com_{n+1}\comp\Sigma_\com(\iota^\val,\id) 		\appr&\; \zeta\comp\Sigma_\com(\iota^\val,\id),\\*
\hat\zeta^\klstar\comp\gamma^\com_{n+1}\comp\Sigma_\com(\iota,\id)\comp\theta   \appr&\; \zeta\comp\Sigma_\com(\iota,\id)\comp\theta.
\end{align*}
The first one is easy to obtain by unfolding definitions. The second one is a result of a somewhat tedious 
calculation, using the induction hypothesis and the strong separation assumption.
\end{proof}

We proceed with formalizing and proving an abstract version of the 
equivalence~\eqref{eq:sem-eq} between the small-step and the big-step semantics
for strongly separated abstract HO-GSOS laws. To that end, we first rewrite the formula~\eqref{eq:bs-xi} in the setting 
when $\xi$ comes from~\eqref{eq:xi-constr}.

\begin{proposition}\label{lem:bs-fix}
Let $(\rho^\val,\rho^\com,\dl)$ be an abstract higher-order separated GSOS with $\omega$-continuous 
$\dl$ and let~$\xi$ be defined by~\eqref{eq:xi-constr}. Then 
\begin{align}\label{eq:bs-xi-fix}
	\hat\zeta = \mu f.\,[\eta, f^\klstar\comp T\nabla^\klstar\comp(\rho^{\com\val})^\klstar\comp\dl\comp \Sigma_\com(T\brks{\iota^\val,\gamma^\val}\comp f,\id)]\comp\iota^\mone.
\end{align}
\end{proposition}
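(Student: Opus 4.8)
The plan is to observe that both~\eqref{eq:bs-xi} and~\eqref{eq:bs-xi-fix} present $\hat\zeta$ as the least fixpoint $\mu f.\,\Phi(f)$ of a functional on the pointed $\omega$-cpo $\BC_{\BBT}(\mS,\mSv)$, and that in both cases $\Phi$ has the form $f\mapsto[\eta,\,f^\klstar\comp\Psi(f)]\comp\iota^\mone$ for a ``body'' $\Psi(f)\c\mSc\to T(\mS)$. Consequently it suffices to prove that the two bodies coincide, i.e.\ that for every $f\c\mS\to T(\mSv)$
\begin{align*}
T\mmul\comp\xi^\klstar\comp\dl\comp\Sigma_\com(f,\id)
\;=\;
T\nabla^\klstar\comp(\rho^{\com\val})^\klstar\comp\dl\comp\Sigma_\com(T\brks{\iota^\val,\gamma^\val}\comp f,\id);
\end{align*}
equality of the two $\mu$-expressions, and hence of $\hat\zeta$ as computed by~\eqref{eq:bs-xi} and by~\eqref{eq:bs-xi-fix}, then follows formally. (That both functionals are $\omega$-continuous, so that Kleene's fixpoint theorem applies, is guaranteed by the standing hypothesis that $\BBT$ and~$\dl$ are $\omega$-continuous.)

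To establish this identity I would rewrite the left-hand side, the only non-formal ingredient being \autoref{lem:xi-mS}. First I would use the monad law $Tg\comp k^\klstar=(Tg\comp k)^\klstar$ to absorb $T\mmul$ into the Kleisli star, obtaining $(T\mmul\comp\xi)^\klstar\comp\dl\comp\Sigma_\com(f,\id)$, and then invoke \autoref{lem:xi-mS}, instantiated at $X\ass\mS$, to replace $T\mmul\comp\xi$ by $T\nabla^\klstar\comp\rho^{\com\val}\comp\Sigma_\com(\brks{\iota^\val,\gamma^\val},\id)$. Applying the same monad law in the reverse direction extracts $T\nabla^\klstar$ back out, and $(k\comp g)^\klstar=k^\klstar\comp Tg$ (legitimate because $g=\Sigma_\com(\brks{\iota^\val,\gamma^\val},\id)$ is an ordinary morphism, not a Kleisli arrow) brings the expression to $T\nabla^\klstar\comp(\rho^{\com\val})^\klstar\comp T\Sigma_\com(\brks{\iota^\val,\gamma^\val},\id)\comp\dl\comp\Sigma_\com(f,\id)$. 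It then only remains to commute $T\Sigma_\com(\brks{\iota^\val,\gamma^\val},\id)$ past $\dl$: since $\dl$ is a distributive law between $\BBT$ and the functors $\Sigma_\com(\argument,Y)$, it is in particular natural in its first argument, so $T\Sigma_\com(\brks{\iota^\val,\gamma^\val},\id)\comp\dl=\dl\comp\Sigma_\com(T\brks{\iota^\val,\gamma^\val},\id)$; functoriality of $\Sigma_\com(\argument,\id)$ then fuses $\Sigma_\com(T\brks{\iota^\val,\gamma^\val},\id)\comp\Sigma_\com(f,\id)$ into $\Sigma_\com(T\brks{\iota^\val,\gamma^\val}\comp f,\id)$, giving exactly the right-hand side.

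I do not expect a genuine obstacle here --- the whole argument is a chain of monad laws, naturality and functoriality anchored by \autoref{lem:xi-mS}. The points deserving some care are: (i) recognising that the two displayed formulas genuinely differ, one using $T\mmul$ and the other $T\nabla^\klstar$, so that the identity holds only for the full composites, which is precisely what the Kleisli-star manipulations reconcile; and (ii) keeping track of which instance of $\dl$ is in play, namely the one with the lazy argument held at~$\mS$, so that its naturality in the strict, monad-carrying argument is available for the penultimate step.
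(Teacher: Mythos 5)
Your proposal is correct and follows essentially the same route as the paper's own proof: both reduce the claim to showing that the bodies of the two fixpoint functionals agree for each $f$, and both establish this by absorbing $T\mmul$ into the Kleisli star, invoking \autoref{lem:xi-mS}, re-extracting $T\nabla^\klstar$ and $T\Sigma_\com(\brks{\iota^\val,\gamma^\val},\id)$ via the monad laws, and then sliding the latter past $\dl$ by naturality before fusing with $\Sigma_\com(f,\id)$. The paper presents this as a three-line equational rewrite with \autoref{lem:xi-mS} as the only cited ingredient; your version merely makes the intermediate monad-law and naturality steps explicit.
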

Finally, we prove our main result.
\begin{theorem}\label{the:main}
Let $(\rho^\val,\rho^\com,\dl)$ be a strongly separated abstract higher-order  GSOS with 
$\omega$-continuous~$\dl$ and let~$\xi$
be defined by~\eqref{eq:xi-constr}. 
Then $\hat\zeta = \hat\beta$.
\end{theorem}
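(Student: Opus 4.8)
The plan is to establish the two inequalities $\hat\zeta\appr\hat\beta$ and $\hat\beta\appr\hat\zeta$ separately in the $\omega$-cpo-enriched Kleisli category of $\BBT$, and then conclude by antisymmetry. The whole argument is pure fixpoint bookkeeping: both $\hat\zeta$ and $\hat\beta$ are presented as least fixpoints of $\omega$-continuous endomaps (which exist since $\BBT$ and $\dl$ are $\omega$-continuous, so that the relevant maps are $\omega$-continuous and Kleene's theorem applies), and the two nontrivial ingredients, \autoref{lem:beta-delta} and \autoref{lem:big-and-small}, are exactly the pre-fixpoint inequalities needed to cross from one presentation to the other.

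For $\hat\zeta\appr\hat\beta$: by \autoref{lem:bs-fix}, $\hat\zeta$ is the least fixpoint of the continuous endomap $\Phi$ sending $f\c\mS\to T\mSv$ to $[\eta,\; f^\klstar\comp T\nabla^\klstar\comp(\rho^{\com\val})^\klstar\comp\dl\comp \Sigma_\com(T\brks{\iota^\val,\gamma^\val}\comp f,\id)]\comp\iota^\mone$, so it suffices to check that $\hat\beta$ is a pre-fixpoint, i.e.\ $\Phi(\hat\beta)\appr\hat\beta$. Since $\hat\beta=[\eta,\beta]\comp\iota^\mone$ with $\iota^\mone$ an isomorphism, and copairing is monotone (being $\omega$-continuous), this reduces on the value summand to $\eta\appr\eta$ and on the computation summand to $\hat\beta^\klstar\comp T\nabla^\klstar\comp(\rho^{\com\val})^\klstar\comp\dl\comp \Sigma_\com(T\brks{\iota^\val,\gamma^\val}\comp\hat\beta,\id)\appr\beta$, which is precisely \autoref{lem:beta-delta}. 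Hence $\hat\zeta=\mu\Phi\appr\hat\beta$.

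For $\hat\beta\appr\hat\zeta$: recall $\beta=(T\iota^\mone\comp\gamma^\com)^\istar$ is the least fixpoint of $f\mapsto[\eta,f]^\klstar\comp T\iota^\mone\comp\gamma^\com$, and $\hat\beta=[\eta,\beta]\comp\iota^\mone$, $\hat\zeta=[\eta,\zeta]\comp\iota^\mone$ with $\zeta=\hat\zeta\comp\iota^\com$; so by monotonicity of copairing it suffices to prove $\beta\appr\zeta$, for which we check that $\zeta$ is a pre-fixpoint, i.e.\ $[\eta,\zeta]^\klstar\comp T\iota^\mone\comp\gamma^\com\appr\zeta$. Using the monad law $(g\comp h)^\klstar=g^\klstar\comp Th$ for the plain morphism $h=\iota^\mone$, the left-hand side equals $([\eta,\zeta]\comp\iota^\mone)^\klstar\comp\gamma^\com=\hat\zeta^\klstar\comp\gamma^\com$, and $\hat\zeta^\klstar\comp\gamma^\com\appr\zeta$ is exactly \autoref{lem:big-and-small}. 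Thus $\hat\beta\appr\hat\zeta$, and combining the two inequalities yields $\hat\zeta=\hat\beta$. The genuinely hard work — the inductive arguments relying on strong separation and \autoref{ass:compl} — has already been absorbed into \autoref{lem:beta-delta} and \autoref{lem:big-and-small}; the only delicate point remaining at this stage is to line up the two fixpoint presentations carefully enough that each pre-fixpoint condition collapses to one of those lemmas.
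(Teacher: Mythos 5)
Your proof is correct and follows essentially the same route as the paper's: both directions are obtained by presenting $\hat\zeta$ (via \autoref{lem:bs-fix}) and $\beta$ as least (pre-)fixpoints and verifying that $\hat\beta$, respectively $\zeta$, is a pre-fixpoint using \autoref{lem:beta-delta} and \autoref{lem:big-and-small}. The extra detail you supply (the summand-wise reduction through $\iota^\mone$ and the Kleisli-law rewriting of $[\eta,\zeta]^\klstar\comp T\iota^\mone$ to $\hat\zeta^\klstar$) is exactly what the paper leaves implicit.
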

\begin{proof}
We proceed by proving mutual inequality. 

$\hat\zeta\appr\hat\beta$. By \autoref{lem:bs-fix}, $\hat\zeta$ is the least pre-fixpoint of 
\eqref{eq:bs-xi-fix}. By \autoref{lem:beta-delta}, $\hat\beta$ is a pre-fixpoint of the 
same function. Hence, indeed, $\hat\zeta\appr\hat\beta$.

$\hat\beta\appr\hat\zeta$. The goal is equivalent to $[\eta,\beta]\comp\iota^\mone\appr[\eta,\zeta]\comp\iota^\mone$.
Proving this is equivalent to proving that $\beta\appr\zeta$, which will then be our new goal. By definition,~$\beta$ is the least pre-fixpoint of the map
$f\mto [\eta,f]^\klstar\comp T\iota^\mone\comp\gamma^\com$. Hence, it suffices to
prove that $\zeta$ is a pre-fixpoint of the same map, i.e.\ the inequality
$[\eta,\zeta]^\klstar\comp T\iota^\mone\comp\gamma^\com\appr\zeta$,~i.e.\ $\hat\zeta^\klstar\comp\gamma^\com\appr\zeta$,
which we have by \autoref{lem:big-and-small}.
\end{proof}

\section{Case Studies}\label{sec:exa}
We proceed to consider various aspects of programming languages from the perspective
of our framework, and demonstrate in detail, how it can cope with them.  
\subsection{Types}
In this section, we stick to a typed version of $\xCL$, previously dubbed $\xTCL$~\cite{GoncharovSantamariaEtAl24}. 
The transition from $\xCL$ to $\xTCL$ demonstrates the strength of our categorical modeling: from
$\Set$ as the ambient category $\BC$ we simply switch to the category of sorted sets $\Set^\Ty$,
where the sorts $\Ty$ are indeed the conventional types of typed combinatory logic~\cite{HindleySeldin08}. The remaining 
ingredients of the framework are then upgraded as expected, which we detail further below. 

Assuming a postulated set of basic sorts $\CS$, the set of types $\Ty$ is defined by the grammar:
\begin{flalign*}
&&\Ty \Coloneqq (\tau\in\CS)  \mid  \arty{\Ty}{\Ty}. && 
\end{flalign*}
More concretely, an object $A$ 
of~$\BC$ is a family of sets $(A_\tau)_{\tau \in \Ty}$ and a morphism $f \c A \to B$ in $\BC$ is a family of functions
$(f_\tau \c A_\tau \to B_\tau)_{\tau\in\Ty}$. Morphism composition $g \comp f$ is defined
as $(g_\tau \comp f_\tau)_{\tau \in \Ty}$. Note that $\BC$ is a presheaf topos over~$\Ty$ as a discrete category, 
in particular, limits and colimits in $\BC$ are computed pointwise.  
We introduce $\Sigma_\val$ and $\Sigma_\com$ for $\xTCL$ as follows:
\begin{align*}
	\Sigma_\val (X)_\tau =&\; \coprod\nolimits_{f\c\tau_1,\ldots, \tau_n \to \tau\in Ops} \prod\nolimits_{i=1}^n X_{\tau_i}
	&&\Sigma_\com (X,Y)_\tau= \{\tcomp_{\tau',\tau}\}\times (X_{\arty{\tau'}{\tau}} \times Y_{\tau'}),
\end{align*}
where $Ops=\{I_{\tau_1}, K_{\tau_1,\tau_2}, K'_{\tau_1,\tau_2}, S_{\tau_1,\tau_2,\tau_3}, S'_{\tau_1,\tau_2,\tau_3}, S''_{\tau_1,\tau_2,\tau_3}\}$
is the set of operations, constituting $\Sigma_\val$, and typed in the expected way. For example, the type of~$K'_{\tau_1,\tau_2}$ is $\tau_1 \to (\arty{\tau_2}{\tau_1})$. 
Analogously, we view~$\Sigma_\com$ as a signature of one binary operation $\tcomp_{\tau_1,\tau_2}$ of type $(\arty{\tau_1}{\tau_2}), \tau_1 \to \tau_2$.

To define the behaviour functor $B$, let $\BBT$ be the identity monad, and let $D$ be as follows: 
\begin{align*}
D(X,Y)_{\tau}=\{\}\text{\quad for $\tau\in\CS$}, \qquad D(X,Y)_{\arty{\tau_1}{\tau_2}}=Y_{\tau_2}^{X_{\tau_1}}
\end{align*}
and then $B(X,Y)_\tau =\fpc_\tau$ if $\tau\in\CS$ and $B(X,Y)_{\arty{\tau_1}{\tau_2}} = Y_{\arty{\tau_1}{\tau_2}}+ Y_{\tau_2}^{X_{\tau_1}}$. %
Next, we define $\rho^\val$ and $\rho^\com$ essentially like in \autoref{exa:xcl}, modulo 
adding the typing information. For example, the clause for~$S''$ in the definition of 
$\rho^\val_{\tau}\c\Sigma_\val (X)_{\tau} \to D(X,\Sigmas X)_{\tau}$ becomes
\begin{align*}
	\rho^\val_{\tau}(S''_{\tau_1,\tau_2,\tau}(t,s))(r)=&\; 
	\tcomp_{\tau_2,\tau} (\tcomp_{\tau_1,\arty{\tau_2}{\tau}} (t,r) ,\tcomp_{\tau_1,\tau_2}(s,r)).
\end{align*}
The definition of $\rho^\com_{\tau}\c\Sigma_\com (X \times B(X,Y), X)_{\tau} \to \Sigma^\star(X+Y)_{\tau}$ is as follows:
\begin{align*}
	\rho^\com_\tau (\tcomp_{\tau_1,\tau_2} ((s, s'),t)) =&\; \tcomp_{\tau_1,\tau_2} (s',t) 	&&\text{if}\; s' \in Y_{\arty{\tau_1}{\tau_2}}\\*
	\rho^\com_\tau (\tcomp_{\tau_1,\tau_2} ((s, f),t)) =&\; f(t) 														&&\text{if}\; f \in Y_{\tau_2}^{X_{\tau_1}}
\end{align*}
Finally, we obtain $\xi_\tau \c \Sigma_\com (\Sigma_\val X , X)_\tau \to \Sigma^\star (X)_\tau$ by instantiating
the general definition~\eqref{eq:xi-constr}. For example, we have
	$\xi_\tau(\tcomp_{\tau_1,\tau} (S''_{\tau_1,\tau_2,\tau}(t,s),r\c\tau_1))=
	\tcomp_{\tau_2,\tau}(\tcomp_{\tau_1,\arty{\tau_2}{\tau}}(t,r),\tcomp_{\tau_1,\tau_2}(s,r))$.
\subsection{Recursion and Conditionals}

In this section, we complement $\xTCL$ with recursion and control
in the style of PCF~\cite{Mitchell96}.
We assume the sort $\bool \in \mathcal{S}$ of Booleans. Then we add the following operators to the signature:
\begin{align*}
	\fpc_\tau \c& \arty{(\arty{\tau}{\tau})}{\tau}&
	 \casec_\tau \c& \bool , \tau , \tau \to \tau&
	\true,
	\false \c& \bool
\end{align*}
More concretely, this amounts to defining $\Sigma_\val$ with the same formula 
from $Ops$, which is now the set $\{I_{\tau_1}, K_{\tau_1,\tau_2}, K'_{\tau_1,\tau_2}, S_{\tau_1,\tau_2,\tau_3}, S'_{\tau_1,\tau_2,\tau_3}, S''_{\tau_1,\tau_2,\tau_3}, \fpc_{\tau_1}\}$,
and to defining $\Sigma_\com$ as follows:
\begin{gather*}
\Sigma_\com (X,Y)_\tau= \{\tcomp_{\tau',\tau}\}\times (X_{\arty{\tau'}{\tau}} \times Y_{\tau'}) \;+\; \{\casec_{\tau}\}\times (X_{\bool} \times Y_\tau \times Y_\tau).
\end{gather*}
The left summand corresponds to the application operator $\tcomp_{\tau_1,\tau_2}$, as before, while the 
right summand corresponds to the conditional statement $\casec_\tau$.
This indicates that the first argument of~$\casec_\tau$ is strict and the others 
are lazy.
The new operators are subject to the following small-step specification:
\begin{gather*}
\frac*{\fpc_\tau \xto{t}t (\fpc_\tau t)}{}\qquad\quad
\frac*{ \true \dar_\true}{} \qquad\quad
\frac*{ \false \dar_\false}{} \\[1ex]
\frac*{ \casec_\tau (b,s,t) \to  \casec_\tau (b',s,t)}{b \to b'}\qquad\quad
\frac*{ \casec_\tau (b,s,t) \to s }{b \dar_\true}\qquad\quad
\frac*{ \casec_\tau (b,s,t) \to t }{b \dar_\false}
\end{gather*}

\medskip\noindent
where we use a new type of judgements $b\dar_\true$ and $b\dar_\false$, indicating 
that $b=\true$ and $b=\false$ correspondingly.
Note that the if-then-else statement suffices to program the standard logical 
connectives: $\neg b = \casec_{\bool}(b,\false,\true)$, $b\land b' = \casec_{\bool}(b,b',\false)$, 
$b\lor b' = \casec_{\bool}(b,\true,b')$.
To address the above rules we modify~$D$ as follows:
\begin{align*}
	D(X,Y)_{\arty{\tau_1}{\tau_2}}=Y_{\tau_2}^{X_{\tau_1}} && D(X,Y)_\tau = \begin{cases}\{\true,\false\} & \text{~~if~~}\tau=\bool\\ \{\}&\text{~~if~~}\tau\in\CS\setminus\{\bool\}\end{cases}
\end{align*}
Finally, we complement the definitions of $\rho^\val$ and $\rho^\com$ with the clauses
\begin{align*}
	\rho^\val_\tau(\fpc_\tau) (t) =&\; \tcomp_{\tau,\tau} (t , \tcomp_{\arty{\tau}{\tau}, \tau}(\fpc_\tau,t))\!\!&
	\rho^\com_\tau(\casec_\tau((b,b') , s , t))=&\;  \casec_\tau(b' , s , t)
	\\*
	\rho^\val_\bool(\true) =&\; \true&
	\rho^\com_\tau(\casec_\tau((b,\true) , s , t))=&\; s
	\\*
	\rho^\val_\bool(\false) =&\; \false&
	\rho^\com_\tau(\casec_\tau((b,\false) , s , t))=&\; t
\end{align*}
It is easy to see by~\autoref{rem:total-sep} that $(\rho^\val,\rho^\com)$ 
constitute a strongly separated abstract HO-GSOS law.
By applying~\eqref{eq:xi-constr} we obtain the following new clauses for $\xi$:
\begin{align*}
	\xi_\tau (\tcomp_{\arty{\tau}{\tau},\tau} (\fpc_\tau , t \c \arty{\tau}{\tau})) =&\; \tcomp_{\tau,\tau} (t , \tcomp_{\arty{\tau}{\tau}, \tau}(\fpc_\tau,t))\\
	\xi_\tau (\casec_\tau (\true,s,t))  =&\; s\\
	\xi_\tau (\casec_\tau (\false,s,t)) =&\; t
\end{align*}
Equivalently: the following big-step rules:
\begin{gather*}
	\infer{s t \Dar v}{s \Dar \fpc_\tau & t  (\fpc_\tau t) \Dar v}\qquad\quad
	\infer{\casec_\tau(b,s,t) \Dar v}{b \Dar \true & s \Dar v} \qquad\quad
	\infer{ \casec_\tau(b,s,t) \Dar v}{b \Dar \false & t \Dar v}
\end{gather*}

\subsection{Nondeterminism and Parallelism} \label{sec:non-d}
We proceed to illustrate how our modeling can cope with nondeterminism and parallelism
in higher-order setting by extending $\xCL$ suitably. The standard binary \textit{erratic choice} 
operator $\ndet$ \cite{DeliguoroPiperno95} can easily be specified with the small-step 
rules:
\begin{align}\label{eq:err-choice}
	\infer{t \ndet s \to t}{} \qquad\qquad
	\infer{t \ndet s \to s}{}
\end{align}
This amounts to updating the separable abstract HO-GSOS law in \autoref{exa:xcl}
as follows: $\Sigma_\com(X,Y) = \{\tcomp\}\times (X\times Y)+\{\oplus\}\times(Y\times Y)$, 
$\BBT$ is the powerset monad $\PSet$, and the clauses for $\rho^\com$ are as follows:
\begin{align*}
	\rho^\com(\tcomp((t,S),s))	=&\; \{\tcomp(t', s) \mid t' \in S\cap Y\} \cup \{f(s) \mid f\c X\to Y \in S\}\\
		\rho^\com(t \ndet s)=&\; \{t,s\}
\end{align*}
The rules in~\eqref{eq:err-choice} thus jointly correspond to a single clause 
for $\rho^\com$, stating that $\{t,s\}$ is the set of direct successors of $t\ndet s$.
The construction~\eqref{eq:xi-constr} produces an abstract big-step SOS, which can be 
represented with the rules:
\begin{align*}
	\infer{t \ndet s \Dar v}{t \Dar v}\qquad\qquad
	\infer{t \ndet s \Dar v}{s \Dar v}
\end{align*}
In order to specify the behaviour of a parallel composition operator we orient to
\emph{concurrent $\lambda$-calculus}~\cite{Dezani-CiancagliniEtAl98}. This is an extension of 
the (call-by-name) $\lambda$-calculus with the erratic choice operator, as above, with a (fair)
parallel composition operator $\paral$ and with a certain version of call-by-value evaluation.
The latter feature is independent of the others, and was added to express the well-known
\emph{parallel-or} operator. We will presently not include call-by-value evaluation, 
but consider it in dedicated~\autoref{sec:cbv} instead. Intuitively, in $s\paral t$,
$s$ and $t$ are reduced simultaneously if possible, and otherwise the one 
that can be reduced is reduced, while keeping the other one intact; eventually,
neither~$s$ nor~$t$ can be reduced, meaning that $s\paral t$ is morally a value. 
We capture the latter situation by arranging a reduction from $s\paral t$ to $s\paralv t$
where $\paralv$ is a new value former.

\begin{gather*}
	\frac{t \to t' \quad s \to s'}{t \paral s \to t' \paral s'}	\qquad
	\frac{t \xto{r} t' \quad s \to s'}{t \paral s \to t \paral s'}\qquad
	\frac{t \to t' \quad s \xto{r} s'}{t \paral s \to t' \paral s}\\[1ex]
	\frac{t \xto{r} t' \quad s \xto{r} s'}{t \paral s \to t \paralv s}\qquad
	\frac{}{t\paralv s \xto{r} tr\paral sr}
\end{gather*}
These rules yield the following new clauses for $\rho^\com$:
\begin{align*}
\rho^\com((t,T)\paral (s,S))	=&\; \{t'\paral s'\mid t'\in T,s'\in S\}&
\rho^\com((t,f)\paral (s,S))	=&\; \{t\paral s'\mid s'\in S\}\\
\rho^\com((t,T)\paral (s,g))	=&\; \{t'\paral s\mid t'\in T\}&
\rho^\com((t,f)\paral (s,g))	=&\; \{t\paralv s\}
\end{align*}
and the following new clause for $\rho^\val$:
\begin{align*}
\rho^\val(t\paralv s)(r) = (t\cdot r)\paral (s\cdot r).
\end{align*}
The new big-step rules are as follows:
\begin{gather*}
	\infer{s t\; \Dar   w}{s\; \Dar \; v \paralv u\qquad v t\paral ut\;\Dar w}\qquad\qquad 
	\infer{s \paral t \;\Dar v \paralv u}{s \Dar v \qquad t \Dar u & } 
\end{gather*}

\subsection{Call-by-Value}\label{sec:cbv}
Rules in \autoref{fig:rules} can be easily modified to capture the familiar call-by-value 
evaluation strategy:\sgnote{Maybe use letter codes to refer to values.}
\begin{gather*}
\frac{}{S\xto{t}S'(t)}
\qquad
\frac{}{S'(t)\xto{s}S''(t,s)}
\qquad
\frac{}{S''(t,s)\xto{r}(tr)(sr)}
\\[2ex]
\frac{}{K\xto{t}K'(t)}
\qquad
\frac{}{K'(t)\xto{s}t}
\qquad
\frac{}{I\xto{t}t}
\\[2ex]
\frac{t\to t'}{t s\to t's}~~ (a)
\qquad
\frac{t\xto{r} t'\quad s\to s'}{t s\to ts'}~~ (b)
\qquad
\frac{t\xto{s} t'\quad s\xto{r} s'}{t s\to t'}~~ (c) %
\end{gather*}
In this specification, we reduce $ts$ by first reducing $t$ (a), unless it expects an
input, i.e.\ $t$ is a value, in which case we reduce $s$ (b), unless also $s$ is a value; in the remaining 
case we evaluate $t$ on $s$ (c).
To turn this specification into a separated abstract HO-GSOS, 
we need to decide which arguments of the application operator are lazy 
and which are strict. In general, the behaviour of~$ts$ depends both on the 
behaviour of $t$ and of~$s$, hence both arguments must be strict. However, as explained
in \autoref{rem:total-sep}, for the above specification to be strongly separated, it must contain 
the rule 
\begin{align*}
\frac{t \to t' \quad s \to s'}{t \app s \to t' \app s'}~~(a1)
\end{align*}
which is not the case. We can amend the original specification by replacing 
(a) by (a1) together with
\begin{align*}
\frac{t \to t' \quad s \xto{r} s'}{t \app s \to t' \app s}~~(a2)
\end{align*}
This results in a specification to which we can apply \autoref{the:main} and 
obtain the equivalence of the small-step semantics with the following big-step semantics:
\begin{equation}\label{eq:bs-cbv}
\begin{gathered}
	\infer{v \Dar v}{} \qquad\qquad
	\infer{st\Dar v}{s \Dar I & t \Dar v} \qquad\qquad
	\infer{st\Dar v}{s \Dar K \quad t\Dar w\quad  K'(w)\Dar v}\\[2ex]
	\infer{st\Dar v}{s \Dar S \quad t\Dar w\quad S'(w)\Dar v}  \qquad\qquad
	\infer{st\Dar v}{s \Dar K'(r) \quad t\Dar w\quad r \Dar v} \\[2ex]
	\infer{st\Dar v}{s \Dar S'(r) \quad t\Dar w \quad S''(r,w)\Dar v}\qquad\quad
	\infer{st\Dar v}{s \Dar S''(r,q) \quad t\Dar w\quad (rw)(qw)\Dar v}
\end{gathered}
\end{equation}
Although, the variant of the small-step specification with (a) replaced by (a1) 
and (a2) makes perfect sense, the resulting multi-step relation $\to^\star$ is not the 
standard one: e.g.\ originally $(I\app I) \app (I \app I) \to^\star I \app (II)$, but 
in the modified system $(I\app I) \app (I \app I) \to I \app I$, and hence $(I\app I) \app (I \app I) \not\to^\star I \app (II)$.  
To model the standard multi-step relation, we replace the rules (a)--(c) with the 
following ones:
\begin{gather*}
	\frac{s \to s'}{s \app t\to s' \app t}\qquad
	\frac{s\xto{r} s'}{s\app t\to s \lcomp t}\qquad
	\frac{t \to t'}{s \lcomp t \to s \lcomp t'}\\[2ex]
	\frac{t\xto{r} t'}{s \lcomp t \to s \fcomp t}\qquad
	\frac{s \to s'}{s \fcomp t \to s' \fcomp t}\qquad
	\frac{s \xto{t} s'}{s\fcomp t \to s'}
\end{gather*}
where $\lcomp$ and $\fcomp$ are new auxiliary composition operators. The idea is to treat 
the original application operator (juxtaposition) and $\fcomp$ as strict in the first argument 
and lazy in the second, and $\lcomp$ as lazy in the first argument and strict in the second.
The original multi-step behaviour is thus recovered,
strong separation holds and the equivalent big-step specification takes the form:
\begin{gather*}
\infer{v \Dar v}{} \quad\qquad
\infer{st \Dar v}{s \Dar w & w\lcomp t \Dar v} \quad\qquad
\infer{s\lcomp t \Dar v}{t \Dar w & s\fcomp w \Dar v} \quad\qquad
\infer{s\fcomp t\Dar v}{s\Dar I & t \Dar v} \quad\qquad
\infer{s\fcomp t\Dar v}{s\Dar K \quad K'(t)\Dar v}
\\[2ex]
\infer{s\fcomp t\Dar v}{s\Dar S \quad S'(t)\Dar v}\quad\quad ~~ 
\infer{s\fcomp t\Dar v}{s\Dar K'(r) \quad r \Dar v}\quad\quad ~~
\infer{s\fcomp t\Dar v}{s\Dar S'(r) \quad S''(r,t)\Dar v}\quad\quad ~~
\infer{s\fcomp t\Dar v}{s\Dar S''(r,q) \quad (rt)(qt)\Dar v}
\end{gather*}
Observe that the only way to obtain $st\Dar v$ is by using the following derivation
\begin{align*}
\infer{st\Dar v}{\infer{s \Dar w}{\vdots} \quad &\quad  \infer{w\lcomp t \Dar v}{\infer{t \Dar u}{} \quad &\quad \infer{w\fcomp u \Dar v}{\vdots}}}
\end{align*}
where $\vdots$ refers to undischarged premises. By inspecting the six 
rules whose conclusions match $w\fcomp u \Dar v$, it is easy to see that for terms 
in the original signature (without~$\lcomp$ and $\fcomp$) we obtain the same big-step 
semantics as in~\eqref{eq:bs-cbv}. In summary,~\eqref{eq:bs-cbv} is equivalent to
the two variants of the small-step semantics: the one with the rules (a), (b), (c) and 
the one with the rules (a1), (a2), (b), (c). Hence, they are equivalent to each other
in the sense that $t\to^\star v$ in one system iff $t\to^\star v$ in the other system
(for any value $v$). 

Our use of the auxiliary operators $\lcomp$ and $\fcomp$ tactically helped us to 
satisfy our strong separation condition. However, such operations emerge independently
in the context of \emph{pretty-big-step semantics}~\cite{Chargueraud13}. In fact, 
one can say that what we 
generally call abstract big-step SOS, specialized to the pretty-big-step semantics in this case.

\section[Variable Binders and the λ-calculus]{Variable Binders and the $\lambda$-calculus}\label{sec:lam}
So far, we stuck to the extended combinatory logic as the core higher-order vehicle  
for language features of interest. Modeling languages with binders adds a
certain technical overhead to the problem of modeling the small-step and 
the big-step semantics categorically. In fact, the type profile in \eqref{eq:separated1} 
turns out to be too restrictive. In this section we show how to remedy this and 
to cover the $\lambda$-calculus. Similar issue arose recently in categorical 
treatment of logical predicates for languages with binders~\cite{GoncharovSantamariaEtAl24},
and required a refinement of abstract HO-GSOS, called $\lambda$-laws.

\subsection{Separable HO-GSOS for Languages with Binders}\label{sec:bind-gsos}

We begin by equipping our ambient category $\BC$ with a closed monoidal structure 
$(\BC, \Pt, \mon,\monto)$, meant to abstract from the internal mechanisms of variable
management~\cite{Fiore08}. Monoidal closedness yields a natural isomorphism 
$(\argument)^{\flat}\c\BC(X \mon Y, Z)\to \BC(X, Y\monto Z)$, which we will need below.

In this setting, we need the following technical
\begin{definition}[Pointed Strength~\cite{Fiore08,HirschowitzHirschowitzEtAl22}]\label{def:pointed-str}
Let us denote by $j$ the forgetful functor from $V/\BC$ to $\BC$, sending a 
morphism $V\to X$ to $X$. Objects of $V/\BC$ are called \emph{(V-)pointed objects} of $\BC$. 
  
A \emph{($\Pt$-)pointed strength} on an endofunctor $F\c\BC\to\BC$ is a family of morphisms
$\mathrm{st}_{X,Y}\c FX\mon jY\to F(X\mon jY)$, natural in $X\in \BC$ and
$Y\in \Pt /\BC$, such that the following diagrams commute:
\begin{equation*}
\!\!\!\begin{tikzcd}[column sep=0em, row sep=normal]
 & FX\\
 FX\mon\Pt  & & F(X\mon\Pt) 
 \arrow[iso, rotate=180, from=1-2, to=2-1]
 \arrow["\mathrm{st}_{X,\Pt}", from=2-1, to=2-3]
 \arrow[iso, from=1-2, to=2-3] 
\end{tikzcd}
\quad\!\!
\begin{tikzcd}[column sep=3.5em, row sep=normal]
  (F X\mon jY)\mon jZ & F(X\mon jY)\mon jZ & F((X\mon jY)\mon jZ)\\
  F X\mon (jY\mon jZ) & & F(X\mon (jY\mon jZ))
  \arrow[from=1-1,to=2-1, iso] 
  \arrow[from=1-1,to=1-2, "\mathrm{st}_{X,Y}\mon jZ"]
  \arrow[from=1-2,to=1-3, "\mathrm{st}_{X\mon jY,Z}"] 
  \arrow[from=2-1,to=2-3, "\mathrm{st}_{X,Y\mon Z}"]
  \arrow[from=1-3,to=2-3, iso]
\end{tikzcd}
\end{equation*}
(eliding the names of the canonical isomorphisms).
\end{definition}
We then assume that $\Sigma_\val$ has the form $V+\Sigma_\val'$ and that the functor 
$\Sigma_\val'+\Sigma_\com\Delta$ is $V$-strong. This guarantees that 
the initial algebra $\mS$ (if it exists) is a monoid~\cite[Theorem 4.1]{FiorePlotkinEtAl99},
whose multiplication we denote as $\oname{subst}\c\mS\mon\mS\to\mS$.

We modify our framework of separable abstract HO-GSOS laws slightly by replacing
\eqref{eq:separated1} with
\begin{align}\label{eq:separated4}
\rho_{X,Y}^\val\c&\Sigma_\val (jX\times (jX\monto Y)) \to D(jX,\Sigma^\star (jX+Y)),
\end{align}
    dinatural in $X \in \Pt/\BC$ and natural in $Y\in | \BC|$.
    
    We then redefine 
    $\gamma^\val$ as follows: 
\begin{align*}
\mSv
	\xto{\Sigma_\val\brks{\id,\oname{subst}^{\flat}}}
\Sigma_\val(\mS\times(\mS\monto\mS))
	\xto{\rho^\val}
D(\mS,\Sigma^\star(\mS+\mS))
	\xto{D(\id,\nabla^\klstar)} 
D(\mS,\mS)
\end{align*}
and, as before, obtain $\gamma^\com$ from it as the unique such morphism that 
the diagram
\begin{equation*} %
\begin{tikzcd}[column sep=5ex, row sep=4ex]
\Sigma_\com(\mSv+\mSc,\mS)
	\dar["{\Sigma_\com(\brks{\iota,\gamma^\val+\gamma^\com},\id)}"']
	\ar[rr,"{\Sigma_\com(\iota,\id)}"] 
& 
&[2ex]
\mSc
	\dar["\gamma^\com"]	
\\
\Sigma_\com(\mS\times {B(\mS,\mS)},\mS)
  \rar["{\rho^\com }"] 
&
T\Sigmas(\mS+\mS)
	\rar["T\nabla^\klstar"]
&
T\mS
\end{tikzcd}
\end{equation*}
commutes. Using the new operational model $(\gamma^\val,\gamma^\com)$, we obtain the 
multistep semantics $\beta$ using the same formula~\eqref{eq:multi}.

\subsection{Strongly Separable HO-GSOS for the $\lambda$-calculus}
Let us spell out, how the above applies to the (call-by-name) $\lambda$-calculus.

\myparagraph{Category}
Following \mbox{\citet{FiorePlotkinEtAl99}}, let $\fset$ be the category of finite cardinals, the skeleton of the
category of finite sets. The objects of $\fset$ are the sets $n=\{0,\dots,n-1\}$ with $n \in \Nat$, and 
morphisms~${n\to m}$ are functions. Let $\BC$ be the category of presheaves $\Set^\fset$.
Intuitively, the objects of~$\Set^\fset$ are families~$(X(n))_{n\in\Nat}$ of sets
where each $X(n)$ is meant to parametrically depend on $0,\ldots,n-1$ as free variables. 

The process of substituting terms for variables can be treated at the abstract level of presheaves as follows. For every presheaf $Y \in \Set^\fset$, there is a functor $- \mathbin{\mon} Y \c \Set^\fset \to \Set^\fset$ given by
\begin{equation}\label{eq:tensor}
 (X \mathbin{\mon} Y)(m) = \int^{n \in \fset} X(n) \times (Y(m))^{n} = \Bigl(\coprod_{n\in \fset} X(n)\times (Y(m))^n\Bigr)\Big/_{\displaystyle\approx}
\end{equation}
where $\approx$ is the equivalence relation generated by all pairs
\[
  (x, y_0,\ldots, y_{n-1}) \approx (x', y_0',\ldots, y_{k-1}')
\]
such that $(x, y_0,\ldots, y_{n-1})\in X(n)\times (Y(m))^n$, $(x', y_0',\ldots, y_{k-1}')\in X(k)\times Y(m)^k$ and there
exists $r\c n\to k$ satisfying $x'=X(r)(x)$ and $y_{i}=y'_{r(i)}$ for
$i=0,\ldots, n-1$.  An equivalence class in~\eqref{eq:tensor}
can be thought of as a term $x\in X(n)$ with $n$ free variables,
together with~$n$ terms $y_0,\ldots,y_{n-1}\in Y(m)$ to be substituted
for them. The equivalence relation then captures the idea that the outcome of the substitution should be invariant under renamings that reflect equalities among $y_0,\ldots,y_{n-1}$; for instance, if $y_i=y_j$ and $r\c n\to n$ is the bijective renaming that swaps $i$ and $j$, then substituting $y_0,\ldots, y_{n-1}$ for $0,\ldots,n-1$ in the term $X(r)(x)$ should produce the same outcome. Varying~$Y$,
one obtains the \emph{substitution tensor}
\[\argument \mon \argument \c \Set^\fset \times \Set^\fset \to \Set^\fset,\] 
which makes $\Set^\fset$ into a (non-symmetric) monoidal category with the following 
\emph{presheaf of variables}~$V$ as the unit object: $V(n) = \{0,\ldots,n-1\}$. 

Monoidal closedness of $\Set^\fset$ is witnessed by the fact that for every $X\in |\Set^\fset|$ the functor
$- \mathbin{\mon} X \c \Set^\fset \to \Set^\fset$ has a right adjoint given by
\begin{equation*}
  X\monto \argument \c \Set^\fset \to \Set^\fset, \qquad (X\monto Y)(n) = \int_{m \in \fset} [(X(m))^{n}, Y(m)] = \oname{Nat}(X^{n} , Y).
\end{equation*}
An element of $(X\monto Y)(n)$, viz.\ a natural in $m\in |\fset|$ family
of maps $(X(m))^n\to Y(m)$, is thought of as describing the substitution of $X$-terms in $m$ variables for the $n$ variables of some fixed ambient term, resulting in  a $Y$-term in $m$ variables.

\myparagraph{Syntax}
The following \emph{context extension} endofunctor $\delta \c \Set^\fset \to\Set^\fset$
is used for including the lambda abstraction to the signature. On objects: $\delta X (n) = X(n + 1)$ and $\delta X (h) = X(h + \id_{1})$, 
on morphisms: $h \c X \to Y$, $(\delta h)_{n} = (h_{n + 1}\c X(n + 1)\to  Y(n + 1))$.
Informally, the elements of $\delta X(n)$ are terms arising by binding the last variable is a term with $n+1$ free variables.

We define the value and the computation signatures as follows:
\begin{align*}
\Sigma_\val X = V \times{\textstyle\coprod}_{k\in\Nat} X^k +\delta X && \Sigma_\com(X,Y) = X \times Y  
\end{align*}
As in the case of extended combinatory logic, $\Sigma_\com$ covers application, 
while $\Sigma_\val$ covers two types of values: expressions of the form $(\ldots(x t_1)\ldots) t_k$ which we write as $x*(t_1,\ldots,t_k)$
  with $x$ being a variable, and $\lambda$-abstractions.
The initial object $\mu\Sigma$ is known to model the terms of $\lambda$-calculus
(with free variables)~\cite{FiorePlotkinEtAl99}, although our case is slightly 
different in that we distinguish generic applications $st$ from those that 
are representable as $x*(t_1,\ldots,t_k)$. Modulo that, $(\mS)(n)$ are the 
$\lambda$-terms over $0,\ldots,n-1$ as free variables, under $\alpha$-equivalence.
Our requirement of $V$-pointed strength for $(V \times\coprod_{k\in\Nat\smin\{0\}} X^k+\delta X) + X\times X$ means essentially 
that substitution is definable by structural 
recursion, and the result that $\mS$ is a monoid, means that $\lambda$-terms are 
closed under substitution. In terms of the presentation~\eqref{eq:tensor}, for every 
$(t,s_0,\ldots,s_{n-1})\in \mS(n)\times (\mS(m))^n$, $\oname{subst}([(t,s_0,\ldots,s_{n-1})]_{\approx})$
coherently computes the substitution $t[s_0/0,\ldots,s_{n-1}/n-1]$.

\myparagraph{Semantics} We define the behaviour functor $D$ via $D(X,Y) = V \times\coprod_{k\in\Nat} Y^k+Y^X$. Let us spell out~\eqref{eq:separated4} and \eqref{eq:separated2}.
The components of the first transformation are obtained by cotupling 
\begin{equation*}
\begin{tikzcd}[column sep=normal, row sep=normal]
V \times\coprod_{k\in\Nat} (jX\times (jX\monto Y))^k 
\dar{\inl\comp(\id\times \coprod_{k\in\Nat} (\eta\comp\inl\comp\fst)^k)}\\
V \times\coprod_{k\in\Nat} (\Sigma^\star (jX+Y))^k+(\Sigma^\star (jX+Y))^{jX}              
\end{tikzcd}
\end{equation*}
and
\begin{equation*}
\begin{tikzcd}[column sep=20ex,row sep=5ex]
\delta(jX\times (jX\monto Y))
	\rar{\delta\snd}
&
\delta(jX\monto Y)
	\ar[ld,"{\curry(\kappa_{X,Y})}"',pos=.8,bend right=7,start anchor={[yshift=-3pt]west}]
\\ Y^{jX}
	\rar["{\inr\comp(\eta\comp\inr)^{jX}}"]
&
V \times\coprod_{k\in\Nat} (\Sigma^\star (jX+Y))^k + (\Sigma^\star (jX+Y))^{jX}
\end{tikzcd}
\end{equation*}
where $(\kappa_{jX,Y})_n\c\oname{Nat}(jX^{n+1},Y)\times jX(n)\to Y(n)$ sends $\alpha\in\oname{Nat}(jX^{n+1},Y)$ and $t\in jX(n)$ 
to $\alpha_n(v_0,\ldots,v_{n-1},t)$ and $v_i = X_n(i)$ (recall that $X\in V/\BC$, i.e.\ 
is a natural transformation from $V$ to~$jX$). Thus, $\rho^\val$ encodes 
small-step rules:
\begin{gather*}
\frac{}{x*(t_1,\ldots,t_k)\dar x*(t_1,\ldots,t_k)}\quad (x\in V(n), t_1,\ldots,t_k\in X(n))\\[1.5ex]
\frac{t[s/n]=t'}{\lambda n.\,t\xto{s} t'}\quad (t\in jX(n), s\in X(n-1))
\end{gather*}
The notation $\dar$ refers to the left summand of $D$, and the first rule
associates this behaviour with variables and variable applications. The notation $\xto{s} t$ refers to the right 
summand of $D$, and the second rule describes the behaviour of $\lambda$-abstraction.
Unlike the rule for values in \xCL (\autoref{fig:rules}), this rule requires 
a premise, which motivates the inclusion of $jX\monto Y$ in \eqref{eq:separated4}.
This premise provides the information of how substitutions act on $t$, specifically
how the substitution $t[s/n]$ is computed. The latter is, in fact, an abbreviation 
for $t[0/0,\ldots,n-1/n-1,s/n]$, which makes it clear why $X$ must be an object of $V/\BC$ 
and not of $\BC$. Indeed, when applying $[i/i]$ to $t$, only the right $i$ is defined for 
$t\in X(n)$ with $X\in|\BC|$, but the left one requires a transformation $V\to X$
reifying variables into~$X$.

The transformation \eqref{eq:separated2} instantiates as
$\rho_{X,Y}^\com\c(X\times ((V \times{\textstyle\coprod}_{k\in\Nat} Y^k+Y^X)+Y))\times X\to\Sigma^\star (X+Y)$,
(assuming that the monad $\BBT$ is identity). Now that binding operators 
are involved, the intended transformation is the one that captures the following rules in a 
straightforward manner:
\begin{align*}
\frac{t\dar x*(t_1,\ldots,t_k)}{ts\dar x*(t_1,\ldots,t_k,t)}\qquad\quad
\frac{t\xto{s} t'}{ts\to t'}\qquad\quad
\frac{t\to t'}{ts\to t's}
\end{align*}
By \autoref{rem:total-sep}, we obtain a strongly separated abstract HO-GSOS law.
The simplest way to extend it to a law w.r.t.\ an $\omega$-continuous monad $\BBT$
is to take $T$ to be the pointwise powerset functor $\PSet_\star$, i.e.\ 
$\PSet_\star(X) = \PSet\comp X$, and apply \autoref{pro:transfer}. The powerset 
monad is an $\omega$-continuous monad on~$\Set$, with a standard $\omega$-continuous 
distributive law over $(\argument)\times Y$. Since all relevant constructions 
are pointwise, we obtain that $\PSet_\star$ is $\omega$-continuous, we obtain 
an $\omega$-continuous distributive law~\eqref{eq:separated3}, and also inherit 
the strong separation condition per \autoref{pro:transfer}: we have it for $\PSet$
and $(X,Y)\mto X\times Y$ in $\Set$ by \autoref{ref:transfer}, and extend it to 
$\PSet_\star$ and $\Sigma_\com$ pointwise.

\subsection[Big-Step SOS and the λ-calculus]{Big-Step SOS and the $\lambda$-calculus}
The modification of the big-step SOS law~\eqref{eq:big-step} for our present 
setup is:
\begin{align*}
	\xi\c\Sigma_\com(\Sigma_\val (X\times(\Sigmas X\monto Y)), X)\to T\Sigma^\star (X + Y).
\end{align*}
The operational model~\eqref{eq:bss} is computed using the following 
modification of~\eqref{eq:bs-xi}:
\begin{align*}
	\hat\zeta = \mu f.\, [\eta,f^\klstar\comp T\nabla^\klstar\comp\xi^\klstar\comp T\Sigma_\com(\Sigma_\val\brks{\id,(\oname{subst}\comp(\id\times\mu))^{\flat}},\id) \comp\dl\comp\Sigma_\com(f,\id)]\comp\iota^\mone.
\end{align*} 
The key change is the inclusion of the morphism $\brks{\id,(\oname{subst}\comp(\id\times\mu))^{\flat}}\c\mS\to\mS\times(\Sigmas\mS\monto\mS)$,
which creates the new expected part of the input for $\xi$ -- informally, given 
a term $t$ from $\mS(n)$, we render $\Sigmas\mS\monto\mS$ as the space of 
substitution actions $t[-/0,\ldots,-/n-1]$, awaiting terms from~$\Sigmas\mS$.

The updated translation~\eqref{eq:xi-constr} of a separated abstract
HO-GSOS law to a big-step SOS law is as follows:
\begin{equation}\label{eq:xi-constr-lam}
\!\!\!\begin{tikzcd}[column sep=12ex,row sep=5ex]
\Sigma_\com(\Sigma_\val (X\times(\Sigmas X\monto Y)), X)
	\rar{\Sigma_\com(\Sigma_\val(\eta\times\id),\eta)}
&
\Sigma_\com(\Sigma_\val(\Sigmas X\times(\Sigmas X\monto Y)),\Sigmas X)
	\ar[ld,"{\Sigma_\com(\brks{\iota^\val\comp\Sigma_\val\fst,\, \rho^\val},\id)}"',pos=.9,bend right=7,start anchor={[yshift=-3pt]west}]
\\ \Sigma_\com(\Sigmas X\times D(\Sigma^\star X,\Sigma^\star (\Sigmas X + Y)),\Sigmas X)
	\rar["{T[\Sigmas\inl,[\Sigmas\inl,\eta\comp\inr]^\klstar]^\klstar\comp\rho^{\com\val}}"]
&
T\Sigma^\star(X+Y)
\end{tikzcd}
\end{equation}
Here, we treat $\Sigmas X$ as an object of $V/\BC$, which is justified since, by assumption, $V$ is a coproduct summand of $\Sigmas X$. This allows us to invoke $\rho^\val$.

Again, if $\xi$ is obtained from $\rho^\val$ and $\rho^\com$ by~\eqref{eq:xi-constr-lam},
the above definition of~$\hat\zeta$ via $\xi$ reduces to a definition via $\rho^\val$
and $\rho^\com$. The following is the 
analogue of \autoref{lem:bs-fix}.
\begin{proposition}\label{lem:bs-fix-lam}
Let $\rho^\val$ and $\rho^\com,\dl$ be as in \autoref{sec:bind-gsos}, 
and let $\dl$ be an $\omega$-continuous distributive law. Let~$\xi$ be defined by~\eqref{eq:xi-constr-lam}. Then $\hat\zeta$ satisfies the 
equation~\eqref{eq:bs-xi-fix}.
\end{proposition}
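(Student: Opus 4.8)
The approach is to follow the proof of \autoref{lem:bs-fix}, replacing \autoref{lem:xi-mS} by its counterpart in the present setting with binders. First I would observe that both $\hat\zeta$, as defined in \autoref{sec:bind-gsos} by the modified version of~\eqref{eq:bs-xi}, and the candidate right-hand side of~\eqref{eq:bs-xi-fix} are least fixpoints of endofunctions on the pointed $\omega$-cpo $\BC(\mS,T\mSv)$; this uses $\omega$-continuity of $\BBT$ and of $\dl$, which makes $\dl$, $\xi^\klstar$, $T\nabla^\klstar$, $(\argument)^\klstar$ and composition all $\omega$-continuous. Hence it suffices to show that these two endofunctions coincide. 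Since both have the shape $g\mapsto[\eta,h(g)]\comp\iota^\mone$ with the $\mSv$-summand pinned to $\eta$, this reduces to an equality of $\mSc$-components: for every $f\c\mS\to T\mSv$,
\begin{align*}
f^\klstar\comp T\nabla^\klstar\comp\xi^\klstar\comp T\Sigma_\com\bigl(\Sigma_\val\brks{\id,(\oname{subst}\comp(\id\times\mmul))^{\flat}},\id\bigr)\comp\dl\comp\Sigma_\com(f,\id)\;=\;f^\klstar\comp T\nabla^\klstar\comp(\rho^{\com\val})^\klstar\comp\dl\comp\Sigma_\com\bigl(T\brks{\iota^\val,\gamma^\val}\comp f,\id\bigr),
\end{align*}
where $\gamma^\val$ is the value component of the operational model as redefined in \autoref{sec:bind-gsos}.

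The heart of the argument is the binder analogue of \autoref{lem:xi-mS}: for $\xi$ defined by~\eqref{eq:xi-constr-lam}, the composite obtained by instantiating~\eqref{eq:xi-constr-lam} at $X=Y=\mS$, precomposing with $\Sigma_\com(\Sigma_\val\brks{\id,(\oname{subst}\comp(\id\times\mmul))^{\flat}},\id)$ and postcomposing with $T\nabla^\klstar$, equals $T\nabla^\klstar\comp\rho^{\com\val}\comp\Sigma_\com(\brks{\iota^\val,\gamma^\val},\id)$ on $\Sigma_\com(\mSv,\mS)$. Granting this, the displayed equality follows by the same manipulations as in the proof of \autoref{lem:bs-fix}: functoriality of $T$ and $\Sigma_\com$, the Kleisli-lifting laws, and the distributive-law axioms for $\dl$.

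Establishing this binder analogue is the step I expect to be the main obstacle. It is a diagram chase combining three ingredients. First, the monad laws for $\Sigma^\star$ -- in particular $\mmul\comp\Sigma^\star\eta=\id$ -- which collapse the nested free-algebra layers introduced by $\eta$ and $\iota^\val$ in the two upper legs of~\eqref{eq:xi-constr-lam}. Second, the triangle identities of the monoidal-closed adjunction $(\argument)^{\flat}$ together with functoriality of $\argument\monto\argument$: one checks that feeding the substitution action $\brks{\id,(\oname{subst}\comp(\id\times\mmul))^{\flat}}$ through the first two arrows of~\eqref{eq:xi-constr-lam} and into $\rho^\val$ computes exactly the composite $\eta\comp D(\id,\nabla^\klstar)\comp\rho^\val\comp\Sigma_\val\brks{\id,\oname{subst}^{\flat}}$ that defines the redefined $\gamma^\val$; the key point is that $\mmul\c\Sigmas\mS\to\mS$ is the $\Sigma^\star$-algebra fold, so that $\oname{subst}\comp(\id\mon\mmul)$ agrees with $\oname{subst}$ once its first argument has been flattened to $\mS$. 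Third, dinaturality of $\rho^\val$ in its $V/\BC$-pointed argument and naturality in $Y$, plus naturality of $\rho^{\com\val}$, which reconcile the reindexing maps occurring in~\eqref{eq:xi-constr-lam} -- the $\Sigma_\val\fst$ projection, $\Sigmas\inl$, and $[\Sigmas\inl,[\Sigmas\inl,\eta\comp\inr]^\klstar]^\klstar$ -- with the single $\nabla^\klstar$ in the target of \autoref{lem:xi-mS}. Conceptually this is modest, but the bookkeeping with the substitution tensor $\mon$, its right adjoint $\monto$, and the indexing over $V/\BC$ is where the proof is most delicate, which is why I single it out as the crux.
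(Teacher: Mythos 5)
Your proposal matches the paper's proof: the paper likewise reduces the claim to the identity $T\nabla^\klstar\comp\xi\comp\Sigma_\com(\Sigma_\val\brks{\id,(\oname{subst}\comp(\id\times\mmul))^{\flat}},\id)=T\nabla^\klstar\comp\rho^{\com\val}\comp\Sigma_\com(\brks{\iota^\val,\gamma^\val},\id)$ -- exactly your binder analogue of \autoref{lem:xi-mS} -- and then reruns the computation of \autoref{lem:bs-fix} using naturality of $\dl$ and the Kleisli laws. The crux you single out is established in the paper by one large commuting diagram whose cells are precisely your three ingredients (monad laws for $\Sigma^\star$, the adjunction identities relating $\oname{subst}$, $\mmul$ and $\monto$, and (di)naturality of $\rho^\val$ and $\rho^{\com\val}$), so the approaches coincide.
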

Our main result (\autoref{the:main}) can now be reestablished (the proof relies 
on \autoref{lem:bs-fix-lam}, but otherwise remains essentially unchanged, because 
it does not depend on how $\rho^\val$ is defined).
\begin{theorem}\label{the:main-lam}
Let $\rho^\val$ and $\rho^\com,\dl$ be as in \autoref{sec:bind-gsos}, 
and let $\dl$ be an $\omega$-continuous distributive law. Let~$\xi$
be defined by~\eqref{eq:xi-constr-lam}. 
Then $\hat\zeta = \hat\beta$.
\end{theorem}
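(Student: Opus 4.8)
The plan is to follow the proof of \autoref{the:main} step by step, establishing $\hat\zeta = \hat\beta$ by proving the two inequalities $\hat\zeta \appr \hat\beta$ and $\hat\beta \appr \hat\zeta$ separately. That argument rested on exactly three ingredients: (i) $\hat\zeta$ is the \emph{least} pre-fixpoint of the operator occurring in~\eqref{eq:bs-xi-fix}; (ii) $\hat\beta$ is \emph{a} pre-fixpoint of that same operator; and (iii) $\zeta$ is a pre-fixpoint of $f\mapsto[\eta,f]^\klstar\comp T\iota^\mone\comp\gamma^\com$, i.e.\ $\hat\zeta^\klstar\comp\gamma^\com \appr \zeta$, whereas $\beta$ is by~\eqref{eq:multi} the least pre-fixpoint of that map. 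Given (i)--(iii), the inequality $\hat\zeta \appr \hat\beta$ is immediate from (i) and (ii); and $\hat\beta \appr \hat\zeta$ reduces, exactly as in the proof of \autoref{the:main}, to $\beta \appr \zeta$, which is~(iii). So the work amounts to re-establishing, in the present setting with binders, the analogue of~\eqref{eq:bs-xi-fix} --- already provided by \autoref{lem:bs-fix-lam} --- together with the analogues of \autoref{lem:beta-delta} (for~(ii)) and \autoref{lem:big-and-small} (for~(iii)).

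For these last two, the crucial point is that the modifications introduced in \autoref{sec:bind-gsos} touch only the type profile of $\rho^\val$ (now~\eqref{eq:separated4}) and, consequently, the formula defining $\gamma^\val$ (the composite routed through $\Sigma_\val\brks{\id,\oname{subst}^{\flat}}$). Everything that the proofs of \autoref{lem:beta-delta} and \autoref{lem:big-and-small} actually consume is left structurally intact: the transformation $\rho^\com$ of~\eqref{eq:separated2}, the distributive law $\dl$, \autoref{ass:compl} with its complementation $\theta$, the strong-separation square~\eqref{eq:str_sep}, the multistep semantics $\beta$ of~\eqref{eq:multi}, and the characterisation of $\gamma^\com$ by the diagram~\eqref{eq:op-mod-c}. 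Inside those proofs, $\gamma^\val$ appears only opaquely, namely as a component of $\brks{\iota^\val,\gamma^\val}$ feeding the first (mixed-variance) argument of $\Sigma_\com$, and no property of $\gamma^\val$ beyond its mere type $\mSv\to TD(\mS,\mS)$ is invoked. Consequently both lemmas, along with their Kleene-induction proofs, carry over essentially unchanged, now reading $\xi$ off~\eqref{eq:xi-constr-lam} instead of~\eqref{eq:xi-constr} and drawing the fixpoint rewriting of $\hat\zeta$ from \autoref{lem:bs-fix-lam} in place of \autoref{lem:bs-fix}.

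With the three ingredients in hand the conclusion follows: $\hat\zeta \appr \hat\beta$ because $\hat\zeta$ is the least pre-fixpoint of~\eqref{eq:bs-xi-fix} and $\hat\beta$ is a pre-fixpoint of it, and $\hat\beta \appr \hat\zeta$ because, equivalently, $\beta \appr \zeta$, where $\hat\zeta^\klstar\comp\gamma^\com \appr \zeta$ exhibits $\zeta$ as a pre-fixpoint of the map whose least pre-fixpoint is $\beta$. I expect the only genuine obstacle to lie not in this assembly --- which is purely formal --- but in confirming that the revised construction of $\xi$ in~\eqref{eq:xi-constr-lam}, with its new substitution ingredient $\brks{\id,(\oname{subst}\comp(\id\times\mu))^{\flat}}$, slots into the diagram chases of \autoref{lem:beta-delta} and \autoref{lem:big-and-small} exactly as the simpler maps of~\eqref{eq:xi-constr} did, and that nowhere in the original proofs of those two lemmas was the old type of $\rho^\val$, or an implicit $T=\Id$ assumption, silently relied upon. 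This is a careful but routine check, since the proof of the small-step/big-step equivalence never inspects how $\rho^\val$ is actually built.
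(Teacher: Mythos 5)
Your proposal is correct and is essentially the paper's own argument: the paper likewise reduces the theorem to \autoref{lem:bs-fix-lam} plus the observation that the proofs of \autoref{lem:beta-delta} and \autoref{lem:big-and-small} never inspect how $\rho^\val$ is built, so the two-inequality fixpoint argument of \autoref{the:main} carries over verbatim. The one subtlety your ``routine check'' must catch is that \autoref{lem:sep-fix} uses slightly more than the mere type of $\gamma^\val$ --- namely that it factors through $\eta$ --- but the redefined $\gamma^\val$ of \autoref{sec:bind-gsos} still has this form, so everything goes through.
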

Applying these results to the $\lambda$-calculus example, we obtain the equivalence 
of small-step semantics and the following big-step semantics:
\begin{gather*}%
\frac{}{\lambda n.\,t\Dar \lambda n.\,t}\quad (t\in X(n)) \qquad
\frac{}{x*(t_1,\ldots,t_k)\Dar x*(t_1,\ldots,t_k)}\quad (x\in V(n), t_1,\ldots,t_k\in X(n))\\[2ex]
\frac{t\Dar x*(t_1,\ldots,t_k)\qquad x*(t_1,\ldots,t_k,s)\Dar v}{ts\Dar v}\quad (x\in V(n), t_1,\ldots,t_k\in X(n))\\[2ex]
\frac{t\Dar\lambda n.\,t'\quad t'[s/n]\Dar v}{ts\Dar v} \quad (t,s\in X(n-1),t'\in X(n))
\end{gather*}%

\section{Conclusions and Further Work}

Building on recent advances in higher-order mathematical operational semantics, our present work addresses the well-known question of equivalence between small-step and big-step operational semantics. This equivalence arises in various settings, serving both as a sanity check and an essential tool for program analysis. Rather than developing syntax-driven recipes for rule transformations, our approach is rooted in semantic ideas going back to \citet{TuriPlotkin97}. Specifically, we represent the syntax and behaviour of programs as functors, with operational semantics rules modelled as (di-)natural transformations. This abstraction allowed us to systematically define small-step and big-step semantics and formulate a key condition, which we call \emph{strong separation}, enabling us to prove the desired equivalence. We provided numerous examples demonstrating that our framework accommodates a wide range of features specified through operational semantics.  

Our refinement of the general notion of abstract HO-GSOS is motivated by the goals outlined above. While other formats -- including alternative refinements of abstract HO-GSOS -- may also be viable, we believe our approach achieves a balanced trade-off between expressiveness and practical applicability. This is substantiated by the following points:
\begin{itemize}
\item Our translation produces big-step rules, which are arguably in accord with the common understanding of ``big-step'', as illustrated by case studies (for contrast, see Bernstein~\cite{Bernstein98}, who, motivated by the problem of congruence of program equivalence, abstracted big-step semantics in a way that departs from the established format of big-step rules and judgments).
\item Our framework is grounded in several structural assumptions -- monads, enrichment in $\omega$-cpos, and strong separability -- all of which are explicitly justified. For instance, enrichment is required to interpret recursive definitions, and strong separability is necessary to ensure equivalence properties.
\item While more general rule formats than strongly separated HO-GSOS are conceivable, they often involve intricate constructions and non-elementary conditions (see e.g.~Assumptions III.1–III.4 in~\cite{Ciobaca13}). Our design reflects a deliberate choice to favor conceptual simplicity and usability without compromising the applicability to natural examples.
\end{itemize}
Moving forward, we plan to extend our framework to cover other flavours of semantics, particularly stateful and quantitative, such as probabilistic semantics. With our approach, we hope to gain insights into challenging cases, such as McCarthy's $\amb$ operator~\cite{McCarthy59}, helping one to better understand its sophisticated behaviour. In particular, since the relevant small-step semantics is essentially quantitative (reductions are indexed by numbers to ensure fairness) this raises hopes that a carefully chosen monad could effectively capture this behaviour. Additionally, we plan to complement our current Haskell translations and examples with an implementation in Agda, accommodating not only constructions, but also formal proofs. Another interesting direction for future work is the abstract reverse translation of big-step semantics into small-step semantics, a topic already explored in the literature~\cite{AmbalSchmittEtAl20,GallagherHermenegildoEtAl20}.

\begin{acks}
The authors acknowledge funding by the Deutsche Forschungsgemeinschaft (DFG, German Research Foundation) -- project numbers 527481841. 
\end{acks}

\clearpage

\bibliographystyle{ACM-Reference-Format}
\bibliography{references}

\clearpage
\appendix

\section{Omitted Proofs}
\subsection{Proof of \autoref{pro:theta-om}}
We have the following $\Sigma$-algebra structure on $\mS \times B(\mS,\mS)$:
\begin{equation*}
\Sigma(\mS \times B(\mS,\mS))
	\xto{\brks{\iota \comp \Sigma\fst, B(\id,\nabla^\klstar)\comp\rho'}} \mS \times B(\mS,\mS).              
\end{equation*}
This induces a universal $\Sigma$-algebra morphism $\mS \to \mS \times B(\mS,\mS)$, necessarily of the form $\brks{\id, \gamma}$~\cite{GoncharovMiliusEtAl23}, where $\gamma$
is the operational model for $\rho'$, i.e.\ $\gamma$ is the unique such morphism that the following diagram commutes:
\begin{equation}\label{eq:gamma-unique}
	\begin{tikzcd}[column sep=1.5em]
		\Sigma(\mS) 
		\arrow[d, "{\Sigma\brks{\id,\gamma}}"'] 
		\arrow[rr, "\iota"] 
		&&[3em] 
		\mS 
		\arrow[d, "\gamma"] 
		\\
		{\Sigma(\mS\times {B(\mS,\mS)})} 
		\arrow[r, "\rho'"]
		& 
		{B(\mS,\Sigma^\star (\mS+\mS))} 
		\arrow[r, "{B(\id,\nabla^\klstar)}"] 
		& 
		{B(\mS,\mS)}                
	\end{tikzcd}
\end{equation}
We will show that the last diagram can be transformed equivalently, so that the result coincides with~\eqref{eq:op-mod} -- this will immediately imply the claim.
	Let us first show that 
	$\rho'_{\mS,\mS}=\rho_{\mS,\mS} \comp \Sigma'(\id,\fst)$
	as follows: %
	\begin{flalign*}
		&&\rho' =&\;
		B(\id, \Sigma^\star [\inl, \id]) \comp \rho_{\mS, \mS + \mS} \comp \Sigma'(\id\times B(\id,\inr), \inl \comp \fst)&\\
		&&\;=&\;B(\id, \Sigma^\star [\inl, \id])\comp B(\id, \Sigma^\star\inr)\comp B(\id, \Sigma^\star\nabla)\comp \rho_{\mS, \mS + \mS}\\
		&&&\qquad  \comp \Sigma'(\id\times B(\id,\inr), \inl \comp \fst)&\\
		&&\;=&\;B(\id, \Sigma^\star [\inl,\id]) \comp B(\id, \Sigma^\star\inr) \comp \rho_{\mS, \mS}\\
		&&&\qquad \comp \Sigma'(\id \times B(\id, \nabla), \nabla)\comp \Sigma'(\id\times B(\id,\inr), \inl \comp \fst)&\by{naturality~$\rho$}\\
		&&\;=&\;\rho_{\mS, \mS} \comp \Sigma'(\id, \fst).
	\end{flalign*}
	Now, using the calculation
	\begin{flalign*}
		&&B(\id,\nabla^\klstar) \comp \rho' \comp \Sigma \brks{\id, \gamma} =&\;
		B(\id,\nabla^\klstar) \comp \rho_{\mS, \mS} \comp \Sigma'(\id, \fst) \comp \Sigma \brks{\id, \gamma}&\\
		&&\;=&\;B(\id,\nabla^\klstar) \comp \rho_{\mS, \mS} \comp \Sigma'(\id, \fst) \comp \Sigma' (\brks{\id, \gamma},\brks{\id, \gamma})\\
		&&\;=&\;B(\id,\nabla^\klstar) \comp \rho_{\mS, \mS} \comp \Sigma' (\brks{\id, \gamma},\id)
	\end{flalign*}
	the diagram~\eqref{eq:gamma-unique} can be transforms into~\eqref{eq:ho-var-conv}, as desired.
\qed

\subsection{Proof of \autoref{pro:gamma_c}}
Equation \eqref{eq:gammas} gives an idea, how to define $\gamma^\com$ using 
\eqref{eq:op-mod-c}. Let $\gamma^\com$ be the composition
\begin{align*}
\mSc
	\xto{\Sigma_\com(\brks{\id,\gamma},\id)}
\Sigma_\com(\mS\times {B(\mS,\mS)},\mS)
  	\xto{\rho^\com} 
T\Sigmas(\mS+\mS)
  	\xto{T\nabla^\klstar}
T\mS.
\end{align*}
We then can prove~\eqref{eq:gammas}:
\begin{flalign*}
&& \gamma^\val+\gamma^\com &\;= [\inl\comp D(\id,\mu) \comp \rho^\val, \inr\comp T\nabla^\klstar\comp \rho^\com\comp \Sigma_\com(\brks{\id,\gamma},\id)]&\\*
&&  &\;= [\inl\comp D(\id,\mu)\comp \rho^\val, B(\id,\nabla^\klstar)\comp\inr\comp \rho^\com\comp \Sigma_\com(\brks{\id,\gamma},\id)]\\
&&  &\;= B(\id,\nabla^\klstar)\comp [\inl\comp D(\id,\Sigmas\inl)\comp \rho^\val, \inr\comp \rho^\com\comp \Sigma_\com(\brks{\id,\gamma},\id)]\\
&&  &\;= B(\id,\nabla^\klstar)\comp (D(\id,\Sigmas\inl)\comp  \rho^\val+ \rho^\com)\comp \Sigma'(\brks{\id,\gamma},\id)\\
&&  &\;= B(\id,\nabla^\klstar)\comp \rho\comp \Sigma'(\brks{\id,\gamma},\id)\\
&&  &\;= \gamma\comp\iota.
\end{flalign*}
By applying \eqref{eq:gammas} to~\eqref{eq:op-mod-c}, we obtain a diagram 
that commutes by the definition of~$\gamma^\com$. We are left to show that $\gamma^\com$
is the only morphism, for which~\eqref{eq:op-mod-c} commutes. Let $g$ be a
morphism replacing $\gamma^\com$ for which~\eqref{eq:op-mod-c} commutes. It then 
follows that the diagram
\begin{equation*}
\begin{tikzcd}[column sep=6ex, row sep=4ex]
\Sigma'(\mS,\mS)
	\dar["{\Sigma'(\iota^\mone,\id)}"']
	\ar[rr,"\iota"] 
& 
&[1ex]
\mS
	\dar["\iota^\mone"]	
\\
\Sigma'(\mSv+\mSc,\mS)
	\dar["{\Sigma'(\brks{\iota,\gamma^\val+ g},\id)}"']
	\ar[rr,"{\Sigma'(\iota,\id)}"] 
& 
&[1ex]
\Sigma'(\mS,\mS)
	\dar["\gamma^\val+ g"]	
\\
\Sigma'(\mS\times {B(\mS,\mS)},\mS)
  \rar["{\rho}"] 
&
B(\mS,\Sigma^\star (\mS+\mS))
	\rar["{B(\id,\nabla^\klstar)}"]
&
B(\mS,\mS)
\end{tikzcd}
\end{equation*}
commutes as well. Since $\Sigma'(\brks{\iota,\gamma^\val+ g},\id)\comp\Sigma'(\iota^\mone,\id) = \Sigma'(\brks{\id,(\gamma^\val+ g)\comp\iota^\mone},\id)$,
we conclude that $(\gamma^\val+ g)\comp\iota^\mone$ satisfies the characteristic property~\eqref{eq:op-mod}
of $\gamma$. Therefore, $\gamma = (\gamma^\val+ g)\comp\iota^\mone$. By combining it 
with~\eqref{eq:gammas}, we obtain $\gamma^\val+ g = \gamma^\val+\gamma^\com$,
and therefore $g = T\nabla^\klstar\comp\rho^\com\comp\Sigma_\com(\brks{\iota,\gamma^\val+\gamma^\com},\id)\comp\Sigma_\com(\iota^\mone,\id)$,
using the assumption about~$g$. We obtain that any $g$, for which \eqref{eq:op-mod-c} commutes 
equals to an expression that does not depend on $g$. Hence, there is at most one such $g$.
\qed

\subsection{Proof of \autoref{pro:transfer}}

The strong separation condition~\eqref{eq:str_sep} follows from the assumptions,
as the following diagram depicts: 
{%
\begin{equation*}
\begin{tikzcd}[column sep=-6em, row sep=0ex]
&[1em] 
&[-1em]
\Theta(X\times D(X,Y),X\times TY,X)
	\ar[dll,"\theta"', bend right=2]
	\ar[drr,"\theta",  bend left=2]
	\ar[dd,"{\Theta(\eta,\tau,\id)}"]
&[-1em]
&[2.5em]
\\[6ex]
\makebox[6em][l]{$\mathrlap{\Sigma_\com(X\times D(X,Y) + X\times TY,X)}$}
	\ar[ddd,"\iso"']
	\ar[ddr,"{\Sigma_\com(\eta+\tau,\id)}", pos=.4]
&[0em] 
&[-1em]
&[-1em]
&[1.35em]
\makebox[6em][r]{$\mathllap{\Sigma_\com(X\times D(X,Y) + X\times TY,X)}$}
	\ar[ddl,"{\Sigma_\com(\eta+\tau,\id)}"', pos=.4]
	\ar[dddddd,"{\Sigma_\com(\eta\comp\fst\,\klplus\snd,\inl)}"']
\\[1ex]
&[5em] 
&[-1em]
\Theta(T(X\times D(X,Y)),T(X\times Y),X)
	\ar[ddddd]
	\ar[dl,"\theta"']
	\ar[dr,"\theta"]
&[-1em]
&[2em]
\\[3ex]
& 
\Sigma_\com(T(X\times D(X,Y)) + T(X\times Y),X)
	\ar[ddd,"{\Sigma_\com(\id\klplus\id,\id)}"]
&
&
\Sigma_\com(T(X\times D(X,Y)) + T(X\times Y),X)
	\ar[ddd,"{\Sigma_\com(\id\klplus\id,\id)}"']
\\[1ex]
\Sigma_\com(X\times B(X,Y),X)
	\ar[d,"{\Sigma_\com(\id\times(\id\klplus\id),\id)}"] 
\\[6ex]
\makebox[10em][l]{$\mathrlap{\Sigma_\com(X\times T(D(X,Y) + Y),X)}$}
	\ar[dd,"{\Sigma_\com(\tau,\id)}"'] 
\\[1ex]
&
\Sigma_\com(T(X\times D(X,Y) + X\times Y),X)
	\ar[dd,"\dl"]
	\ar[dl,"\iso"'] %
&
&
\Sigma_\com(T(X\times D(X,Y) + X\times Y),X)
	\ar[dd,"\dl"']
	\ar[dr,"{\!\!\Sigma_\com(T(\fst+\snd),\inl)}",pos=.45]
\\[6ex]
\makebox[10em][l]{$\mathrlap{\Sigma_\com(T(X\times (D(X,Y) + Y)),X)}$}
	\ar[ddr,"\dl"',bend right=30]
& &T\Theta(X\times D(X,Y),X\times Y,X)
	\ar[dl,"T\theta"']
	\ar[dr,"T\theta"]& &
\makebox[6em][r]{$\mathllap{\Sigma_\com(T(X+Y),X+Y)}$}
	\ar[ddl,"{\dl}",bend left=30]
\\[4ex]
& T\Sigma_\com(X\times D(X,Y) + X\times Y,X)
	\ar[d,"\iso"] %
& &
T\Sigma_\com(X\times D(X,Y) + X\times Y,X)
	\dar["{T\Sigma_\com(\fst+\snd,\inl)}"']
\\[5ex]
& T\Sigma_\com(X\times (D(X,Y) + Y),X)
	\ar[dr,"T\rho^\com"']
& &
T\Sigma_\com(X+Y,X+Y)
	\ar[dl,"{T\iota^\com\comp T\Sigma_\com(\eta,\eta)}"]
&
\\[2ex]
& & T\Sigma^\star (X+Y) &
\end{tikzcd}
\end{equation*}
}

\qed

\subsection{Proof of \autoref{lem:xi-mS}}
For the sake of succinctness, we abbreviate $\Sigmas(\mS)$ as $\mu^2\Sigma$.
The claim follows from commutativity of the diagram:
\begin{equation*}
\begin{tikzcd}[column sep=-4em, row sep=0ex]
\Sigma_\com(\mSv,\mS)
	\dar["{\Sigma_\com(\Sigma_\val\eta,\eta)}"']
	\ar[rr,"{\Sigma_\com(\brks{\iota^\val,\, \rho^\val},\id)}"]
&[3.5em] &[1.5em]
\Sigma_\com(\mS\times D(\mS,\mu^2\Sigma),\mS)
	\ar[dddddd, "{\Sigma_\com(\id\times D(\id,\mmul),\id)}"]
\\[4ex]
\Sigma_\com(\Sigma_\val(\mu^2\Sigma),\mu^2\Sigma)
	\ar[dr,"{\Sigma_\com(\brks{\iota^\val,\Sigma_\val\mmul},\id)}", bend right=10, pos=.3]
    \ar[ddd,"{\Sigma_\com(\brks{\iota^\val,\,\rho^\val},\id)}"']
& &[3.25em]
\\[2ex]
&
\Sigma_\com(\mu^2\Sigma\times\Sigma_\val\mS,\mu^2\Sigma)\qquad\quad
	\ar[d, "{\Sigma_\com(\id\times \rho^\val,\id)}"']
&
\\[5ex]
&
\Sigma_\com(\mu^2\Sigma\times D(\mS,\Sigmas(\mu^2\Sigma),\mu^2\Sigma)\quad
	\ar[dd,"{\Sigma_\com(\id\times D(\mmul,\id),\id)}",pos=.3]
	\ar[ruuu, "{\Sigma_\com(\mmul\times\id,\mmul)}", bend right=25, pos=.7]
&
\\[0ex]
\Sigma_\com(\mu^2\Sigma\times D(\mu^2\Sigma,\Sigmas(\mu^2\Sigma)),\mu^2\Sigma)
	\ar[dd,"{\Sigma_\com(\id\times D(\id, \mmul),\id)}"']
	\ar[dr, "{\Sigma_\com(\id\times D(\id,\Sigmas\mmul),\id)}", bend right=20,pos=.3]
& 
&
\\[8ex]
& 
\Sigma_\com(\mu^2\Sigma\times D(\mu^2\Sigma,\mu^2\Sigma),\mu^2\Sigma)
	\ar[dd, "{\Sigma_\com(\id\times D(\id,\mmul),\id)}",pos=.15]
&
\\[4ex]
\Sigma_\com(\mu^2\Sigma\times D(\mu^2\Sigma,\mu^2\Sigma),\mu^2\Sigma)
	\ar[dd,"{\rho^{\com\val}}"']
	\ar[dr, "{\Sigma_\com(\id\times D(\id,\mmul),\id)}", bend right=20,pos=.3]
& &
\Sigma_\com(\mS\times D(\mS,\mu\Sigma),\mS)
	\ar[dd,"{\rho^{\com\val}}"]
\\[8ex]
& 
\Sigma_\com(\mu^2\Sigma\times D(\mu^2\Sigma,\mS),\mu^2\Sigma)\quad
	\ar[dd,"\rho^{\com\val}",pos=.3]
&
\\[1ex]
T\Sigmas(\mu^2\Sigma+\mu^2\Sigma)
	\ar[dd,"{T\nabla^{\sharp}}"']
	\ar[dr,"T\Sigmas(\id+\mmul)", bend right=15]
& 
&
T\Sigmas(\mS+\mS)
	\ar[dd,"T\nabla^\klstar"]
\\[5ex]
&
T\Sigmas(\mu^2\Sigma+\mS) 
	\ar[ur,"T\Sigmas(\mmul+\id)", bend right=15]&
&
\\[2ex]
T(\mu^2\Sigma)
	\ar[rr,"T\mmul"]
& &
T(\mS)
\end{tikzcd}
\end{equation*}
The top cell commutes, because $\iota^\val$ satisfies the equation $\iota^\val\comp\Sigma_\val\mmul = \mmul\comp\iota^\val$,
by naturality of $\Sigma_\val$ and because $\mmul\comp\eta=\id$. The bottom cell 
commutes because $\mmul\comp \nabla^\klstar = \mmul\comp \mmul\comp\Sigmas\nabla = \mmul\comp\Sigmas\mmul\comp\Sigmas\nabla = \mmul\comp \Sigmas\nabla\comp \Sigmas(\mmul+\mmul) = \nabla^\klstar\comp\Sigmas(\mmul+\mmul)$.
The remaining three cells on the left-hand side of the diagram commute by 
dinaturality of $\rho^\val$, by definition of Kleisli lifting for $\Sigmas$ and 
by naturality of~$\rho^{\com\val}$ correspondingly. The remaining large cell on 
the right-hand side commutes by naturality and dinaturality of~$\rho^{\com\val}$.  
\qed

\subsection{Proof of \autoref{lem:beta-delta}}
As a preliminary step, we show that the following diagram commutes:
\begin{equation}\label{eq:beta-delta}
\begin{tikzcd}[column sep=normal, row sep=normal]
\mSc
	\rar["\beta"]
	\dar["{\Sigma_\com(\hat\gamma^\com,\id)}"'] 
&
T(\mSv)
\\               
\Sigma_\com(T\mS,\mS)
	\rar["\dl"] 
&
T(\mSc)
	\uar["\beta^\klstar"']
\end{tikzcd}
\end{equation}
To that end we use the fact that, by \autoref{ass:compl}, $\mSc$ is a coproduct of $\Sigma_\com(\mSv,\mS)$ and $\Theta(\mSv,\mSc,\mS)$.
The equation encoded by~\eqref{eq:beta-delta} thus falls into two equations:
\begin{align*}
\beta^\klstar\comp\dl\comp\Sigma_\com(\hat\gamma^\com,\id)\comp\Sigma_\com(\iota^\val,\id) =&\; \beta\comp\Sigma_\com(\iota^\val,\id),\\
\beta^\klstar\comp\dl\comp\Sigma_\com(\hat\gamma^\com,\id)\comp\Sigma_\com(\iota,\id)\comp\theta =&\; \beta\comp\Sigma_\com(\iota,\id)\comp\theta.
\end{align*}
The first one is easy to obtain by unfolding definitions and using the fact that $\dl$ is a distributive law:
\begin{flalign*}
&& \beta^\klstar\comp\dl\comp\Sigma_\com(\hat\gamma^\com,\id)\comp\Sigma_\com(\iota^\val,\id)  &\;= \beta^\klstar\comp\dl\comp\Sigma_\com(\eta\comp\iota^\val,\id) %
= \beta\comp\Sigma_\com(\iota^\val,\id).&&
\end{flalign*}
Let us proceed with the second equation. Using the strong separation condition~\eqref{eq:str_sep} we obtain that the following 
diagram commutes:
\begin{equation*}
\!\!\begin{tikzcd}[column sep=-7.5em, row sep=1ex]
	{\Theta(\mSv,\mSc,\mS)} &[4.5em]&[0em]&[1.75em] {\Sigma_\com(\mSv+\mSc,\mS)} \\[4ex]
	&& {\Theta(T\mS,T\mS,\mS)} \\
	& {\Theta(\mS\times D(\mS,\mS),\mS\times T\mS,\mS)} && {} \\[1ex]
	{\Sigma_\com(\mSv+\mSc,\mS)} & {} && {\Sigma_\com(T\mS+T\mS,\mS)} \\[1ex]
	& & {\Sigma_\com(\mS\times D(\mS,\mS)+\mS\times T\mS,\mS)} & \\[4ex]
	{} & {} &&  \\
	&  && {\Sigma_\com(T(\mS+\mS),\mS+\mS)\hspace{1.5ex}} \\[2ex]
	{} & {\Sigma_\com(\mS\times D(\mS,\mS)+\mS\times T\mS,\mS)} && {T\Sigma_\com(\mS+\mS,\mS+\mS)} \\[4ex]
	{\Sigma_\com(\mS\times B(\mS,\mS),\mS)} &&& {T\Sigma^\star (\mS+\mS)} 
	\arrow["\theta", from=1-1, to=1-4]
	\arrow["{\Theta(\eta\comp\iota^\val,\gamma^\com,\id)}"{pos=.8}, from=1-1, to=2-3,bend left=4]
	\arrow["{\Theta(\brks{\iota^\val,\gamma^\val},\brks{\iota^\com,\gamma^\com},\id)}"'{pos=.7}, from=1-1, to=3-2,bend left=10]
	\arrow["\theta"', from=1-1, to=4-1]
	\arrow["{\Sigma_\com(\eta\comp\iota^\val+\gamma^\com,\id)}", from=1-4, to=4-4]
	\arrow["\theta",bend left=15, from=2-3, to=4-4]
	\arrow["{\Theta(\eta\comp\fst,\snd,\id)}"{pos=0.75}, from=3-2, to=2-3]
	\arrow["{\theta}"{pos=0.55}, from=3-2,to=5-3,bend right=10]
    \arrow["{\Sigma_\com(\eta\comp\fst\,\klplus \snd,\inl)}"'{pos=0.55}, from=5-3,to=7-4,bend left=15]
	\arrow["\theta"', from=3-2, to=8-2]
	\arrow["{\Sigma_\com(\brks{\iota^\val,\gamma^\val}+\brks{\iota^\com,\gamma^\com},\id)}"'{pos=.85}, from=4-1, to=8-2,bend left=20]
	\arrow["{\Sigma_\com(\brks{\iota,\gamma^\val+\gamma^\com},\id)}"'{pos=.35}, from=4-1, to=9-1]
	\arrow["{\Sigma_\com(\id \klplus \id,\inl)}", from=4-4, to=7-4]
  \arrow["{\dl}", from=7-4, to=8-4]
	\arrow["{\Sigma_\com([\id\times\inl,\id\times\inr],\id)}",from=8-2,to=9-1,bend left=15,pos=.3]
	\arrow["{T(\iota^\com\comp\Sigma_\com(\eta,\eta))}", from=8-4, to=9-4]
	\arrow["{\rho^\com}"{pos=.7}, from=9-1, to=9-4]
\end{tikzcd}
\end{equation*}
Let us rewrite $\beta^\klstar\comp\dl\comp\Sigma_\com(\hat\gamma^\com,\id)\comp\Sigma_\com(\iota,\id)\comp\theta$ equivalently
as follows:
\begin{flalign*}
&& \beta^\klstar\comp\dl\comp&\;\Sigma_\com(\hat\gamma^\com,\id)\comp\Sigma_\com(\iota,\id)\comp\theta\\*
&&  &\;=\beta^\klstar\comp\dl\comp\Sigma_\com([\eta\comp\iota^\val,\gamma^\com],\id)\comp\theta &\\
&&  &\;= \beta^\klstar\comp\dl\comp \Sigma_\com(T\nabla,\nabla)\comp\Sigma_\com(\eta\comp\iota^\val\klplus\gamma^\com,\inl)\comp\theta\\
&&  &\;= \beta^\klstar\comp T\Sigma_\com(\nabla,\nabla)\comp\dl\comp\Sigma_\com(\eta\comp\iota^\val\klplus\gamma^\com,\inl)\comp\theta\\
&&  &\;= [\eta,\beta]^\klstar\comp T\Sigma\nabla \comp T\inr\comp\dl\comp\Sigma_\com(\eta\comp\iota^\val\klplus\gamma^\com,\inl)\comp\theta\\*
&&  &\;= [\eta,\beta]^\klstar\comp T\iota^\mone\comp T\nabla^\klstar\comp T(\iota\comp\Sigma\eta\comp\inr)\comp\dl\comp\Sigma_\com(\eta\comp\iota^\val\klplus\gamma^\com,\inl)\comp\theta\\
&&  &\;= [\eta,\beta]^\klstar\comp T\iota^\mone\comp T\nabla^\klstar\comp T(\iota^\com\comp\Sigma_\com(\eta,\eta))\comp\dl\comp\Sigma_\com(\eta\comp\iota^\val\klplus\gamma^\com,\inl)\comp\theta\\
&&  &\;= [\eta,\beta]^\klstar\comp T\iota^\mone\comp T\nabla^\klstar
\intertext{Now, we can apply the above diagram and obtain the goal as follows:}
&&  &\;= [\eta,\beta]^\klstar\comp T\iota^\mone\comp T\nabla^\klstar\comp\rho^\com\comp{\Sigma_\com(\brks{\iota,\gamma^\val+\gamma^\com},\id)}\comp\theta\\
&&  &\;= [\eta,\beta]^\klstar\comp T\iota^\mone\comp\gamma^\com\comp\Sigma_\com(\iota,\id)\comp\theta\\*
&&  &\;= \beta\comp\Sigma_\com(\iota,\id)\comp\theta.
\end{flalign*}
This completes the proof of commutativity of~\eqref{eq:beta-delta}.
Let us proceed with the main goal.
For every $n\in\nats$ let $\hat\beta_n\c\mS\to T(\mSv)$ be recursively defined as follows: $\hat\beta_0 = [\eta,\bot]\comp\iota^\mone$,
$\hat\beta_{n+1} = \hat\beta_{n}^\klstar\comp\hat\gamma^\com$. Observe that $\hat\beta = \bigjoin_n\hat\beta_n$. Indeed, by definition
(Kleene's fixpoint theorem), $\beta = \bigjoin_n\beta_n$ where $\beta_0=\bot$, $\beta_{n+1} = [\eta,\beta_n]^\klstar\comp T\iota^\mone\comp\gamma^\com$, and then,
by induction, $\hat\beta_1=\hat\beta_0^\klstar\comp\gamma^\com = [\eta,\beta_0]^\klstar\comp T\iota^\mone\comp\gamma^\com = \beta_1$ and
$\hat\beta_{n+1}=\hat\beta_n^\klstar\comp\gamma^\com =(\beta_n)^\klstar\comp\gamma^\com = [\eta,\beta_n]^\klstar\comp T\iota^\mone\comp\gamma^\com = \beta_{n+1}$.
It thus suffices to prove
\begin{align*}
\hat\beta^\klstar\comp T\nabla^\klstar\comp(\rho^{\com\val})^\klstar\comp\dl\comp \Sigma_\com(T\brks{\iota^\val,\gamma^\val}\comp\hat\beta_n,\id)\appr\beta
\end{align*}
for every $n\in\nats$. We proceed by induction on $n$.

\medskip\noindent Induction base: $n = 0$. The goal is obtained as follows:
\begin{flalign*}
&&\hat\beta^\klstar\comp T\nabla^\klstar&\comp(\rho^{\com\val})^\klstar\comp\dl\comp \Sigma_\com(T\brks{\iota^\val,\gamma^\val}\comp[\eta,\bot]\comp\iota^\mone,\id)\\
&&\;=&\;\hat\beta^\klstar\comp T\nabla^\klstar\comp(\rho^{\com\val})^\klstar\comp\dl\comp \Sigma_\com([\eta\comp\brks{\iota^\val,\gamma^\val},\bot]\comp\iota^\mone,\id)\\
&&\;=&\;\hat\beta^\klstar\comp T\nabla^\klstar\comp\rho^{\com\val}\comp\Sigma_\com([\brks{\iota^\val,\gamma^\val},\bot]\comp\iota^\mone,\id)\\
&&\;=&\;\hat\beta^\klstar\comp T\nabla^\klstar\comp\rho^{\com}\comp\Sigma_\com([\brks{\iota^\val,\inl\comp\gamma^\val},\bot]\comp\iota^\mone,\id)\\
&&\;\appr&\;\hat\beta^\klstar\comp T\nabla^\klstar\comp \rho^\com\comp\Sigma_\com([\brks{\iota^\val,\inl\comp\gamma^\val},\brks{\iota^\com,\inr\comp\gamma^\com}]\comp\makebox[0pt][l]{$\iota^\mone,\id)$}\\
&&\;=&\;\hat\beta^\klstar\comp T\nabla^\klstar\comp \rho^\com\comp\Sigma_\com(\brks{\id,[\inl\comp\gamma^\val,\inr\comp\gamma^\com]\comp\iota^\mone},\id)\\
&&\;=&\;[\eta,\beta]^\klstar\comp T\iota^\mone\comp\gamma^\com\\
&&\;=&\;\beta.
\intertext{Induction step: $n > 0$. We have}
&&\hat\beta^\klstar\comp T\nabla^\klstar&\comp(\rho^{\com\val})^\klstar\comp\dl\comp \Sigma_\com(T\brks{\iota^\val,\gamma^\val}\comp\hat\beta_n,\id)\\
&&\;=&\;\hat\beta^\klstar\comp T\nabla^\klstar\comp(\rho^{\com\val})^\klstar\comp\dl\comp\Sigma_\com(T\brks{\iota^\val,\gamma^\val}\comp \hat\beta_{n-1}^\klstar\comp\hat\gamma^\com,\id)&\\
&&\;=&\;\hat\beta^\klstar\comp T\nabla^\klstar\comp(\rho^{\com\val})^\klstar\comp\dl^\klstar\comp\dl\comp\Sigma_\com(TT\brks{\iota^\val,\gamma^\val}\comp T\hat\beta_{n-1}\comp\hat\gamma^\com,\id)&\\
&&\;=&\;\hat\beta^\klstar\comp T\nabla^\klstar\comp(\rho^{\com\val})^\klstar\comp\dl^\klstar\comp T\Sigma_\com(T\brks{\iota^\val,\gamma^\val}\comp\hat\beta_{n-1},\id)\comp\dl\comp\Sigma_\com(\hat\gamma^\com,\id).&
\end{flalign*}
By induction hypothesis, the latter is smaller or equal $\beta^\klstar\comp\dl\comp\Sigma_\com(\hat\gamma^\com,\id)$,
which is~$\beta$ by \eqref{eq:beta-delta}.
\qed

\subsection{Proof of \autoref{lem:big-and-small}}
The diagram~\eqref{eq:op-mod-c} identifies
$\gamma^\com$ as the unique fixpoint of $f\mto T\nabla^\klstar\comp\rho^\com\comp\Sigma_\com(\brks{\id,(\gamma^\val+ f)\comp\iota^\mone},\id)$,
hence, also as the least one. Thus, $\gamma^\com = \bigjoin_n \gamma^\com_n$ where $\gamma^\com_0 = \bot$ and $\gamma^\com_{n+1} = T\nabla^\klstar\comp\rho^\com\comp\Sigma_\com(\brks{\id,(\gamma^\val+\gamma_n^\com)\comp\iota^\mone},\id)$, and it suffices to show 
\begin{align}\label{eq:main-ind}
	\hat\zeta^\klstar\comp\gamma^\com_n\appr\zeta
\end{align}
for all $n$. The induction base is obvious. For the induction step, assume~\eqref{eq:main-ind}
and show $\hat\zeta^\klstar\comp\gamma^\com_{n+1}\appr\zeta$. By \autoref{ass:compl}, $\mSc$ is a coproduct of $\Sigma_\com(\mSv,\mS)$ and $\Theta(\mSv,\mSc,\mS)$.
We thus reduce to two inequations:
\begin{align*}
\hat\zeta^\klstar\comp\gamma^\com_{n+1}\comp\Sigma_\com(\iota^\val,\id) 		\appr&\; \zeta\comp\Sigma_\com(\iota^\val,\id),\\*
\hat\zeta^\klstar\comp\gamma^\com_{n+1}\comp\Sigma_\com(\iota,\id)\comp\theta   \appr&\; \zeta\comp\Sigma_\com(\iota,\id)\comp\theta.
\end{align*}
The first inequation is easy to obtain by unfolding definitions:
\begin{flalign*}
&& \hat\zeta^\klstar\comp\gamma^\com_{n+1}&\,\comp\Sigma_\com(\iota^\val,\id)\\*
&&  &\;= \hat\zeta^\klstar\comp T\nabla^\klstar\comp\rho^\com\comp{\Sigma_\com(\brks{\iota,\gamma^\val+\gamma^\com_n},\id)}\comp\Sigma_\com(\inl,\id)&\\
&&  &\;= \hat\zeta^\klstar\comp T\nabla^\klstar\comp\rho^\com\comp\Sigma_\com(\brks{\iota^\val,\inl\comp\gamma^\val},\id)\\
&&  &\;= \hat\zeta^\klstar\comp T\nabla^\klstar\comp\rho^{\com\val}\comp\Sigma_\com(\brks{\iota^\val,\gamma^\val},\id)  \\
&&  &\;= \hat\zeta^\klstar\comp T\nabla^\klstar\comp(\rho^{\com\val})^\klstar\comp\dl\comp \Sigma_\com(T\brks{\iota^\val,\gamma^\val}\comp\eta,\id)  \\
&&  &\;= \hat\zeta^\klstar\comp T\nabla^\klstar\comp(\rho^{\com\val})^\klstar\comp\dl\comp \Sigma_\com(T\brks{\iota^\val,\gamma^\val}\comp \hat\zeta,\id)\comp\Sigma_\com(\iota^\val,\id)  \\
&&  &\;= \zeta\comp\Sigma_\com(\iota^\val,\id). &\by{\autoref{lem:bs-fix}}
\end{flalign*}
For the second inequation, we need the following auxiliary property:
\begin{align}\label{eq:main-help}
\Sigma_\com(\eta\comp\fst\klplus\snd,\inl)\comp\theta\comp\Theta(\brks{\iota^\val,\gamma^\val},\brks{\iota^\com,\gamma^\com_n},\id)
= \Sigma_\com(\eta\comp\iota^\val\klplus\gamma^\com_n,\inl)\comp\theta,
\end{align}
which entails the goal as follows:
\begin{flalign*}
&& \hat\zeta^\klstar\comp\gamma&\,^\com_{n+1}\comp\Sigma_\com(\iota,\id)\comp\theta\\
&&  &\;= \hat\zeta^\klstar\comp T\nabla^\klstar\comp\rho^\com\comp{\Sigma_\com(\brks{\iota,\gamma^\val+\gamma^\com_n},\id)}\comp\theta&\\
&&  &\;= \hat\zeta^\klstar\comp T\nabla^\klstar\comp\rho^\com\comp{\Sigma_\com([\id\times\inl,\id\times \inr],\id)}\comp\theta\comp\Theta(\brks{\iota^\val,\gamma^\val},\brks{\iota^\com,\gamma^\com_n},\id)\\
&&  &\;= \hat\zeta^\klstar\comp T\nabla^\klstar\comp{T(\iota^\com\comp\Sigma_\com(\eta,\eta))}\comp\dl\comp\\
	&&&\qquad\qquad\Sigma_\com(\eta\comp\fst\klplus\snd,\inl)\comp\theta\comp\Theta(\brks{\iota^\val,\gamma^\val},\brks{\iota^\com,\gamma^\com_n},\id)&\by{\eqref{eq:str_sep}}\\
&&  &\;= \hat\zeta^\klstar\comp T\nabla^\klstar\comp T\iota^\com\comp T\Sigma_\com(\eta,\eta)\comp\dl\comp\Sigma_\com(\eta\comp\iota^\val\klplus\gamma^\com_n,\inl)\comp\theta&\by{\eqref{eq:main-help}}\\
&&  &\;= \zeta^\klstar\comp T\Sigma_\com(\nabla^\klstar,\nabla^\klstar)\comp T\Sigma_\com(\eta,\eta)\comp\dl\comp\Sigma_\com(\eta\comp\iota^\val\klplus\gamma^\com_n,\inl)\comp\theta\\
&&  &\;= \zeta^\klstar\comp T\Sigma_\com(\nabla,\nabla)\comp\dl\comp\Sigma_\com(\eta\comp\iota^\val\klplus\gamma^\com_n,\inl)\comp\theta\\
&&  &\;= \zeta^\klstar\comp\dl\comp \Sigma_\com(T\nabla,\nabla)\comp\Sigma_\com(\eta\comp\iota^\val\klplus\gamma^\com_n,\inl)\comp\theta\\
&&  &\;= \zeta^\klstar\comp\dl\comp\Sigma_\com([\eta\comp\iota^\val,\gamma^\com_n],\id)\comp\theta\\
&&  &\;= \hat\zeta^\klstar\comp T\nabla^\klstar\comp(\rho^{\com\val})^\klstar\comp\dl^\klstar\comp T\Sigma_\com(T\brks{\iota^\val,\gamma^\val}\comp \hat\zeta,\id)\comp\dl\comp\Sigma_\com([\eta\comp\iota^\val,\gamma^\com_n],\id)\comp\theta\\
&&  &\;= \hat\zeta^\klstar\comp T\nabla^\klstar\comp(\rho^{\com\val})^\klstar\comp\dl^\klstar\comp\dl\comp \Sigma_\com(TT\brks{\iota^\val,\gamma^\val}\comp T\hat\zeta,\id)\comp\Sigma_\com([\eta\comp\iota^\val,\gamma^\com_n],\id)\comp\theta\\
&&  &\;= \hat\zeta^\klstar\comp T\nabla^\klstar\comp(\rho^{\com\val})^\klstar\comp\dl\comp\Sigma_\com(T\brks{\iota^\val,\gamma^\val}\comp \hat\zeta^\klstar\comp[\eta\comp\iota^\val,\gamma^\com_n],\id)\comp\theta\\
&&  &\;\appr \hat\zeta^\klstar\comp T\nabla^\klstar\comp(\rho^{\com\val})^\klstar\comp\dl\comp\Sigma_\com(T\brks{\iota^\val,\gamma^\val}\comp \hat\zeta^\klstar\comp[\eta\comp\iota^\val,\eta\comp\iota^\com],\id)\comp\theta&\by{\eqref{eq:main-ind}}\\
&&  &\;= \zeta^\klstar\comp\dl\comp\Sigma_\com([\eta\comp\iota^\val,\eta\comp\iota^\com],\id)\comp\theta\\
&&  &\;= \zeta\comp\Sigma_\com(\iota,\id)\comp\theta.
\intertext{The equation~\eqref{eq:main-help} is shown as follows:}
&& \!\!\!\Sigma_\com(\eta\comp\fst&\;\klplus\snd,\inl)\comp\theta\comp\Theta(\brks{\iota^\val,\gamma^\val},\brks{\iota^\com,\gamma^\com_n},\id)\\*
&&  &\;= \Sigma_\com(\id\klplus\id,\inl)\comp\Sigma_\com(\eta\comp\fst+\snd,\id)\comp\theta\comp\Theta(\brks{\iota^\val,\gamma^\val},\brks{\iota^\com,\gamma^\com_n},\id)\\
&&  &\;= \Sigma_\com(\id\klplus\id,\inl)\comp\theta\comp\Theta(\eta\comp\fst,\snd,\id)\comp\Theta(\brks{\iota^\val,\gamma^\val},\brks{\iota^\com,\gamma^\com_n},\id)\\
&&  &\;= \Sigma_\com(\id\klplus\id,\inl)\comp\theta\comp\Theta(\eta,\id,\id)\comp\Theta(\iota^\val,\gamma^\com_n,\id)\\
&&  &\;= \Sigma_\com(\id\klplus\id,\inl)\comp\Sigma_\com(\eta+\id,\id)\comp\theta\comp\Theta(\iota^\val,\gamma^\com_n,\id)\\
&&  &\;= \Sigma_\com(\eta\klplus\id,\inl)\comp\theta\comp\Theta(\iota^\val,\gamma^\com_n,\id)\\
&&  &\;= \Sigma_\com(\eta\comp\iota^\val\klplus\gamma^\com_n,\inl)\comp\theta&\tag*{$\qed$}
\end{flalign*}

\subsection{Proof of \autoref{lem:bs-fix}}
By rewriting the relevant subexpression of~\eqref{eq:bs-xi}:
\begin{flalign*}
&& [\eta,&\,f^\klstar\comp T\mu\comp\xi^\klstar\comp\dl\comp\Sigma_\com(f,\id)]\comp\iota^\mone&\\
&&  &\;= [\eta, f^\klstar\comp T\nabla^\klstar\comp(\rho^{\com\val})^\klstar\comp T\Sigma_\com(\brks{\iota^\val,\gamma^\val},\id)\comp\dl\comp\Sigma_\com(f,\id)]\comp\iota^\mone&\by{\autoref{lem:xi-mS}}\\
&&  &\;= [\eta, f^\klstar\comp T\nabla^\klstar\comp(\rho^{\com\val})^\klstar\comp\dl\comp \Sigma_\com(T\brks{\iota^\val,\gamma^\val},\id)\comp\Sigma_\com(f,\id)]\comp\iota^\mone\\
&&  &\;= [\eta, f^\klstar\comp T\nabla^\klstar\comp(\rho^{\com\val})^\klstar\comp\dl\comp \Sigma_\com(T\brks{\iota^\val,\gamma^\val}\comp f,\id)]\comp\iota^\mone. &\tag*{$\qed$}
\end{flalign*}

\subsection{Proof of \autoref{lem:bs-fix-lam}}
Observe first, that the following diagram commutes:
\begin{equation*}
\begin{tikzcd}[column sep=-7em, row sep=0ex]
\Sigma_\com(\Sigma_\val (\mS\times(\mS\monto\mS)),\mS)
	\dar["{\Sigma_\com(\Sigma_\val(\eta\times\id),\eta)}"']
    \ar[rr,"{\Sigma_\com(\brks{\iota^\val\comp\Sigma_\val\fst,\,\rho^\val},\id)}"]
&[2.5em] &[.5em]
\Sigma_\com(\mS\times D(\mS,\Sigmas(\mS+\mS)),\mS)
	\ar[ddddddd, "{\Sigma_\com(\id\times D(\id,\nabla^\klstar),\id)}"]
\\[4ex]
\Sigma_\com(\Sigma_\val (\mu^2\Sigma\times(\mS\monto\mS)),\mu^2\Sigma)
	\dar["{\Sigma_\com(\Sigma_\val (\id\times(\mmul\monto\id)),\id)}"']
	\ar[ddr,"{\Sigma_\com(\brks{\iota^\val\comp\Sigma_\val\fst,\Sigma_\val(\mmul\times\id)},\id)}", bend left=20, pos=.3]
& &
\\[4ex]
\Sigma_\com(\Sigma_\val(\mu^2\Sigma\times(\mu^2\Sigma\monto\mS)),\mu^2\Sigma)
    \ar[ddd,"{\Sigma_\com(\brks{\iota^\val\comp\Sigma_\val\fst,\, \rho^\val},\id)}"']
& &[3.25em]
\\[2ex]
&
\Sigma_\com(\mu^2\Sigma\times\Sigma_\val(\mS\times (\mS\monto\mS)),\mu^2\Sigma)\qquad\quad
	\ar[d, "{\Sigma_\com(\id\times \rho^\val,\id)}"']
&
\\[5ex]
&
\Sigma_\com(\mu^2\Sigma\times D(\mS,\Sigmas(\mS+\mS)),\mu^2\Sigma)\quad
	\ar[dd,"{\Sigma_\com(\id\times D(\mmul,\id),\id)}",pos=.3]
	\ar[ruuuu, "{\Sigma_\com(\mmul\times\id,\mmul)}", bend right=35, pos=.6, end anchor=-120]
&
\\[2ex]
\Sigma_\com(\mu^2\Sigma\times D(\mu^2\Sigma,\Sigmas(\mu^2\Sigma+\mS)),\mu^2\Sigma)
	\ar[dd,"\rho^{\com\val}"']
	\ar[dddr, "{\Sigma_\com(\id\times D(\id,[\mu,\id]^\klstar),\id)}", bend right=25,pos=.475]
	\ar[dr, "{\Sigma_\com(\id\times D(\id,\mu+\id),\id)}", bend right=15,pos=.55]
& 
&
\\[10ex]
& 
\Sigma_\com(\mu^2\Sigma\times D(\mu^2\Sigma,\Sigmas(\mS+\mS)),\mu^2\Sigma)
	\ar[dd, "{\Sigma_\com(\id\times D(\id,\nabla^\klstar),\id)}",pos=.75]
&
\\[8ex]
T\Sigmas(\mu^2\Sigma+\Sigmas(\mu^2\Sigma+\mS))\qquad
	\ar[dddd,"{T[\Sigmas\inl,[\Sigmas\inl,\eta\comp\inr]^\klstar]^\klstar}"']
	\ar[dddr, "{T\Sigmas(\id+[\mu,\id]^\klstar)}", bend right=30,pos=.2]
& &
\Sigma_\com(\mS\times D(\mS,\mS),\mS)
	\ar[dd,"{\rho^{\com\val}}"]
\\[8ex]
& 
\Sigma_\com(\mu^2\Sigma\times D(\mu^2\Sigma,\mS),\mu^2\Sigma)\quad
	\ar[dd,"\rho^{\com\val}",pos=.3]
&
\\[1ex]
{}
& 
&
T\Sigmas(\mS+\mS)
	\ar[dd,"T\nabla^\klstar"]
\\[5ex]
&
T\Sigmas(\mu^2\Sigma+\mS) 
	\ar[ur,"T\Sigmas(\mmul+\id)", bend right=15]&
&
\\[2ex]
T\Sigmas(\mS+\mS)
	\ar[rr,"T\nabla^\klstar"]
& &
T(\mS)
\end{tikzcd}
\end{equation*}
We then prove the claim by modifying the argument from \autoref{lem:bs-fix}.
It suffices to show that
\begin{align*}
T\nabla^\klstar\comp\xi\comp\Sigma_\com(\Sigma_\val\brks{\id,(\oname{subst}\comp(\id\times\mu))^{\flat}},\id) 
= T\nabla^\klstar\comp\rho^{\com\val}\comp\Sigma_\com(\brks{\iota^\val,\gamma^\val},\id).
\end{align*}
Using the definition 
\begin{align*}
\xi=T[\Sigmas\inl,[\Sigmas\inl,\eta\comp\inr]^\klstar]^\klstar\comp\rho^{\com\val}\comp\Sigma_\com(\brks{\iota^\val\comp\Sigma_\val\fst,\,   \rho^\val},\id)\comp\Sigma_\com(\Sigma_\val(\eta\times\id),\eta)
\end{align*}
and the above diagram,
\begin{align*}
T\nabla^\klstar\comp\xi\comp&\,\Sigma_\com(\Sigma_\val\brks{\id,(\oname{subst}\comp(\id\times\mu))^{\flat}},\id) \\*
=\;& T\nabla^\klstar\comp T[\Sigmas\inl,[\Sigmas\inl,\eta\comp\inr]^\klstar]^\klstar\comp\rho^{\com\val}\comp\Sigma_\com(\brks{\iota^\val\comp\Sigma_\val\fst,\, \rho^\val},\id)\comp\\
&\qquad\Sigma_\com(\Sigma_\val(\eta\times\id),\eta) \comp\Sigma_\com(\Sigma_\val\brks{\id,(\oname{subst}\comp(\id\times\mu))^{\flat}},\id)\\
=\;& T\nabla^\klstar\comp T[\Sigmas\inl,[\Sigmas\inl,\eta\comp\inr]^\klstar]^\klstar\comp\rho^{\com\val}\comp\Sigma_\com(\brks{\iota^\val\comp\Sigma_\val\fst,\, \rho^\val},\id)\comp\\
&\qquad\Sigma_\com(\Sigma_\val(\eta\times\id),\eta) \comp\Sigma_\com(\Sigma_\val (\id\times(\mmul\monto\id)),\id)\comp\Sigma_\com(\Sigma_\val\brks{\id,\oname{subst}^{\flat}},\id) \\
=\;& T\nabla^\klstar\comp T[\Sigmas\inl,[\Sigmas\inl,\eta\comp\inr]^\klstar]^\klstar\comp\rho^{\com\val}\comp\Sigma_\com(\brks{\iota^\val\comp\Sigma_\val\fst,\, \rho^\val},\id)\comp\\
&\qquad\Sigma_\com(\Sigma_\val (\id\times(\mmul\monto\id)),\id)\comp\Sigma_\com(\Sigma_\val(\eta\times\id),\eta) \comp\Sigma_\com(\Sigma_\val\brks{\id,\oname{subst}^{\flat}},\id) \\
=\;& T\nabla^\klstar\comp\rho^{\com\val}\comp \Sigma_\com(\id\times D(\id,\nabla^\klstar),\id)\comp\\
&\qquad\Sigma_\com(\brks{\iota^\val\comp\Sigma_\val\fst,\, \rho^\val},\id) \comp\Sigma_\com(\Sigma_\val\brks{\id,\oname{subst}^{\flat}},\id)\\
=\;& T\nabla^\klstar\comp\rho^{\com\val}\comp\Sigma_\com(\brks{\iota^\val\comp\Sigma_\val\fst,\, D(\id,\nabla^\klstar)\comp\rho^\val},\id) \comp\Sigma_\com(\Sigma_\val\brks{\id,\oname{subst}^{\flat}},\id)\\
=\;& T\nabla^\klstar\comp\rho^{\com\val}\comp\Sigma_\com(\brks{\iota^\val,\, D(\id,\nabla^\klstar)\comp\rho^\val\comp \Sigma_\val\brks{\id,\oname{subst}^{\flat}}},\id)\\
=\;& T\nabla^\klstar\comp\rho^{\com\val}\comp\Sigma_\com(\brks{\iota^\val,\gamma^\val},\id),     
\end{align*}
as desired.\qed

\end{document}